\documentclass[aps, pra,  reprint, superscriptaddress, amsmath, amssymb]{revtex4-2}

\usepackage{appendix}
\usepackage{bbm}
\usepackage{braket}
\usepackage{amsfonts}
\usepackage{amsmath}
\usepackage{amsthm}

\usepackage{txfonts}
\usepackage{bm}
\usepackage[dvipdfmx]{graphicx}
\usepackage{color}
\usepackage{mathrsfs}
\usepackage{subfigure}
\usepackage{wrapfig}
\usepackage{here}
\usepackage{color}
\usepackage{hyperref}

\newcommand{\be}{\begin{equation}}
\newcommand{\ee}{\end{equation}}
\newcommand{\ba}{\begin{align}}
\newcommand{\eaend}{\end{align}}
\newcommand{\bs}{\begin{split}}
\newcommand{\es}{\end{split}}

\theoremstyle{plain}
\newtheorem{thm}{Theorem}
\newtheorem*{thm*}{Theorem}
\newtheorem{lemma}{Lemma}
\theoremstyle{definition}

\begin{document}

\title{Security of passive entanglement-based key distribution protocols}

\author{Shun Kawakami}
\affiliation{Network Innovation Laboratories, NTT, Inc., \\ 1-1 Hikari-no-oka, Yokosuka, Kanagawa 239-0847, Japan}
\affiliation{NTT Research Center for Theoretical Quantum Information, NTT, Inc., \\ 3-1 Morinosato Wakamiya, Atsugi, Kanagawa 243-0198, Japan}

\author{Koichi Takasugi}
\affiliation{Network Innovation Laboratories, NTT, Inc., \\ 1-1 Hikari-no-oka, Yokosuka, Kanagawa 239-0847, Japan}

\author{Koji Azuma}
\affiliation{Basic Research Laboratories, NTT, Inc., \\ 3-1 Morinosato Wakamiya, Atsugi, Kanagawa 243-0198, Japan}
\affiliation{NTT Research Center for Theoretical Quantum Information, NTT, Inc., \\ 3-1 Morinosato Wakamiya, Atsugi, Kanagawa 243-0198, Japan}



\begin{abstract}
Entanglement-based key distribution protocols, such as the Bennett-Brassard-Mermin 1992 (BBM92) protocol and quantum conference key agreement (QCKA), are promising applications of quantum networks. In practical implementations, passive measurement setups are widely adopted because of their simplicity. However, the security analysis of passive protocols with biased basis choice is highly nontrivial, since standard proof techniques for threshold detectors are generally not applicable in this setting.
In this work, we establish the security of passive entanglement-based key distribution protocols in the asymptotic regime. Specifically, we prove the security of passive BBM92 with biased basis choice and extend the proof to passive QCKA with an arbitrary number of parties. In addition, we numerically show that the key generation rate of passive BBM92 is almost identical to that of the corresponding active protocol. Our results provide a theoretical foundation for practical passive implementations of entanglement-based key distribution protocols.

 \end{abstract}

\maketitle

\section{Introduction}

Quantum networks~\cite{2023Azuma} enable distant users to share entanglement, providing a fundamental resource for a wide range of quantum information processing tasks. Among their most promising applications are cryptographic protocols. In the two-party setting, a representative one is quantum key distribution (QKD), whereas its multipartite counterpart is known as quantum conference key agreement (QCKA), where multiple parties establish a common secret key.
One such entanglement-based QKD protocol is Bennett-Brassard-Mermin 1992 (BBM92) protocol~\cite{BBM92}, in which two parties measure distributed Bell states in two complementary bases. Its security has been rigorously established~\cite{1999Lo,  2000Shor, 2008Tsurumaru, 2008Koashi, 2021Lim, 2025Mannalath} and numerous experimental demonstrations~\cite{2000Jennewein, 2000TittelBBM,2007Ursin, 2008Honjo, 2009Erven, 2022Fitzke, 2025Zhuang, 2025Tagliavacche} have confirmed its practical feasibility. Similarly, QCKA protocols based on Greenberger-Horne-Zeilinger (GHZ) states and two measurement bases have been developed, and both their theoretical security~\cite{2018Grasselli} and experimental feasibility~\cite{2021Proietti, 2023Pickston, 2026Zou} have been demonstrated.

In many implementations of these protocols, passive measurement setups with threshold detectors have been adopted because of their simplicity and practicality. In such setups, a beam splitter probabilistically directs incoming signals to measurement devices corresponding to different bases, eliminating the need for active basis switching. Compared with active setups, passive setups reduce hardware complexity and avoid losses, speed limitations, and random-number consumption associated with active basis selection.

The security analysis of passive setups, however, is considerably more involved than that of active setups. This issue is particularly important when the basis choice is biased, either because realistic beam splitters are not perfectly balanced or because biased basis selection can improve the key generation rate by reducing the basis mismatch probability~\cite{2004Lo}. For such scenarios, standard security-proof techniques for threshold detectors are generally not applicable. In particular, the squashing model~\cite{2008Lutkenhaus,2008Tsurumaru}, which reduces the security analysis of realistic optical measurements to that of qubit measurements, is not known for passive measurements with biased basis choice~\cite{2024Kamin}. In addition, security proofs based on complementarity~\cite{2006Koashi,2009Koashi} and the entropic uncertainty principle~\cite{2011Tomamichel} cannot directly be applied when the basis choice is biased~\cite{2025Tupkary}.

Although such a challenge in proving security existed even for the prepare-and-measure Bennett-Brassard 1984 (BB84) protocol~\cite{1984Bennett}, it has recently been overcome, and security has been established even with passive basis choice~\cite{2025Kawakami, 2025Mizutani,2025Wang}. The proof relies on the source-replacement scheme~\cite{2025Tupkary}, where the sender is assumed to possess a virtual qubit whose measurement reproduces the actual state preparation process. This virtual qubit plays a central role in the security analysis.
On the other hand, extending this approach to entanglement-based protocols is highly nontrivial. Unlike BB84, all legitimate parties in BBM92 and QCKA are receivers, and therefore the source-replacement scheme cannot be applied directly. Existing security proofs for entanglement-based protocols typically rely on qubit assumptions that can be justified through a squashing model in active measurement setups. However, such arguments are not available for passive setups with biased basis choice. Consequently, the security of passive entanglement-based protocols has remained an open problem despite their practical importance.

In this paper, we establish the security of passive entanglement-based protocols with biased basis choice in the asymptotic regime. Specifically, we prove the security of the passive BBM92 protocol and its multipartite extension, namely GHZ-state-based QCKA with an arbitrary number of parties. Our key idea is to introduce a virtual-qubit description for entanglement-based protocols that plays a role analogous to that of the source-replacement qubit in BB84. This allows the security proof of passive BBM92 to be reduced to that of BB84 and further enables a security proof for passive QCKA. We additionally present numerical simulations showing that the key rate of passive BBM92 is almost indistinguishable from that of the active protocol, indicating that passive implementations can retain the performance of active schemes while offering substantial practical simplifications.

This paper is organized as follows. In Sec.~\ref{Sec:BBM92}, we introduce the passive BBM92 protocol and present its security proof. We also provide numerical simulations to evaluate its performance. In Sec.~\ref{Sec:QCKA}, we extend the analysis to the passive QCKA protocol and establish its security for an arbitrary number of parties. Finally, Sec.~\ref{Sec:Conclusion} concludes the paper.

\section{Security of passive BBM92 protocol} 
\label{Sec:BBM92}
In this section, we first describe the BBM92 protocol with passive measurement setups and define the observed quantities used in the security analysis. We then establish its security by introducing a virtual-qubit description, which enables the application of a complementarity-based security proof~\cite{2006Koashi,2009Koashi}. Finally, we numerically compare the performance of the passive and active BBM92 protocols.

\begin{figure}[t]
 \centering
 \includegraphics[keepaspectratio, scale=0.5]
      {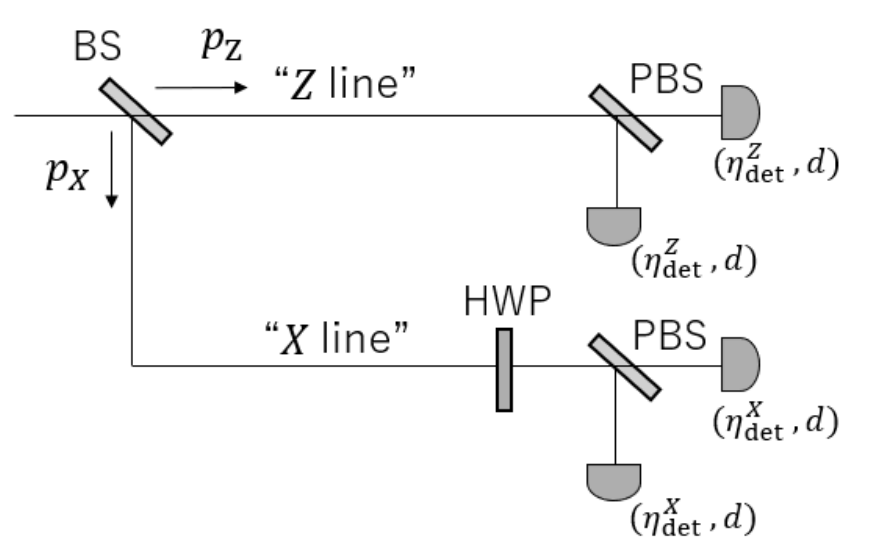}
 \caption{Actual setups for passive basis choice with a beam splitter (BS), polarization beam splitters (PBSs), a half wave plate (HWP) and threshold detectors. 
  Incoming light is split into $Z$ line or $X$ line with the ratio of $p_Z$ to $p_X$. All detectors have the identical dark count probability $d$. The quantum efficiencies of detectors in $Z$ line and $X$ line are $\eta_{\rm det}^Z$ and $\eta_{\rm det}^X$, respectively. The overall transmittances, including the quantum efficiency, in $Z$ line and $X$ line are $\eta_Z$ and $\eta_X$, respectively ($\eta_Z \geq \eta_X$).}
 \label{setup1}
\end{figure}

\subsection{Passive BBM92 Protocol}
\label{Sec: BBMprotocol}
Here, we introduce a polarization-type BBM92 protocol, in which two legitimate parties, Alice and Bob, select their basis passively, while an untrusted party, Fred, is presumed to generate and distribute entangled states. 
In this protocol, two complementary measurement bases ($Z$ and $X$) are used, where the secret key is extracted from the $Z$ basis and the signal disturbance is monitored by using the $X$ basis. The protocol is described as follows. \\

\noindent
(1) {\it State Preparation}: 
Fred is presumed to prepare (not necessarily perfect) a Bell pair and  sends its subsystems to Alice and Bob, respectively. 
\\
(2) {\it Measurement}: 
With the linear optical setup in Fig.~\ref{setup1}, Alice passively splits each incoming signal into the `$Z$ line' and `$X$ line' to perform $Z$-basis and $X$-basis measurements, respectively, with a splitting ratio of $p_Z$:$p_X$ using a beam splitter (BS). In the $X$ line, a half wave plate (HWP), whose transmittance might be smaller than unity, is inserted to rotate the polarization by 45 degrees. In both lines, a polarization beam splitter (PBS) separates the incoming light depending on whether its polarization is horizontal or vertical. 
The four threshold detectors have a dark-count probability $0 \leq d \leq1$ per pulse. 
In the $Z$ ($X$) line, the two detectors have the same quantum efficiency $\eta_{\rm det}^Z$ ($\eta_{\rm det}^X$), and the overall transmittance of the line, including the quantum efficiency, is $\eta_Z$ ($\eta_X$),  satisfying $0 \leq \eta_X \leq \eta_Z \leq 1$ (as $X$ line may have additional optical components such as the HWP). We define a parameter  
\be
r\coloneqq \frac{\eta_X}{\eta_Z}~(\leq 1), \label{rrr}
\ee
representing the asymmetry of transmittance between the two lines. 
Alice determines $W_A \in \{Z, X, \bot, \emptyset \}$ according to the following rule: If none of four detectors clicks, she registers $W_A = \emptyset$. If detections occur only in the $Z$ $(X)$ line, she assigns her measurement basis as $W_A=Z~(X)$. 
If  a `cross click' occurs, corresponding to simultaneous detections across two lines, she sets $W_A = \bot$. 
For events with $W_A=Z~(X)$, if `double click' occurs, meaning that both detectors in the $Z$ $(X)$ line click, she assigns the bit value $b_A \in \{0,1\}$ uniformly at random; otherwise, she determines $b_A$ according to which detector clicks. For events with $W_A = \bot$, no bit is assigned. 
Bob performs the same procedure with the same setup as Alice and obtains  $W_B\in \{Z,X, \bot\, \emptyset \}$ and $b_B \in \{0,1\}$. 
\\
(3) {\it Repetition}: Alice, Bob and Fred repeat steps (1) and (2) $m_{\rm rep}$ times. 
\\
(4) {\it Public communication}: For all $m_{\rm rep}$ rounds, Alice and Bob publicly announce $W_A$ and $W_B$. 
They also disclose $b_A$ for rounds where $W_A=X$ and $b_B$ for rounds where $W_B=X$. 
For rounds where $W_A = W_B = Z$, Alice and Bob randomly select $s=1$ (sampling) with probability $1-\gamma$ and $s=0$ (non-sampling) with probability $\gamma$. 
For rounds with $s=1$, they disclose the corresponding bits $b_A$ and $b_B$. 
\\
(5) {\it Parameter estimation}:
Hereafter, we use the notation $m(\Omega = \omega)$ to represent the number of rounds that satisfy a condition $\Omega = \omega$. 
Alice and Bob obtain a bit string of length 
\be
m_Z \coloneqq m( W_A = W_B = Z, s=0),
\ee
which corresponds to the sifted key. 
For the $Z$-basis sampling rounds, they obtain the numbers of sampled rounds and error rounds,
\be
\begin{split}
m_Z^{\rm sam} &\coloneqq m( W_A = W_B = Z, s=1),\\
m_Z^{\rm sam, err} &\coloneqq m( W_A = W_B = Z, s=1, b_A \neq b_B).
\end{split}
\ee
For the $X$-basis rounds, they obtain the numbers of total rounds and error rounds,
\be
m_X \coloneqq m( W_A = W_B = X), ~~m_{X}^{\rm err}  \coloneqq m(W_A = W_B = X, b_A \neq b_B). 
\ee
Alice and Bob also obtain the numbers of rounds 
\be
m_{\bot_A, Z_B}\coloneqq m(W_A = \bot, W_B = Z),~~m_{Z_A, \bot_B} \coloneqq m( W_A = Z, W_B = \bot),
\ee
corresponding to the events where a cross click occurs on one side and $Z$ basis is chosen on the other side. 
\\
(6) {\it Error correction}: Alice generates a syndrome of length $m_Zf_{\rm EC} $ from her sifted key. 
She encrypts this syndrome using a pre-shared secret key with Bob and sends it to him \cite{2006Koashi}.   
Bob then corrects his key according to the decrypted syndrome. 
\\
(7) {\it Privacy amplification}: Alice and Bob perform privacy amplification by shortening their keys by $m_Z f_{\rm PA} $ bits to obtain the final secret key.
\\

Let us define the following ratios based on the numbers obtained in the above protocol: 
\begin{align}
Q_Z \coloneqq \frac{m_Z}{m_{\rm rep}},~
E_X \coloneqq\frac{m_X^{\rm err}}{m_{\rm rep}}, 
~Q_{\bot_A, Z_B}\coloneqq \frac{m_{\bot_A, Z_B}}{m_{\rm rep}},
~Q_{Z_A, \bot_B}\coloneqq \frac{m_{Z_A, \bot_B}}{m_{\rm rep}},
\label{observed}
\end{align}
and 
\be
e_Z \coloneqq \frac{m_Z^{\rm sam,err}}{m_Z^{\rm sam}}.
\label{def:eZ}
\ee
In this paper, we consider the asymptotic secure key rate in the limit $m_{\rm rep}\to \infty$. 
In this limit, the asymptotic value of $f_{\rm PA}$ is expressed as a function of $Q_Z, E_X , Q_{\bot_A, Z_B}, Q_{Z_A, \bot_B}$. 
The cost for error correction $f_{\rm EC}$ is determined by $e_Z$. 
The key rate $R$ per round is given by
\begin{equation}
R= Q_{Z}\left(1-f_{\rm PA}( Q_Z, E_X , Q_{\bot_A, Z_B}, Q_{Z_A, \bot_B}) -f_{\rm EC}(e_Z)\right).
\label{ki-re-}
\end{equation}

\subsection{Security Proof}
Here, we prove the security of the passive BBM92 protocol described above. We first introduce an alternative protocol that is equivalent to the actual protocol. We then introduce a virtual qubit and define the phase error rate, which determines the amount of privacy amplification required for secure key generation. Finally, we derive an upper bound on the phase error rate and obtain a lower bound on the secure key rate. The main result is summarized in Theorem~\ref{thm:1}.

\begin{figure}[t]
 \centering
 \includegraphics[keepaspectratio, scale=0.5]
      {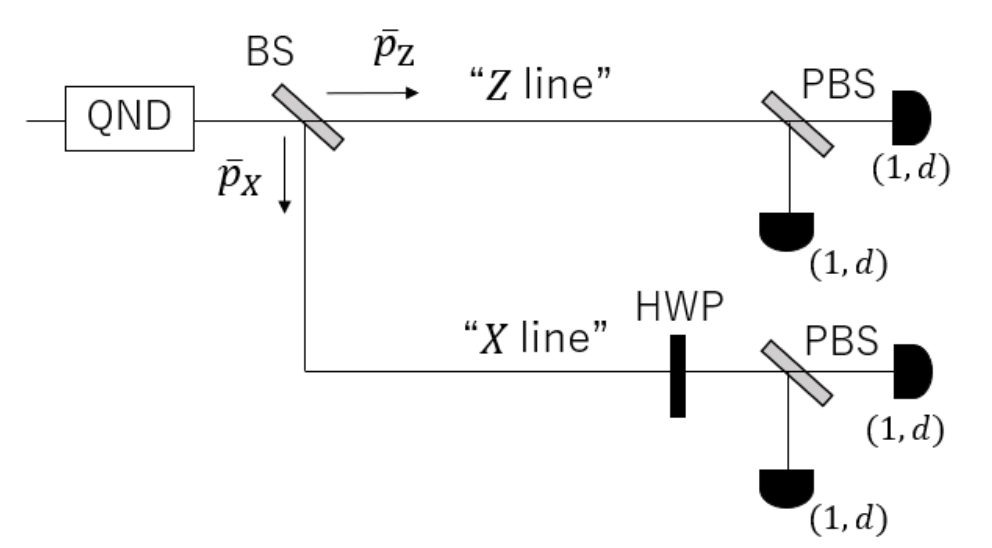}
 \caption{Virtual setup equivalent to Fig.~\ref{setup1}. The QND measurement of the photon number is performed before the first beam splitter, whose transmittance is $\bar{p}_Z = p_Z/(p_Z + p_X r)$. There is no loss in either the $Z$ or $X$ lines, including detector inefficiencies. The detectors have a dark-count probability $d$. }
 \label{setup2}
\end{figure}

\subsubsection{Alternative protocol}
\label{Sec:alternativeBBM}
Our starting point  is to introduce an alternative protocol which is equivalent to the actual protocol from the viewpoint of an eavesdropper, Eve. 
This alternative protocol is defined by replacing step (2) of the actual protocol with the following: 
\\
\\
\noindent
(2') {\it Measurement}: As shown in Fig.~\ref{setup2}, Alice performs a polarization-independent QND measurement to determine the photon number $n_A$. She then passively splits incoming light into  $Z$ line and $X$ line with splitting ratios  
\be
\bar{p}_Z \coloneqq \frac{p_Z}{p_Z + p_X r}, ~~\bar{p}_X \coloneqq \frac{p_X r}{p_Z + p_X r}, \label{barp}
\ee 
respectively. The $Z$ and $X$ lines are assumed to be lossless and equipped with detectors having unit quantum efficiency and a dark count probability $d$ per pulse. 
Note that we have a relation $\bar{p}_Z + \bar{p}_X =1$. 
Bob performs the same operations with the same setup as Alice and obtains the photon number $n_B$. \\
\\
We explain these replacements below. 
The common transmittance $\eta_Z$ shared by $Z$ and $X$ lines in Fig.~\ref{setup1} can be regarded as being absorbed into Eve's quantum channel, while the additional transmittance $r$ (defined in Eq. (\ref{rrr})) for $X$ line remains. 
The setup equivalent to Fig.~\ref{setup1} is therefore depicted in Fig.~\ref{setup3}. 
Now suppose that the loss in $X$ line is modeled by a beam splitter with reflectance $p_X (1-r)$ (i.e., transmittance $p_Z + p_X r$) placed before the splitting into $Z$ and $X$ lines, as shown in Fig.~\ref{setup4}. 
In this case, by choosing the splitting ratio between $Z$ and $X$ lines to be $\bar{p}_Z : \bar{p}_X$, as defined in Eq.~(\ref{barp}), 
the resulting transmittances in the $Z$ and $X$ lines become $(p_Z + p_X r) \bar{p}_Z = p_Z$ and $(p_Z + p_X r) \bar{p}_X = p_X r$, respectively, which are identical to those in Fig.~\ref{setup3}.  
The common transmittance $p_Z + p_X r$ in Fig.~\ref{setup4} can again be absorbed into Eve's channel, so that there is no loss in either the $Z$ or $X$ line. 
Finally, the outcome of the actual measurement remains unchanged even if Alice performs the QND measurement before the beam splitter, because the POVM elements of the actual measurement are block-diagonal with respect to the photon number $n_A$ \cite{2008Lutkenhaus}. 
The same argument applies to Bob. This leads to step (2'). 

\begin{figure}[t]
 \centering
 \includegraphics[keepaspectratio, scale=0.5]
      {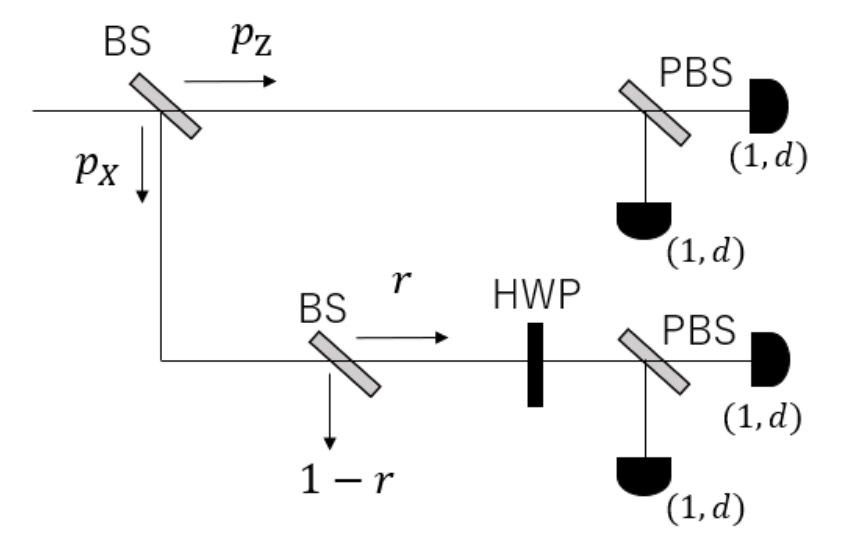}
 \caption{Intermediate virtual setup obtained from Fig.~\ref{setup1} by absorbing the common transmittance $\eta_Z$ into Eve's quantum channel. The additional loss in the $X$ line, characterized by the transmittance $r$, remains unchanged.}
 \label{setup3}
\end{figure}

\begin{figure}[t]
 \centering
 \includegraphics[keepaspectratio, scale=0.5]
      {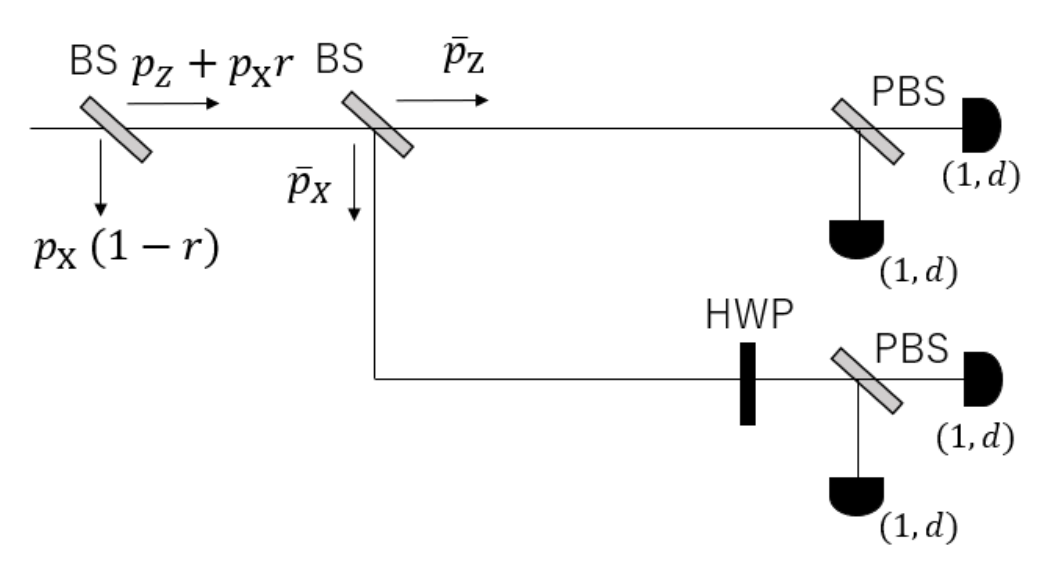}
 \caption{Intermediate virtual setup in the derivation of Fig.~\ref{setup2}, equivalent to Fig.~\ref{setup3}. The additional loss in the $X$ line is modeled by a beam splitter with transmittance $p_Z+p_Xr$ placed before the splitting into the $Z$ and $X$ lines. The splitting ratios are chosen to be $\bar{p}_Z:\bar{p}_X$, yielding the same transmittances in the two lines as in Fig.~\ref{setup3}.}
 \label{setup4}
\end{figure}

In the alternative protocol, photon numbers $n_A$ and $n_B$ are treated as observed quantities, 
which allows us to define the following ratios: 
\begin{equation}
\begin{split}
Q_{Z}^{(n'_A, n'_B)} &\coloneqq m(W_A=W_B=Z, s=0, n_A=n'_A,  n_B=n'_B) / m_{\rm rep}, \\
E_{X}^{(n'_A, n'_B)} &\coloneqq m(W_A=W_B=X,  n_A=n'_A, n_B=n'_B, b_A \neq b_B) / m_{\rm rep}, \\
Q_{\bot_A, Z_B}^{(n'_A, n'_B)} &\coloneqq m(W_A=\bot, W_B=Z, n_A=n'_A, n_B=n'_B) / m_{\rm rep}, \\
Q_{Z_A, \bot_B}^{(n'_A, n'_B)} &\coloneqq m(W_A=Z, W_B=\bot, n_A=n'_A, n_B=n'_B) / m_{\rm rep}.
\label{alternativeQE}
\end{split}
\end{equation}

\subsubsection{Phase error and  privacy amplification} \label{Sec:phasePA}
In this paper, we adopt the security proof with complementarity, in which the `phase error' rate is bounded by observed quantities in the actual protocol~\cite{2006Koashi, 2009Koashi}.
To consider the amount of privacy amplification for the sifted key obtained when $W_A=W_B=Z$, we need to define a virtual qubit on system $A$ such that bits obtained by performing the $Z$-basis measurement on this qubit 
are equivalent to the sifted key~\cite{2009Koashi}. 
When $n_A \neq 1$, Alice generates the virtual qubit in the following rule: she first performs the actual measurement with the setup in Fig.~\ref{setup2}. If she obtains an outcome for the case of $W_A=Z$ and $b_A=0~(1)$, she prepares a single photon in the $Z$-basis state $\ket{H}_A~(\ket{V}_A)$ as the virtual qubit on system $A$. Here, $\ket{H}_A$ and $\ket{V}_A$ represent single-photon states with horizontal and vertical polarizations, respectively. 
On the other hand, when $n_A = 1$, the rule to obtain the virtual qubit will be shown in the proof of lemma \ref{lemmasingle} in Sec.~\ref{lemmaproof}.  
Once the virtual qubit on system $A$ is defined for arbitrary $n_A$, in the security proof with complementarity, the `phase error' is defined in the following virtual scenario: suppose that, for the events where $W_A=W_B=Z$, the virtual qubit is measured in the $X$-basis measurement $\{\ket{D}\bra{D}_A,  \ket{\bar{D}} \bra{\bar{D}}_A \}$ with $\ket{D}_A \coloneqq (\ket{H}_A + \ket{V}_A) / \sqrt{2}$ and $\ket{\bar{D}}_A \coloneqq (\ket{H}_A - \ket{V}_A) / \sqrt{2}$ to obtain the outcome $\tilde{b}_A$ instead of the actual $Z$-basis measurement. Bob then tries to guess the outcome by using his quantum state. 
By defining $\tilde{b}_B$ as a Bob's guess bit, the `phase error' is then defined as a bit error between $\tilde{b}_A$ and $\tilde{b}_B$. 
Since $X$ and $Z$ bases are complementary, the better Bob can predict the outcomes of Alice's $X$-basis measurement, i.e., the smaller the phase error rate, the less information about Alice's $Z$-basis measurement outcomes (which form the secret key) is leaked to Eve. In other words, a smaller phase error rate implies a smaller amount of privacy amplification. 
For the events with $n_A \neq 1$, the phase error rate is 1/2 because $X$-basis measurement is performed on the prepared $Z$-basis state $\ket{H}_A$ or $\ket{V}_A$ and Bob cannot guess the outcome at all.  
In addition, we assume the worst case where Bob gives up his guess for the outcome when $n_B \neq 1$, corresponding to a phase error rate of 1/2 in those cases. 
These imply that in our security proof, the secret key is extracted solely from the events with $n_A = n_B = 1$, while all other events are discarded through privacy amplification. 
For such events with $n_A = n_B = 1$, we define 
\begin{align}
E_{\rm ph}^{(1, 1)}& \coloneqq m(W_A=W_B=Z, s=0, n_A=n_B=1, \tilde{b}_A \neq \tilde{b}_B) / m_{\rm rep}.  
\label{phaseerrorratio}
\end{align}
Then the phase error rate is represented as  $E_{\rm ph}^{(1, 1)} / Q_Z^{(1, 1)}$. 
Since all events except those with $n_A = n_B = 1$ have the phase error rate 1/2, the fraction of bits to be sacrificed in privacy amplification is given by 
\begin{align}
\begin{split}
&Q_Z f_{\rm PA} = \\
&~~\sum_{\substack{n_A, n_B \\ n_A \ge 2 \lor n_B \ge 2}} Q_Z^{(n_A, n_B)} + Q_Z^{(0,0)} + Q_Z^{(0,1)} + Q_Z^{(1,0)} +Q_Z^{(1,1)} h\left( \frac{E_{\rm ph}^{(1,1)}}{Q_Z^{(1,1)}}\right), \label{PAamount}
\end{split}
\end{align}
where $h(x)$ is the binary entropy function defined as 
\be 
h(x) \coloneqq
-x {\rm log}_2 x - (1-x) {\rm log}_2(1-x). 
\ee

\subsubsection{Main theorem} \label{lemmaproof}
From Eq.~(\ref{ki-re-}), it suffices to derive an upper bound on $f_{\rm PA}$ in order to obtain the secure key rate from the observed quantities $Q_Z$, $E_X$, $Q_{\bot_A,Z_B}$ and $Q_{Z_A,\bot_B}$.
The result is summarized in Theorem~\ref{thm:1}.
The proof of the theorem is based on three lemmas.
Lemma~\ref{lemmasingle} provides an upper bound on the phase error rate appearing in the last term of Eq.~(\ref{PAamount}).
Lemma~\ref{lemmamulti} provides an upper bound on the multi-photon contribution, corresponding to the first term of Eq.~(\ref{PAamount}).
Lemma~\ref{lemmaconcavity} provides an upper bound on the sum of the second through fifth terms of Eq.~(\ref{PAamount}) by exploiting the concavity of the binary entropy function.

We begin with the following lemma, which provides an upper bound on the phase error ratio $E_{\rm ph}^{(1,1)}$ for single-photon events.

\begin{lemma}\label{lemmasingle}
For events where $n_A = n_B = 1$, the following relation holds:
\be
E^{(1,1)}_{\rm ph} \sim \frac{{\bar{p}_Z}^2 \gamma}{{\bar{p}_X}^2} E^{(1,1)}_X,
\ee
where '$\sim$' denotes asymptotic equality in the limit $m_{\rm rep} \to \infty$. 
\end{lemma}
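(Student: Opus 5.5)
We will analyse the single-photon case $n_A=n_B=1$ of step (2'), where each lossless passive measurement is a qubit measurement in an explicit form. A single photon entering Alice's beam splitter goes to the $Z$ line with probability $\bar{p}_Z$ and to the $X$ line with probability $\bar{p}_X$. The only difference from an ideal measurement comes from dark counts in the other detectors, which add outcomes $W_A=\bot$ or a random bit on double clicks.

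The first step is to write each party's single-photon POVM as a convex mixture of three pieces: a $Z$-measurement with weight $\bar{p}_Z$, an $X$-measurement with weight $\bar{p}_X$, and post-processing by classical dark-count noise that is independent of the input state. In particular, conditioned on the photon having gone to the $Z$ line, the event $W_A=Z$ occurs with a probability $c_d$ depending only on $d$ (no click in the $X$ line). The bit is then the ideal $Z$ outcome, except when a dark count in the other $Z$ detector gives a double click and a random bit. The $X$ line has the same structure with the same constant $c_d$, because both lines have identical dark-count probability and are lossless in Fig.~\ref{setup2}.

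We define the virtual qubit for $n_A=1$ as the photon's polarization before Alice's beam splitter. Equivalently, this is a coherent implementation in which the basis-choice "which-line" degree of freedom is separated from the polarization qubit. Then the virtual $X$-measurement on events with $W_A=W_B=Z$, $s=0$ is, by symmetry, the same as the actual measurement in which the photon was routed to the $X$ line. We must treat the dark-count-induced randomization consistently, for example by assigning the same classical noise channel, so that the virtual bit $\tilde b_A$ has the same error statistics relative to the polarization qubit as the actual $X$-line bit. Bob's guess $\tilde b_B$ is obtained by having Bob also measure his single photon in the $X$ basis with his own identical noise.

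With this construction, the joint probability of a phase-error event $\tilde b_A\neq\tilde b_B$ and the joint probability of an actual $X$-error event differ only in the routing weights. The phase-error event carries weight $\bar{p}_Z\cdot\bar{p}_Z$ for both photons going to the $Z$ lines, times $\gamma$ for non-sampling. The actual $X$-error event carries weight $\bar{p}_X\cdot\bar{p}_X$. All dark-count factors coincide, and the conditional two-qubit statistics of the $X\otimes X$ measurement on Eve's arbitrary state are identical. Hence the expected fractions satisfy $\mathbb{E}[E_{\rm ph}^{(1,1)}]/(\bar{p}_Z^2\gamma)=\mathbb{E}[E_X^{(1,1)}]/\bar{p}_X^2$. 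Since each round's routing is an independent coin flip that is unknown to Eve, which routes the photon independently of Eve's state, a random-sampling argument in the limit $m_{\rm rep}\to\infty$ turns this into the asymptotic equality of the statement. This is the same reasoning as in the BB84 passive proof~\cite{2025Kawakami}.

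The main obstacle is the precise definition of the virtual qubit when the beam-splitter routing and dark counts interplay. The virtual-$X$ outcome must be exactly the $X$-line outcome with the roles of the lines swapped, including events where a dark count in the other line would have produced $\bot$. I will handle this by showing that the dark-count events form a classical channel that acts identically on both lines and factors out. This requires checking that, conditioned on single-photon input, the click patterns producing $W_A=Z$ given $Z$-routing mirror exactly those producing $W_A=X$ given $X$-routing.
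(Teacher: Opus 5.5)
Your overall structure is the paper's. For $n_A=n_B=1$ the passive setup is a state-independent filter (cross clicks, success probability $(1-d)^2$) followed by an active basis choice with weights $\bar{p}_Z,\bar{p}_X$. The phase-error and $X$-error POVMs therefore differ only by the factor $\bar{p}_Z^2\gamma/\bar{p}_X^2$, and a concentration argument turns this operator identity into the asymptotic equality.

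The gap is in the step you flag as the main obstacle, and your proposed resolution does not close it. If the virtual qubit is the raw polarization before the beam splitter, its $Z$ measurement does not reproduce the sifted key: a double click randomizes $b_A$, which amounts to a flip with probability $d/2$. If you then attach a classical flip to the virtual $X$ outcome by hand, the POVMs match. However, you have not exhibited a single system whose $Z$ measurement is exactly the key and whose $X$ measurement is exactly the noisy phase bit, and that is what the complementarity argument needs. Showing that the dark-count noise ``acts identically on both lines and factors out'' gives proportionality of the POVMs, but it does not by itself legitimize the virtual qubit.

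The missing idea, which is how the paper proceeds, is to realize the noise as one quantum channel acting on the polarization before the basis choice. After extracting the filter, write
$\hat{G}^{(1)}_{Z_A b}/(1-d)^2=\bar{p}_Z\sum_{l=0}^{3}\hat{K}_{A,l}^{\dagger}\ket{b_Z}\bra{b_Z}_A\hat{K}_{A,l}$ and
$\hat{G}^{(1)}_{X_A b}/(1-d)^2=\bar{p}_X\sum_{l=0}^{3}\hat{K}_{A,l}^{\dagger}\ket{b_X}\bra{b_X}_A\hat{K}_{A,l}$.
Here $\ket{0_Z},\ket{1_Z}=\ket{H},\ket{V}$ and $\ket{0_X},\ket{1_X}=\ket{D},\ket{\bar{D}}$. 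The same depolarizing Kraus operators serve both bases: $\hat{K}_{A,0}=\sqrt{1-3d/4}\,\hat{\mathbbm{1}}_A^{(1)}$, and $\sqrt{d/4}\,\hat{\sigma}_{A,X}$, $\sqrt{d/4}\,\hat{\sigma}_{A,Y}$, $\sqrt{d/4}\,\hat{\sigma}_{A,Z}$.

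Define the virtual qubit as the output of the filter plus this channel, and define Bob's guess the same way on his side. Then its $Z$ measurement is exactly the key, and its $X$ measurement automatically carries the same noise as the real $X$ line. This gives $\hat{E}^{(1,1)}_{\rm ph}=(\bar{p}_Z^2\gamma/\bar{p}_X^2)\hat{E}^{(1,1)}_X$ exactly.

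The channel must flip both bases with probability $d/2$. A $Z$-basis bit-flip channel alone would reproduce the key, but it commutes with the $X$ measurement and so leaves the virtual $X$ outcome noiseless. The resulting phase-error operator is then not proportional to $\hat{E}^{(1,1)}_X$.

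Finally, phrase the last step as per-round proportionality of conditional outcome probabilities, which follows from the operator identity for any history and hence covers coherent attacks, combined with Azuma's inequality. An i.i.d.\ random-sampling argument does not suffice here.
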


\begin{proof}
Suppose that system $A$ is the input system to the beam splitter in Fig.~\ref{setup2}.
As shown in Appendix \ref{singlePOVM}, the POVM elements in system $A$, conditioned on $n_A = 1$, are 
\begin{equation}
\begin{split}
&\hat{F'}_{Z_A0}^{(1)} =  \bar{p}_Z  \ket{H} \bra{H}_A (1-d)^3, \\
&\hat{F'}_{Z_A1}^{(1)} =  \bar{p}_Z  \ket{V} \bra{V}_A (1-d)^3,  \\
&\hat{F'}_{Z_A, \rm double}^{(1)} = \bar{p}_Z d  (1-d)^2 \hat{\mathbbm{1}}_A^{(1)},  \\
&\hat{F'}_{X_A0}^{(1)} =  \bar{p}_X  \ket{D} \bra{D}_A (1-d)^3, \\
&\hat{F'}_{X_A1}^{(1)} =  \bar{p}_X  \ket{\bar{D}} \bra{\bar{D}}_A  (1-d)^3,  \\
&\hat{F'}_{X_A, \rm double}^{(1)} = \bar{p}_X d  (1-d)^2 \hat{\mathbbm{1}}_A^{(1)},  \\
&\hat{F'}_{\bot_A}^{(1)} = \left( 1-(1-d)^2 \right) \hat{\mathbbm{1}}_A^{(1)},
\end{split}
\end{equation}
where the subscripts indicate the corresponding measurement outcomes and $ \hat{\mathbbm{1}}_A^{(1)} \coloneqq \ket{H}\bra{H}_A + \ket{V}\bra{V}_A$. 
Note that the sum of those elements equals $ \hat{\mathbbm{1}}_A^{(1)}$. 
Including the classical post-processing, the possible outcomes are restricted to `0', `1' and `failure', where `failure' corresponds to a cross-click event. 
Since a random bit is assigned to a double-click event, POVM elements including the post-processing are given by
\begin{equation}
\begin{split}
&\hat{G}_{Z_A0}^{(1)} = \hat{F'}_{Z_A0}^{(1)} + \frac{1}{2} \hat{F'}_{Z_A, \rm double}^{(1)}, ~~
\hat{G}_{Z_A1}^{(1)} = \hat{F'}_{Z_A1}^{(1)} + \frac{1}{2} \hat{F'}_{Z_A, \rm double}^{(1)}, \\
&\hat{G}_{X_A0}^{(1)} = \hat{F'}_{X_A0}^{(1)} + \frac{1}{2} \hat{F'}_{X_A, \rm double}^{(1)}, ~~
\hat{G}_{X_A1}^{(1)} = \hat{F'}_{X_A1}^{(1)} + \frac{1}{2} \hat{F'}_{X_A, \rm double}^{(1)}, \\
&\hat{G}_{{\rm fail}_A}^{(1)} =  \hat{F'}_{\bot_A}^{(1)} . 
\label{POVMs}
\end{split}
\end{equation}
Since $\hat{G}_{{\rm fail}_A}^{(1)} =  \hat{F'}_{\bot_A}^{(1)} = (1-(1-d)^2) \hat{\mathbbm{1}}_A^{(1)}$ holds, the measurement is equivalent to a process in which 
any single-photon state is first filtered with success probability $(1-d)^2$, followed by the measurement $\{\hat{G}_{Z_A0}^{(1)}/(1-d)^2$, $\hat{G}_{Z_A1}^{(1)}/(1-d)^2$, $\hat{G}_{X_A0}^{(1)}/(1-d)^2$, $\hat{G}_{X_A1}^{(1)}/(1-d)^2\}$.
Here, each element is written by
\be
\begin{split}
&\frac{\hat{G}_{Z_A0}^{(1)}}{(1-d)^2} = \bar{p}_Z \left((1-d) \ket{H}\bra{H}_A +\frac{d}{2} \hat{\mathbbm{1}}_A^{(1)} \right), \\
&\frac{\hat{G}_{Z_A1}^{(1)}}{(1-d)^2} = \bar{p}_Z \left((1-d) \ket{V}\bra{V}_A +\frac{d}{2} \hat{\mathbbm{1}}_A^{(1)} \right),\\ 
&\frac{\hat{G}_{X_A0}^{(1)}}{(1-d)^2} = \bar{p}_X \left((1-d) \ket{D}\bra{D}_A +\frac{d}{2} \hat{\mathbbm{1}}_A^{(1)} \right), \\
&\frac{\hat{G}_{X_A1}^{(1)}}{(1-d)^2} = \bar{p}_X \left((1-d) \ket{\bar{D}}\bra{\bar{D}}_A +\frac{d}{2} \hat{\mathbbm{1}}_A^{(1)} \right). \\ 
\end{split}
\ee
From these expressions, we obtain the following equations:
\be
\begin{split}
&\frac{\hat{G}_{Z_A0}^{(1)}}{(1-d)^2} =  \sum\limits_{l=0}^3 \bar{p}_Z \hat{K}_{A,l}^{\dagger} \ket{H}\bra{H}_A \hat{K}_{A,l}, \\
&\frac{\hat{G}_{Z_A1}^{(1)}}{(1-d)^2} =  \sum\limits_{l=0}^3 \bar{p}_Z \hat{K}_{A,l}^{\dagger} \ket{V}\bra{V}_A \hat{K}_{A,l}, \\
&\frac{\hat{G}_{X_A0}^{(1)}}{(1-d)^2} =  \sum\limits_{l=0}^3 \bar{p}_X \hat{K}_{A,l}^{\dagger} \ket{D}\bra{D}_A \hat{K}_{A,l}, \\
&\frac{\hat{G}_{Z_A1}^{(1)}}{(1-d)^2} =  \sum\limits_{l=0}^3 \bar{p}_X \hat{K}_{A,l}^{\dagger} \ket{\bar{D}}\bra{\bar{D}}_A \hat{K}_{A,l}, \\
\end{split}
\ee
where $\{\hat{K}_{A,l}\}_{l=0,1,2,3}$ are Kraus operators on system $A$ defined as 
\be
\begin{split}
\hat{K}_{A,0} &\coloneqq \sqrt{1-3d/4}\hat{\mathbbm{1}}_A^{(1)}, ~\hat{K}_{A,1} \coloneqq \sqrt{d/4}\hat{\sigma}_{A,X}, \\
\hat{K}_{A,2} &\coloneqq \sqrt{d/4} \hat{\sigma}_{A,Y}, ~\hat{K}_{A,3} \coloneqq \sqrt{d/4} \hat{\sigma}_{A,Z},
\label{Krauses}
\end{split}
\ee
with $\{\hat{\sigma}_{A,W}\}_{W=X,Y,Z}$ being Pauli operators on system $A$ defined as $\hat{\sigma}_{A,X} \coloneqq \ket{H}\bra{V}_A + \ket{V}\bra{H}_A$, $\hat{\sigma}_{A,Y} \coloneqq -i\ket{H}\bra{V}_A + i\ket{V}\bra{H}_A$ and $\hat{\sigma}_{A,Z} \coloneqq \ket{H}\bra{H}_A - \ket{V}\bra{V}_A$. 
Therefore, Alice's actual measurement for events with $n_A = 1$ is equivalent to a process where the incoming state is first filtered with success probability $(1-d)^2$, then the CPTP map with Kraus operators $\{\hat{K}_{A,l}\}_{l=0,1,2,3}$ is applied, followed by active basis selection with probability $\bar{p}_Z$ and $\bar{p}_X$. Finally, the ideal measurement $\{\ket{H}\bra{H}_A, \ket{V}\bra{V}_A\}$ and $\{\ket{D}\bra{D}_A, \ket{\bar{D}}\bra{\bar{D}}_A\}$ are performed for the $Z$ and $X$ bases, respectively.

In Bob's system $B$, which is the input system to the beam splitter in Fig.~\ref{setup2}, we define  $\ket{H}_B$, $\ket{V}_B$, $\ket{D}_B$ and $\ket{\bar{D}}_B$ as single-photon states with horizontal, vertical, diagonal and anti-diagonal polarizations, respectively. 
Then, the same discussion above for Alice holds for Bob. Namely, for events with $n_B=1$, Bob's actual measurement is equivalent to the process where the filtering operation with success probability $(1-d)^2$ and CPTP map specified by $\{\hat{K}_{B,l}\}_{l=0,1,2,3}$ in Eq.~(\ref{Krauses}) are applied to the received single photon, followed by the measurement $\{\ket{H}\bra{H}_B, \ket{V}\bra{V}_B\}$ or $\{\ket{D}\bra{D}_B, \ket{\bar{D}}\bra{\bar{D}}_B\}$ according to his active basis choice $Z$ or $X$, respectively.

Now, as announced in Sec.~\ref{Sec:phasePA}, we define Alice's virtual qubit on system $A$ for the events with $n_A=1$ as follows: Alice first applies the filter and the CPTP map specified by $\{\hat{K}_{A,l}\}$ to her received single photon, and keeps the output as the virtual qubit. 
For the events where Alice and Bob select $Z$ basis and the event is chosen for non-sampling ($W_A=W_B=Z, s=0$), the $Z$-basis measurement $\{\ket{H}\bra{H}_A,  \ket{V}\bra{V}_A\}$ on Alice's virtual qubit is equivalent to the actual process of obtaining a sifted key. 
On the other hand, for such events, the phase error is defined as a bit error between the outcome $\tilde{b}_A$ of Alice's $X$-basis measurement $\{\ket{D}\bra{D}_A, \ket{\bar{D}}\bra{\bar{D}}_A\}$ on her virtual qubit and Bob's guess $\tilde{b}_B$. 
Here, we define Bob's guess $\tilde{b}_B$ for events with $n_B=1$ as the outcome of an $X$-basis measurement performed on the single photon. 
Accordingly, the phase error for events with $n_A=n_B=1$ is described as follows. Alice and Bob first independently apply a filter with success probability $(1-d)^2$, followed by the CPTP maps $\{\hat{K}_{A,l}\}_{l=0,1,2,3}$ and $\{\hat{K}_{B,l}\}_{l=0,1,2,3}$. They then independently choose $Z$ basis with probability $\bar{p}_Z$, and select the events as non-sampling with probability $\gamma$. 
For the resulting events, the phase error is defined as the discrepancy between the outcomes $\tilde{b}_A$ and $\tilde{b}_B$ of $X$-basis measurements on the single photons. 
Then, the POVM element $\hat{E}_{\rm ph}^{(1,1)} $ corresponding to obtain the phase error event with ($W_A = W_B = Z, s= 0, n_A= n_B=1, \tilde{b}_A \neq \tilde{b}_B$) is represented as 
\be
\hat{E}_{\rm ph}^{(1,1)}  = \gamma \left( \hat{G}_{{\rm ph}_A0}^{(1)} \otimes \hat{G}_{{\rm ph}_B1}^{(1)} + \hat{G}_{{\rm ph}_A1}^{(1)} \otimes \hat{G}_{{\rm ph}_B0}^{(1)} \right ),
\label{POVMEph}
\ee
where 
\be
\begin{split}
&\hat{G}_{{\rm ph}_A0}^{(1)} \coloneqq  \bar{p}_Z (1-d)^2 \sum\limits_{l=0}^3  \hat{K}_{A,l}^{\dagger} \ket{D}\bra{D}_A \hat{K}_{A,l}, \\
&\hat{G}_{{\rm ph}_A1}^{(1)} \coloneqq  \bar{p}_Z (1-d)^2 \sum\limits_{l=0}^3  \hat{K}_{A,l}^{\dagger} \ket{\bar{D}}\bra{\bar{D}}_A \hat{K}_{A,l}, \\
&\hat{G}_{{\rm ph}_B0}^{(1)} \coloneqq  \bar{p}_Z (1-d)^2 \sum\limits_{l=0}^3  \hat{K}_{B,l}^{\dagger} \ket{D}\bra{D}_B \hat{K}_{B,l}, \\
&\hat{G}_{{\rm ph}_B1}^{(1)} \coloneqq  \bar{p}_Z (1-d)^2 \sum\limits_{l=0}^3  \hat{K}_{B,l}^{\dagger} \ket{\bar{D}}\bra{\bar{D}}_B \hat{K}_{B,l}.
\end{split}
\ee
On the other hand, the POVM element $\hat{E}_X^{(1,1)} $ corresponding to the $X$-basis error event with ($W_A = W_B = X, n_A= n_B=1, b_A \neq b_B$) is represented as  
\be
\hat{E}_X^{(1,1)} \coloneqq G_{X_A0}^{(1)} \otimes G_{X_B1}^{(1)} + G_{X_A1}^{(1)} \otimes G_{X_B0}^{(1)}, 
\label{POVMEX}
\ee
where we defined 
\be
\begin{split}
&\hat{G}_{X_B0}^{(1)} \coloneqq  \bar{p}_X (1-d)^2 \sum\limits_{l=0}^3  \hat{K}_{B,l}^{\dagger} \ket{D}\bra{D}_B \hat{K}_{B,l}, \\
&\hat{G}_{X_B1}^{(1)} \coloneqq  \bar{p}_X (1-d)^2 \sum\limits_{l=0}^3  \hat{K}_{B,l}^{\dagger} \ket{\bar{D}}\bra{\bar{D}}_B \hat{K}_{B,l}.
\end{split}
\ee
Eqs.~(\ref{POVMEph}) and (\ref{POVMEX}) lead to the relation
\be
\hat{E}^{(1,1)}_{\rm ph} = \frac{{\bar{p}_Z}^2 \gamma}{{\bar{p}_X}^2} \hat{E}^{(1,1)}_X.
\ee
Let $\hat{F}(\Omega = \omega)$ denote a POVM element corresponding to the outcome of $\Omega = \omega$.
From Azuma's inequality~\cite{1967Azuma} (see Appendix \ref{Azuma}), 
if 
\be
\hat{F}(\Omega = \omega) = a \hat{F}(\Omega = \omega')
\ee
holds for a constant $a>0$, then 
\be
\frac{m(\Omega = \omega)}{m_{\rm rep}} = \frac{a m(\Omega = \omega')}{m_{\rm rep}}
\ee
holds in the asymptotic limit.  
Therefore, 
we obtain
\be
E^{(1,1)}_{\rm ph} \sim \frac{{\bar{p}_Z}^2 \gamma}{{\bar{p}_X}^2} E^{(1,1)}_X
\ee
in the asymptotic limit. 

\end{proof}

Here, we define 
\be
\alpha \coloneqq \frac{\bar{p}_Z^2 \gamma  }
  {1- \bar{p}_Z^2 - \bar{p}_X^2 }.
\label{alpha}
\ee
The following lemma gives an upper bound on the contribution from multi-photon events. 
\begin{lemma}\label{lemmamulti}
The following relation holds:  
\be
\sum_{\substack{n_A, n_B \\ n_A \ge 2 \lor n_B \ge 2}} Q_Z^{(n_A, n_B)} \lesssim \alpha( Q_{\bot_A, Z_B} + Q_{Z_A, \bot_B}),
\label{templemmamulti}
\ee
where '$\lesssim$' denotes inequality that holds in the asymptotic limit. 
\end{lemma}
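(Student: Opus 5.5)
The plan is to prove an operator inequality between POVM elements in the alternative protocol of Sec.~\ref{Sec:alternativeBBM}, blockwise in photon number, and then convert it into an asymptotic inequality between observed ratios using Azuma's inequality. This is the same mechanism as in the proof of Lemma~\ref{lemmasingle}, but with an inequality rather than an equality.

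\emph{Step 1: Alice's $n$-photon POVM for $n\ge 2$.} In the setup of Fig.~\ref{setup2} the lines are lossless and the detectors have unit efficiency. The beam splitter is polarization independent, so each of the $n$ photons independently enters the $Z$ line with probability $\bar p_Z$ and the $X$ line with probability $\bar p_X$, regardless of polarization.
\begin{itemize}
\item $W_A=Z$ requires all $n$ photons in the $Z$ line, which then surely clicks, and no dark count in either $X$-line detector. Summing over bit values therefore gives $\hat F^{(n)}_{Z_A}=\bar p_Z^{\,n}(1-d)^2\,\hat{\mathbbm 1}^{(n)}_A\le \bar p_Z^{\,n}\,\hat{\mathbbm 1}^{(n)}_A$.
\item If photons reach both lines, both lines click and a cross click occurs irrespective of dark counts. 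Hence $\hat F^{(n)}_{\bot_A}\ge (1-\bar p_Z^{\,n}-\bar p_X^{\,n})\,\hat{\mathbbm 1}^{(n)}_A$.
\end{itemize}
Both bounds are proportional to the identity on the $n$-photon block. Consequently $\hat F^{(n)}_{Z_A}\le c_n \hat F^{(n)}_{\bot_A}$ with $c_n:=\bar p_Z^{\,n}/(1-\bar p_Z^{\,n}-\bar p_X^{\,n})$.

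\emph{Step 2: maximize over $n$.} I need $\gamma c_n\le\alpha$ for all $n\ge2$, i.e.
\be
\bar p_Z^{\,n}\,(1-\bar p_Z^2-\bar p_X^2)\le \bar p_Z^2\,(1-\bar p_Z^{\,n}-\bar p_X^{\,n}).
\ee
I would check this by induction on $n$, or by writing $1-\bar p_Z^{\,n}-\bar p_X^{\,n}=\sum_{k=1}^{n-1}\binom{n}{k}\bar p_Z^k\bar p_X^{n-k}$ and comparing term by term with $\bar p_Z^{\,n-2}\cdot 2\bar p_Z\bar p_X$, using $\bar p_Z+\bar p_X=1$. The case $n=2$ gives equality. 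This is the only genuinely computational point, and I expect it to be the main, though mild, obstacle; it amounts to showing that $c_n$ is nonincreasing in $n$.

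\emph{Step 3: split the sum and tensor with Bob.} I split the multi-photon sum into two parts.
\begin{itemize}
\item Terms with $n_A\ge2$ (any $n_B$): the POVM of the event $(W_A=W_B=Z,s=0,n_A,n_B)$ is $\gamma\hat F^{(n_A)}_{Z_A}\otimes\hat F^{(n_B)}_{Z_B}$. Since tensoring with a positive operator preserves operator inequalities, this is at most $\alpha\,\hat F^{(n_A)}_{\bot_A}\otimes\hat F^{(n_B)}_{Z_B}$.
\item Terms with $n_A\le1$, $n_B\ge2$: by the symmetric argument on Bob's side, the POVM is at most $\alpha\,\hat F^{(n_A)}_{Z_A}\otimes\hat F^{(n_B)}_{\bot_B}$.
\end{itemize}
Summing over the respective photon numbers, the right-hand sides are dominated by the POVM elements of the events $(W_A=\bot,W_B=Z)$ and $(W_A=Z,W_B=\bot)$ restricted to these photon-number blocks. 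These restricted events are subsets of the full events counted by $Q_{\bot_A,Z_B}$ and $Q_{Z_A,\bot_B}$, since the QND measurement commutes with the actual POVM.

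\emph{Step 4: from operators to counts.} Applying Azuma's inequality (Appendix~\ref{Azuma}) in its inequality form, as in the proof of Lemma~\ref{lemmasingle}, converts $\hat F(\Omega=\omega)\le a\hat F(\Omega=\omega')$ into $m(\Omega=\omega)/m_{\rm rep}\lesssim a\,m(\Omega=\omega')/m_{\rm rep}$. This yields Eq.~(\ref{templemmamulti}).
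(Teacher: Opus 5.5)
Your proposal is correct and follows essentially the same route as the paper. Like the paper, you prove the blockwise operator inequality $\hat Q_Z^{(n_A,n_B)}\le\alpha\,\hat Q_{\bot_A,Z_B}^{(n_A,n_B)}$ for $n_A\ge 2$ (and its analogue for Bob), extend it to all photon numbers, and convert it to the observed ratios with Azuma's inequality. The differences are minor: you split the multi-photon sum disjointly, whereas the paper uses the overlapping bound $Q_Z^{(n_A\geq 2)}+Q_Z^{(n_B\geq 2)}$. You also check explicitly, via the $k=n-1$ binomial term $n\bar p_Z^{\,n-1}\bar p_X\ge 2\bar p_Z^{\,n-1}\bar p_X$, that $\bar p_Z^{\,n}\gamma/(1-\bar p_Z^{\,n}-\bar p_X^{\,n})$ is maximized at $n=2$, which the paper asserts without proof.
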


\begin{proof}  The left-hand side in Eq.~(\ref{templemmamulti}) can be upper bounded as 
\be
\sum_{\substack{n_A, n_B \\ n_A \ge 2 \lor n_B \ge 2}} Q_Z^{(n_A, n_B)} \leq Q_Z^{(n_A\geq 2)}  + Q_Z^{(n_B\geq 2)},
\label{nAnBupper}
\ee
where we defined 
\be
Q_Z^{(n_A\geq 2)} \coloneqq \sum_{\substack{n_A, n_B \\ n_A \ge 2 \land n_B \ge 0}} Q_Z^{(n_A, n_B)}, ~~
Q_Z^{(n_B\geq 2)} \coloneqq \sum_{\substack{n_A, n_B \\ n_A \ge 0 \land n_B \ge 2}} Q_Z^{(n_A, n_B)}.
\ee
From Appendix \ref{multiPOVM}, for $n_A \geq 1$, the POVM elements corresponding to the outcomes $W_A = Z$, $W_A = X$ and $W_A = \bot$ are given by 
\begin{align}
\bs
&\hat{F'}_{Z_A}^{(n_A)} = (1-d)^2 {\bar{p}_Z}^{n_A} \hat{\mathbbm{1}}_A^{(n_A)},~~~\hat{F'}_{X_A}^{(n_A)} = (1-d)^2 {\bar{p}_X}^{n_A} \hat{\mathbbm{1}}_A^{(n_A)}, \\
&\hat{F'}_{\bot_A}^{(n_A)} = \left( 1 - (1-d)^2 (\bar{p}_Z^{n_A} + \bar{p}_X^{n_A}) \right) \hat{\mathbbm{1}}_A^{(n_A)},
\label{FdashA}
\es
\end{align}
respectively, where $\hat{\mathbbm{1}}_A^{(n_A)}$ denotes the projection operator on the $n_A$-photon subspace of system $A$. 
For $n_A = 0$, the POVM elements are given by
\be
\begin{split}
\hat{F'}_{Z_A}^{(0)} &= \hat{F'}_{X_A}^{(0)} = (1-d)^2 (1-(1-d)^2) \ket{\rm vac}\bra{\rm vac}_A,\\
\hat{F'}_{\bot_A}^{(0)} &=  (1 - (1-d)^2)^2  \ket{\rm vac}\bra{\rm vac}_A, 
\label{FdashA0}
\end{split}
\ee
where $\ket{\rm vac}_A$ is the vacuum state on system $A$. 
On Bob's side, $\hat{F'}_{Z_B}^{(n_B)}$, $\hat{F'}_{X_B}^{(n_B)}$, $\hat{F'}_{\bot_B}^{(n_B)}$ are defined in the same manner as 
\begin{align}
\bs
&\hat{F'}_{Z_B}^{(n_B)} = (1-d)^2 {\bar{p}_Z}^{n_B} \hat{\mathbbm{1}}_B^{(n_B)},~~~\hat{F'}_{X_B}^{(n_B)} = (1-d)^2 {\bar{p}_X}^{n_B} \hat{\mathbbm{1}}_B^{(n_B)}, \\
&\hat{F'}_{\bot_B}^{(n_B)} = \left( 1 - (1-d)^2 (\bar{p}_Z^{n_B} + \bar{p}_X^{n_B}) \right) \hat{\mathbbm{1}}_B^{(n_B)}
\label{FdashB}
\es
\end{align} 
for $n_B \geq 1$, and 
\be
\begin{split}
\hat{F'}_{Z_B}^{(0)} &= \hat{F'}_{X_B}^{(0)} = (1-d)^2 (1-(1-d)^2) \ket{\rm vac}\bra{\rm vac}_B,\\
\hat{F'}_{\bot_B}^{(0)} &=  (1 - (1-d)^2)^2  \ket{\rm vac}\bra{\rm vac}_B 
\label{FdashB0}
\end{split}
\ee
for $n_B = 0$. Here $\ket{\rm vac}_B$ denotes the vacuum state on system $B$. 

Here, we define the following operators for $n_A \geq 0$ and $n_B \geq 0$ corresponding to the sifted-key and cross-click events:
\be
\begin{split}
\hat{Q}_Z^{(n_A, n_B)} &\coloneqq \gamma \hat{F'}_{Z_A}^{(n_A)} \otimes \hat{F'}_{Z_B}^{(n_B)}, \\
\hat{Q}_{\bot_A, Z_B}^{(n_A, n_B)}  &\coloneqq \hat{F'}_{\bot_A}^{(n_A)} \otimes \hat{F'}_{Z_B}^{(n_B)}, \\
\hat{Q}_{Z_A, \bot_B}^{(n_A, n_B)}  &\coloneqq \hat{F'}_{Z_A}^{(n_A)} \otimes \hat{F'}_{\bot_B}^{(n_B)}, \\
\label{hatQn}
\end{split}
\ee
where $\gamma$ is the non-sampling probability. 
By using the expressions of Eq.~(\ref{FdashA}), we obtain 
\be
\begin{split}
\hat{Q}_Z^{(n_A, n_B)} & =  \frac{ \bar{p}_Z^{n_A} (1-d)^2 \gamma  }
{1 - (\bar{p}_Z^{n_A} + \bar{p}_X^{n_A})(1-d)^2}  \hat{Q}_{\bot_A, Z_B}^{(n_A, n_B)} \\
& \leq \frac{ \bar{p}_Z^{n_A} \gamma  }
{1 - (\bar{p}_Z^{n_A} + \bar{p}_X^{n_A})}  \hat{Q}_{\bot_A, Z_B}^{(n_A, n_B)} 
\label{operatorrelation}
\end{split}
\ee
for $n_A \geq 1$ and $n_B \geq 0$. 
For $n_A \geq 2$ and $n_B \geq 0$, the relation 
\be
\hat{Q}_Z^{(n_A, n_B)} \leq \alpha \hat{Q}_{\bot_A, Z_B}^{(n_A, n_B)} 
\ee
is obtained, where $\alpha$ is defined in Eq.~(\ref{alpha}). 
Therefore, 
\be
\sum_{\substack{n_A, n_B \\ n_A \ge 2 \land n_B \ge 0}} \hat{Q}_Z^{(n_A, n_B)} 
\leq \sum_{\substack{n_A, n_B \\ n_A \ge 2 \land n_B \ge 0}} \alpha \hat{Q}_{\bot_A, Z_B}^{(n_A, n_B)}
\leq \sum_{\substack{n_A, n_B \\ n_A \ge 0 \land n_B \ge 0}} \alpha \hat{Q}_{\bot_A, Z_B}^{(n_A, n_B)}
\label{QZnAupper}
\ee
holds. 
By defining 
\be
\hat{Q}_Z^{(n_A \geq 2)} \coloneqq \sum_{\substack{n_A, n_B \\ n_A \ge 2 \land n_B \ge 0}} \hat{Q}_Z^{(n_A, n_B)} ,
~~ \hat{Q}_{\bot_A, Z_B} \coloneqq \sum_{\substack{n_A, n_B \\ n_A \ge 0 \land n_B \ge 0}} \alpha \hat{Q}_{\bot_A, Z_B}^{(n_A, n_B)},
\ee
Eq.~(\ref{QZnAupper}) is 
\be
\hat{Q}_Z^{(n_A \geq 2)} \leq \alpha \hat{Q}_{\bot_A, Z_B}.
\label{operela}
\ee
By applying Azuma's inequality (see Appendix \ref{Azuma}), we obtain 
\be
Q_Z^{(n_A \geq 2)} \lesssim \alpha Q_{\bot_A, Z_B}
\label{QZnAupperlast}
\ee
in the asymptotic limit.

Similarly, we have 
\be
Q_Z^{(n_B \geq 2)} \lesssim  \alpha Q_{Z_A, \bot_B}
\label{QZnBupper}
\ee
in the asymptotic limit. 
From Eqs.~(\ref{nAnBupper}), (\ref{QZnAupperlast}) and (\ref{QZnBupper}), we obtain 
\be
\sum_{\substack{n_A, n_B \\ n_A \ge 2 \lor n_B \ge 2}} Q_Z^{(n_A, n_B)} \leq \alpha( Q_{\bot_A, Z_B} + Q_{Z_A, \bot_B}).
\ee
\end{proof}

Now, we define 
\be
q_Z \coloneqq Q_Z^{(0,0)} + Q_Z^{(0,1)} + Q_Z^{(1,0)} +Q_Z^{(1,1)} = Q_Z - \sum_{\substack{n_A, n_B \\ n_A \ge 2 \lor n_B \ge 2}} Q_Z^{(n_A, n_B)}
\label{qz}
\ee
for convenience. 
The following lemma provides an upper bound on the contribution to $Q_Z f_{\rm PA}$ other than the multi-photon term, by exploiting the concavity of the binary entropy function. 
\begin{lemma}\label{lemmaconcavity}
Suppose $\bar{p}_Z \geq \bar{p}_X$ and 
\be
\frac{\bar{p}_Z^2 \gamma}{\bar{p}_X^2} \frac{E_X}{q_Z} \leq 1/2.
\ee
Then the following inequality holds:
\be
Q_Z^{(0,0)} + Q_Z^{(0,1)} + Q_Z^{(1,0)} +Q_Z^{(1,1)} h\left(\frac{\bar{p}_Z^2 \gamma}{\bar{p}_X^2} \frac{E_X^{(1,1)}}{Q_Z^{(1,1)}}\right) \lesssim q_Z h \left( \frac{\bar{p}_Z^2 \gamma}{\bar{p}_X^2} \frac{E_X}{q_Z} \right). 
\label{eqlemma2}
\ee
\end{lemma}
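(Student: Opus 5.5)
Set $c\coloneqq \bar{p}_Z^2\gamma/\bar{p}_X^2$ and $x\coloneqq c\,E_X^{(1,1)}/Q_Z^{(1,1)}$. The idea is to treat the three terms $Q_Z^{(0,0)}, Q_Z^{(0,1)}, Q_Z^{(1,0)}$, each multiplied by $1=h(1/2)$, as contributions at phase-error ``argument'' $1/2$. Then the whole left-hand side of Eq.~(\ref{eqlemma2}) is a convex combination, with total weight $q_Z$, of values of the binary entropy $h$. Concavity of $h$ gives
\be
Q_Z^{(0,0)} + Q_Z^{(0,1)} + Q_Z^{(1,0)} + Q_Z^{(1,1)} h(x) \leq q_Z\, h\!\left(\frac{(q_Z - Q_Z^{(1,1)})/2 + c E_X^{(1,1)}}{q_Z}\right),
\ee
where $q_Z$ is defined in Eq.~(\ref{qz}). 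It then remains to replace the argument on the right by $cE_X/q_Z$, using that $h$ is nondecreasing on $[0,1/2]$. This requires showing that the argument above is at most $cE_X/q_Z$ asymptotically, i.e.
\be
\tfrac12\left(Q_Z^{(0,0)}+Q_Z^{(0,1)}+Q_Z^{(1,0)}\right) + cE_X^{(1,1)} \lesssim cE_X .
\ee
The hypothesis $cE_X/q_Z\le 1/2$ keeps the final argument in the monotone region.

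The main step is to find enough ``extra'' $X$-basis errors from the events $(n_A,n_B)\in\{(0,0),(0,1),(1,0)\}$. Since $E_X\ge E_X^{(1,1)}+E_X^{(0,0)}+E_X^{(0,1)}+E_X^{(1,0)}$, it suffices to show $\tfrac12 Q_Z^{(i,j)}\lesssim c\,E_X^{(i,j)}$ for each of these three photon-number pairs. I would prove each as an operator inequality and then convert it to an asymptotic statement with Azuma's inequality, exactly as in Lemmas~\ref{lemmasingle} and~\ref{lemmamulti}.

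When one side has vacuum, its outcome bit is uniformly random; this follows from the POVMs in Eq.~(\ref{FdashA0}) together with the random assignment on double clicks and the symmetric single dark clicks. Hence the $X$-basis error POVM is exactly half the $X$-basis detection POVM, and no squashing argument is needed. For $(0,1)$, using Eq.~(\ref{FdashA0}) on Alice's side and $\hat F'^{(1)}_{X_B}=(1-d)^2\bar p_X$ (respectively $\bar p_Z$ for $Z$) on Bob's side, I get
\be
\hat E_X^{(0,1)} = \tfrac12\,\frac{\bar p_X}{\bar p_Z}\,\frac{1}{\gamma}\,\hat Q_Z^{(0,1)} .
\ee
So $\tfrac12 Q_Z^{(0,1)} \sim (\bar p_Z/\bar p_X)\gamma E_X^{(0,1)}$. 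This is at most $cE_X^{(0,1)}$ because $\bar p_Z/\bar p_X\le \bar p_Z^2/\bar p_X^2$, which is where the assumption $\bar p_Z\ge\bar p_X$ enters. The case $(1,0)$ is symmetric. For $(0,0)$ the $Z$ and $X$ vacuum POVMs coincide, so $\tfrac12 Q_Z^{(0,0)}\sim \gamma E_X^{(0,0)}\le cE_X^{(0,0)}$, since $c\ge\gamma$.

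The step I expect to be most delicate is the bookkeeping of the random bit assignment for vacuum inputs. I need to confirm that, conditioned on $W=X$ with vacuum input, the bit is exactly uniform and independent of the other party, so that the factor $1/2$ is exact. The rest is combining the three pieces with concavity and monotonicity, all in the asymptotic sense.
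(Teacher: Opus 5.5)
Your proposal is correct and follows essentially the same route as the paper. Both arguments place the $(0,0)$, $(0,1)$, $(1,0)$ contributions at argument $1/2$ and apply concavity of $h$. Both then use $E_X^{(1,1)}\le E_X-E_X^{(0,0)}-E_X^{(0,1)}-E_X^{(1,0)}$ together with the operator identities $\hat E_X^{(0,0)}=\hat Q_Z^{(0,0)}/(2\gamma)$ and $\hat E_X^{(0,1)}=(\bar p_X/\bar p_Z)\,\hat Q_Z^{(0,1)}/(2\gamma)$, made asymptotic via Azuma's inequality, and finish with monotonicity of $h$ on $[0,1/2]$ using $\bar p_Z\ge\bar p_X$.

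Your per-term inequalities $\tfrac12 Q_Z^{(i,j)}\lesssim c\,E_X^{(i,j)}$ are only a regrouping of the paper's combined bound on $(\bar p_Z^2/\bar p_X^2)E_X^{(1,1)}$.
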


\begin{proof}
The quantity $E_X^{(1, 1)}$ can be upper bounded as follows:
\begin{align}
\bs
E_X^{(1, 1)} &= E_X - E_X^{(0, 0)} - E_X^{(0, 1)} -E_X^{(1, 0)} - \sum_{\substack{n_A, n_B \\ n_A \ge 2 \lor n_B \ge 2}} E_X^{(n_A, n_B)} \\
& \leq E_X - E_X^{(0, 0)} - E_X^{(0, 1)} -E_X^{(1, 0)}.
\es
\label{Xerrorineq}
\end{align}
Let $\hat{E}_X^{(0, 0)}, \hat{E}_X^{(0, 1)}, \hat{E}_X^{(1, 0)}$ be POVM elements corresponding to the events with $(W_A = W_B = X, n_A=n_B=0, b_A \neq b_B)$, $(W_A = W_B = X, n_A=0, n_B=1, b_A \neq b_B)$ and $(W_A = W_B = X, n_A=1,n_B=0, b_A \neq b_B)$, respectively. These are given by 
\be
\begin{split}
\hat{E}_X^{(0, 0)} &= \frac{1}{2}(1-d)^4 \left(1-(1-d)^2\right)^2 \ket{\rm vac} \bra{\rm vac}_{AB}, \\
\hat{E}_X^{(0, 1)} &= \frac{\bar{p}_X}{2} (1-d)^4 \left(1-(1-d)^2\right)  \ket{\rm vac} \bra{\rm vac}_A \otimes \hat{\mathbbm{1}}_B^{(1)},  \\ 
\hat{E}_X^{(1, 0)} &= \frac{\bar{p}_X}{2} (1-d)^4 \left(1-(1-d)^2\right) \hat{\mathbbm{1}}_A^{(1)} \otimes \ket{\rm vac} \bra{\rm vac}_B,
\end{split}
\ee
where the prefactor 1/2 implies the randomness of outcomes from dark counts. 
From Eq.~(\ref{hatQn}), we obtain 
\be
\hat{E}_X^{(0, 0)} = \frac{1}{2 \gamma}  \hat{Q}_Z^{(0, 0)},~~\hat{E}_X^{(0, 1)} = \frac{1}{2 \gamma} \frac{\bar{p}_X}{\bar{p}_Z} \hat{Q}_Z^{(0, 1)},~~\hat{E}_X^{(1, 0)} = \frac{1}{2 \gamma} \frac{\bar{p}_X}{\bar{p}_Z}  \hat{Q}_Z^{(1, 0)}. 
\ee
By applying Azuma's inequality (see Appendix \ref{Azuma}), 
\be
E_X^{(0, 0)} \sim \frac{1}{2 \gamma}  Q_Z^{(0, 0)},~~E_X^{(0, 1)} \sim \frac{1}{2 \gamma} \frac{\bar{p}_X}{\bar{p}_Z} Q_Z^{(0, 1)},~~E_X^{(1, 0)} 
\sim \frac{1}{2 \gamma} \frac{\bar{p}_X}{\bar{p}_Z}  Q_Z^{(1, 0)} 
\label{EXandQX}
\ee
holds in the asymptotic limit. 
By combining Eqs.~(\ref{Xerrorineq}) and (\ref{EXandQX}), we obtain 
\begin{align}
\bs
\frac{{\bar{p}_Z}^2}{{\bar{p}_X}^2} E_X^{(1, 1)} 
&\lesssim \frac{{\bar{p}_Z}^2}{{\bar{p}_X}^2} \left(E_X - \frac{1}{2 \gamma} Q_Z^{(0, 0)} - \frac{1}{2 \gamma} \frac{\bar{p}_X}{\bar{p}_Z} Q_Z^{(0, 1)} - \frac{1}{2 \gamma} \frac{\bar{p}_X}{\bar{p}_Z} Q_Z^{(1, 0)}\right) \\
& = \frac{{\bar{p}_Z}^2}{{\bar{p}_X}^2} E_X - \frac{1}{2 \gamma} \frac{{\bar{p}_Z}^2}{{\bar{p}_X}^2} Q_Z^{(0,0)} - \frac{1}{2 \gamma} \frac{\bar{p}_Z}{\bar{p}_X} Q_Z^{(0, 1)} - \frac{1}{2 \gamma} \frac{\bar{p}_Z}{\bar{p}_X} Q_Z^{(1, 0)}.
\es
\label{preconcav}
\end{align}
From these considerations, we obtain  
\begin{align}
\bs
&\left( Q_Z^{(0,0)} + Q_Z^{(0,1)} + Q_Z^{(1,0)} +Q_Z^{(1,1)} h\left( \frac{\bar{p}_Z^2 \gamma}{\bar{p}_X^2} \frac{E_X^{(1,1)}}{Q_Z^{(1,1)}}\right) \right) /q_Z \\
&= \frac{Q_Z^{(0,0)} + Q_Z^{(0,1)} + Q_Z^{(1,0)}} {q_Z} h\left( \frac{1}{2} \right) + \frac{Q_Z^{(1,1)}}{q_Z} h\left( \frac{\bar{p}_Z^2 \gamma}{\bar{p}_X^2} \frac{E_X^{(1,1)}}{Q_Z^{(1,1)}}\right) \\
& \leq h\left(\frac{Q_Z^{(0,0)} + Q_Z^{(0,1)} + Q_Z^{(1,0)}} {2q_Z} + \frac{\bar{p}_Z^2 \gamma}{\bar{p}_X^2} \frac{E_X^{(1,1)}}{q_Z} \right) \\
& \lesssim h\Bigg( \frac{{\bar{p}_Z}^2 \gamma}{{\bar{p}_X}^2} \frac{E_X}{q_Z} + \frac{1}{2q_Z} \left(1 - \frac{{\bar{p}_Z}^2 }{{\bar{p}_X}^2} \right)Q_Z^{(0,0)} + \\
&~~~~~~~~~~~~~~~~~~~~~~~~~~~~\frac{1}{2q_Z} \left(1 - \frac{\bar{p}_Z}{\bar{p}_X} \right) (Q_Z^{(0,1)} + Q_Z^{(1,0)})  \Bigg) \\
& \leq h\Bigg( \frac{{\bar{p}_Z}^2 \gamma}{{\bar{p}_X}^2} \frac{E_X}{q_Z}\Bigg),
\es
\end{align}
where the first inequality follows from the concavity of the function $h(x)$, while the second inequality follows from Eq.~(\ref{preconcav}). 
The second and third inequalities use the fact that $h(x)$ is a non-decreasing continuous function in the range of $0\leq x \leq 1/2$. 
From the assumption of lemma \ref{lemmaconcavity}, the terms in the brackets are no greater than 1/2.  
The third inequality uses $\bar{p}_Z / \bar{p}_X \geq 1$ from the assumption of the lemma. 
This result leads to Eq.~(\ref{eqlemma2}).

\end{proof}

By combining Lemma~\ref{lemmasingle}, \ref{lemmamulti} and \ref{lemmaconcavity}, we obtain the main theorem. 

\begin{thm} 
If the relations
\be
\bar{p}_Z \geq \bar{p}_X, ~~
\frac{{\bar{p}_Z}^2 \gamma}{{\bar{p}_X}^2} \frac{E_X}{Q_Z - \alpha( Q_{\bot_A, Z_B} + Q_{Z_A, \bot_B})} \leq \frac{1}{2} 
\label{pbarineq} 
\ee 
hold,  
the lower bound on the secure key rate $R$ per round of the passive BBM92 protocol in the asymptotic limit is given by 
\begin{align}
\begin{split}
&R \gtrsim
(Q_Z - \alpha (Q_{\bot_A, Z_B} + Q_{Z_A, \bot_B})) \\ 
&~~~~~~~~~~~~~~~~~~~~~~~~~\left(1-h\left(\frac{(\bar{p}_Z^2 \gamma/\bar{p}_X^2)E_X}{Q_Z - \alpha (Q_{\bot_A, Z_B} + Q_{Z_A, \bot_B})}\right)  \right) - Q_Z f_{\rm EC}. 
\end{split}
\label{finalkeyrate}
\end{align}
 \label{thm:1}
\end{thm}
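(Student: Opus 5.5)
We combine the three lemmas through the expression for $Q_Z f_{\rm PA}$ in Eq.~(\ref{PAamount}), which is then inserted into Eq.~(\ref{ki-re-}). Write $M \coloneqq \sum_{n_A\ge 2 \lor n_B \ge 2} Q_Z^{(n_A,n_B)}$, so that $q_Z = Q_Z - M$ by Eq.~(\ref{qz}). Also abbreviate $\beta \coloneqq \bar{p}_Z^2\gamma/\bar{p}_X^2$ and $\tilde{q}_Z \coloneqq Q_Z - \alpha(Q_{\bot_A,Z_B}+Q_{Z_A,\bot_B})$.

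\emph{Step 1.} Apply Lemma~\ref{lemmasingle} to replace $E_{\rm ph}^{(1,1)}$ in the last term of Eq.~(\ref{PAamount}) by $\beta E_X^{(1,1)}$. This gives
\[
Q_Z f_{\rm PA} \sim M + Q_Z^{(0,0)}+Q_Z^{(0,1)}+Q_Z^{(1,0)} + Q_Z^{(1,1)} h\bigl(\beta E_X^{(1,1)}/Q_Z^{(1,1)}\bigr),
\]
using continuity of $h$ to pass the asymptotic equality through it. Here we cap the argument at $1/2$: formally, the phase-error term may be replaced by $h(\min\{x,1/2\})$ without loss, since $h\le 1$.

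\emph{Step 2.} Bound the non-multi-photon part with Lemma~\ref{lemmaconcavity}, which yields $Q_Z f_{\rm PA} \lesssim M + q_Z h(\beta E_X/q_Z)$. Lemma~\ref{lemmaconcavity} requires $\beta E_X/q_Z\le 1/2$. By Lemma~\ref{lemmamulti}, $M\lesssim \alpha(Q_{\bot_A,Z_B}+Q_{Z_A,\bot_B})$, so $q_Z \gtrsim \tilde q_Z$, and therefore $\beta E_X/q_Z \lesssim \beta E_X/\tilde q_Z \le 1/2$ by hypothesis (\ref{pbarineq}). Together with $\bar p_Z\ge\bar p_X$, the hypotheses of Lemma~\ref{lemmaconcavity} hold asymptotically. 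Boundary cases, where equality holds only up to vanishing terms, are handled by continuity.

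\emph{Step 3.} Substitute into Eq.~(\ref{ki-re-}):
\[
R \gtrsim Q_Z - M - q_Z h(\beta E_X/q_Z) - Q_Z f_{\rm EC} = q_Z\bigl(1-h(\beta E_X/q_Z)\bigr) - Q_Z f_{\rm EC}.
\]
The key step is to show that $g(q)\coloneqq q(1-h(c/q))$, with $c=\beta E_X$, is non-decreasing in $q$ on the region $c/q\le 1/2$. Its derivative is
\[
g'(q) = 1 - h(x) + x h'(x), \qquad x = c/q,
\]
and $1-h(x)+xh'(x) = 1+\log_2(1-x) \ge 0$ for $x\le 1/2$. Since $q_Z\gtrsim\tilde q_Z$ and both lie in this region, it follows that $g(q_Z)\gtrsim g(\tilde q_Z)$, which is exactly Eq.~(\ref{finalkeyrate}).

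The main obstacle is Step 2 together with the monotonicity argument. The issue is that the substitution $q_Z\to\tilde q_Z$ occurs inside the entropy, so it must be justified with the correct direction and under the $1/2$ constraint. This is why the theorem's hypothesis is phrased in terms of $\tilde q_Z$ rather than $q_Z$. I would verify $1+\log_2(1-x)\ge0$ explicitly, since this inequality is what makes it legitimate to shrink $q_Z$.
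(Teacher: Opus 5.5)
Your proposal is correct and follows the paper's proof: Lemma~\ref{lemmasingle} is applied in Eq.~(\ref{PAamount}), then Lemma~\ref{lemmaconcavity} gives Eq.~(\ref{Rqz}), and then Lemma~\ref{lemmamulti} lower-bounds $q_Z$ both in the hypothesis of Lemma~\ref{lemmaconcavity} and in the rate. Your explicit check that $q\mapsto q\bigl(1-h(c/q)\bigr)$ is non-decreasing for $c/q\le 1/2$ (derivative $1+\log_2(1-c/q)\ge 0$) is correct, and it makes rigorous a step the paper takes without comment when it substitutes Eq.~(\ref{qZlower}) into Eq.~(\ref{Rqz}).
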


\begin{proof}
Applying Lemma \ref{lemmasingle} to Eq.~(\ref{PAamount}), we obtain 
\begin{align}
\bs
Q_Z f_{\rm PA} \sim \sum_{\substack{n_A, n_B \\ n_A \ge 2 \lor n_B \ge 2}} Q_Z^{(n_A, n_B)} + Q_Z^{(0,0)} + Q_Z^{(0,1)} + Q_Z^{(1,0)} \\
~~~~~~~~~~~~~~~~~~~~~~~~~~~~~~~~~~~~~+Q_Z^{(1,1)} h\left(\frac{{\bar{p}_Z}^2 \gamma}{{\bar{p}_X}^2} \frac{E_X^{(1,1)}}{Q_Z^{(1,1)}}\right) 
\label{PAbound}
\es
\end{align}
Applying Lemma \ref{lemmaconcavity} to Eq.~(\ref{PAbound}), we further obtain 
\be
\begin{split}
Q_Z f_{\rm PA} &\lesssim \sum_{\substack{n_A, n_B \\ n_A \ge 2 \lor n_B \ge 2}} Q_Z^{(n_A, n_B)} + q_Z \left(\frac{\bar{p}_Z^2 \gamma}{\bar{p}_X^2} \frac{E_X}{q_Z} \right) \\ 
&= Q_Z - q_Z + q_Z h \left(\frac{\bar{p}_Z^2 \gamma}{\bar{p}_X^2} \frac{E_X}{q_Z} \right),
\end{split}
\ee
if $\bar{p}_Z \geq \bar{p}_X$ and 
\be
\frac{\bar{p}_Z^2 \gamma}{\bar{p}_X^2} \frac{E_X}{q_Z} \leq 1/2 \label{assulemma2}
\ee
are satisfied. 
The secure key rate per round is then given by 
\be
R = Q_Z(1 - f_{\rm PA} - f_{\rm EC}) \gtrsim q_Z \left(1- h \left( \frac{\bar{p}_Z^2 \gamma}{\bar{p}_X^2} \frac{E_X}{q_Z} \right) \right) - Q_Z f_{\rm EC}. 
\label{Rqz}
\ee
From Eq.~(\ref{qz}) and Lemma \ref{lemmamulti}, we have 
\be
q_Z = Q_Z - \sum_{\substack{n_A, n_B \\ n_A \ge 2 \lor n_B \ge 2}} Q_Z^{(n_A, n_B)} \gtrsim Q_Z - \alpha( Q_{\bot_A, Z_B} + Q_{Z_A, \bot_B}). \label{qZlower}
\ee
From this bound, Eq.~(\ref{assulemma2}) is satisfied if 
\be
\frac{{\bar{p}_Z}^2 \gamma}{{\bar{p}_X}^2} \frac{E_X}{Q_Z - \alpha( Q_{\bot_A, Z_B} + Q_{Z_A, \bot_B})} \leq \frac{1}{2} \label{keycondition}
\ee 
holds. 
Applying the bound Eq.~(\ref{qZlower}) to Eq.~(\ref{Rqz}), the secure key rate is lower bounded by  
\begin{align}
\begin{split}
&R \gtrsim
(Q_Z - \alpha (Q_{\bot_A, Z_B} + Q_{Z_A, \bot_B})) \\ 
&~~~~~~~~~~~~~~~~~~~~~~~~~\left(1-h\left(\frac{(\bar{p}_Z^2 \gamma/\bar{p}_X^2)E_X}{Q_Z - \alpha (Q_{\bot_A, Z_B} + Q_{Z_A, \bot_B})}\right)  \right) - Q_Z f_{\rm EC}
\end{split}
\end{align}
if Eq.~(\ref{keycondition}) is satisfied. 
\end{proof}

\subsection{Numerical analysis}
Here, we present the results of numerical simulation of the key rate given in Theorem~\ref{thm:1}. 
We show the physical models and derive analytical expressions of the observed quantities, and then present the numerical analysis. 

\begin{figure}[t]
 \centering
 \includegraphics[keepaspectratio, scale=0.4]
      {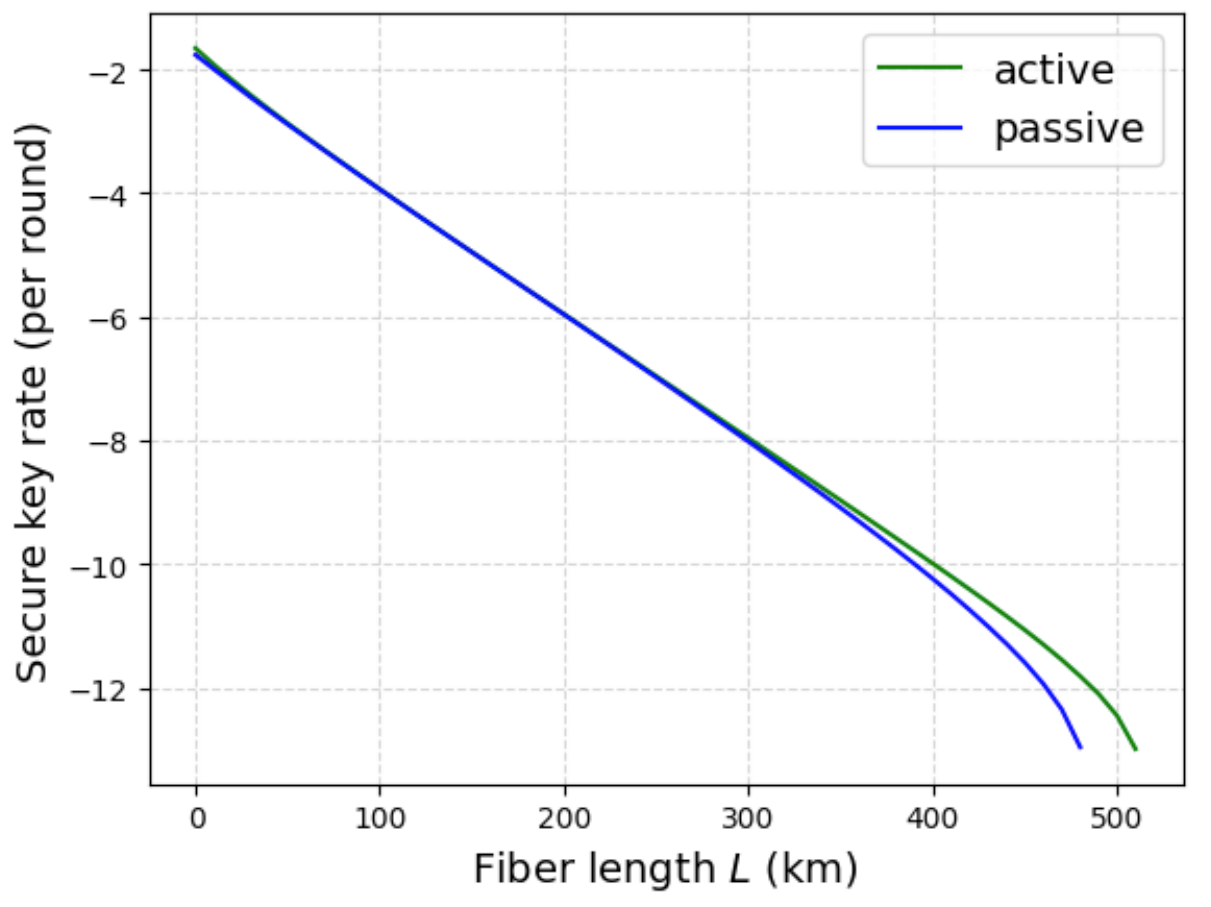}
 \caption{Secure key rate per round as a function of the fiber length $L$ between Alice and Bob. The channel transmittance corresponding to source-Alice (or source-Bob) is modeled as $\eta_{\rm ch}=10^{-l/50}$, where the source is placed midway between Alice and Bob ($L=2l$), and the overall transmittance is $\eta=\eta_{\rm ch}\eta_{\rm det}^Z$. The detector efficiencies are $\eta_{\rm det}^Z=\eta_{\rm det}^X=0.7$, the dark count probability is $d=10^{-7}$, the detector imperfection and misalignment error rate is $e_d=0.03$, and the error-correction efficiency is 1.16. We optimize the signal intensity $\mu$ and the basis choice probability $p_Z$ at each distance. The blue curve shows the secure key rate $R$ of the proposed protocol with $r=\eta_X/\eta_Z=1$, while the green curve shows that of the active-BBM92 protocol for comparison. The vertical axis is shown on a base-10 logarithmic scale.}
 \label{Fig:keyrate}
\end{figure}

\subsubsection{Physical models and expressions for observed quantities}
One of the most widely adopted models for entanglement-based QKD is the model described in Ref.~\cite{2007Ma}, which assumes a parametric down-conversion (PDC) source pumped by a pulsed laser.
Although we also adopt this model, we modify the expressions for the observed parameters to account for the passive measurement setup rather than the active one.
This enables a direct comparison of the key rates of the passive and active protocols as a function of the communication distance.
The details are provided in Appendices~\ref{appe: QW} and \ref{appe:EW}.

According to \cite{2007Ma}, the state emitted from a type-II PDC source is written as 
\begin{align}
\ket{\Psi}_{AB} &= \sum_{n=0}^{\infty} \sqrt{P(n)} \ket{\Phi_n}_{AB},\\
 \ket{\Phi_n}_{AB} &= \frac{1}{\sqrt{n+1}} \sum_{m=0}^{n} (-1)^m \ket{n-m,m}_A \ket{m,n-m}_B. \label{Phi_n}
\end{align}
Here, $P(n)$ is the probability that $n$-photon pair is generated, which is characterized with a parameter $\lambda$ as 
\be
P(n) = \frac{(n+1)\lambda^{n}}{(1+\lambda)^{n+2}},
\ee
where $\mu = 2 \lambda$ is the average photon number of photon pairs generated by one pump pulse. 
The normalized state $\ket{n-m, m}_A$ denotes that $n-m$ and $m$ photons have horizontal and vertical polarizations, respectively. 
As shown in Appendix~\ref{Appe:basisrotation}, the state $\ket{\Phi_n}_{AB}$ retains the same form when expressed in the diagonal polarization basis~\cite{2012PanRMP}:
\be
\label{Phiequiv}
\ket{\Phi_n}_{AB} =
\frac{(-1)^n}{\sqrt{n+1}}
\sum_{m=0}^{n}
(-1)^m
\ket{n-m,m}_{A_D}
\ket{m,n-m}_{B_D},
\ee
where the state $\ket{n-m,m}_{A_D}$ denotes that $n-m$ and $m$ photons have diagonal and anti-diagonal polarizations, respectively.
Since the entanglement state $\ket{\Psi}_{AB}$ is anti-correlated, the bit flip operation over all bits is necessary by Alice or Bob. 

Another physical assumption we adopt is the `source-in-the-middle' scenario, that is, the channel transmittances $\eta_A$ and $\eta_B$ corresponding to the source-Alice and source-Bob, respectively, equal $\eta$. 

Using the models described above, in Appendix \ref{appe: QW} and \ref{appe:EW}, we derive closed-form expressions for the expected values $\bar{Q}_Z$, $\bar{Q}_X$, $\bar{Q}_{Z_A, \bot_B}$, $\bar{Q}_{\bot_A, Z_B}$, $\bar{E}_X$ and $\bar{e}_Z$ of the observed quantities $Q_Z$, $Q_X$, $Q_{Z_A, \bot_B}$, $Q_{\bot_A, Z_B}$, $E_X$ and $e_Z$, respectively, as a function of $\eta$. 
These closed-form expressions uniquely determine the key rate without assuming the cut-off for the photon number in the simulation.

\subsubsection{Simulation results}
We present the results of numerical calculation of the key rate $R$ per round given by Eq. (\ref{finalkeyrate}). 
In the simulation shown in Fig.~\ref{Fig:keyrate}, we assume detectors' quantum efficiency $\eta_{\rm det}^Z = \eta_{\rm det}^X = 0.7$ and dark count probability $d = 10^{-7}$, which are achievable with commercial SSPDs \cite{2024Sanari}. 
The channel transmittance of the optical fiber $\eta_{\rm ch}$ connecting the source and Alice (or Bob) is modeled as $\eta_{\rm ch} = 10^{-l/50}$ where $l$ indicates its length (km). The overall transmittance is $\eta = \eta_{\rm ch} \eta_{\rm det}^Z$. 
The error rate due to the detector imperfections and misalignment is $e_d = 0.03$.  
The cost for error correction $f_{\rm EC}(e_Z)$ is $1.16\times h(e_Z)$~\cite{2024Luo}. 
Regarding the non-sampling probability $\gamma$, we assume $\gamma \to 1$ because sampling ratios can be negligibly small in the asymptotic limit of $m_{\rm rep} \to \infty$. 
In Fig.~\ref{Fig:keyrate}, we plot numerical results  for two cases, each optimized over the signal intensity $\mu$ and the basis choice probability $p_Z$ at each distance: (i) secure key rate of the active-BBM92 protocol for comparison (green), (ii) our result $R$ with $r = 1$ (blue). 
The horizontal axis is fiber-length $L$ between Alice and Bob, which equals $2l$ in the `source-in-the-middle' scenario. 
The vertical axis is on a logarithmic scale with base 10.


As shown in Fig.~\ref{Fig:keyrate}, the key rates of the active case and passive case are almost identical except the small gap in the range of long communication distances. 
This gap stems not from looseness our bound but from the sensitivity of the passive protocol to dark counts, which is also discussed for the passive BB84 protocol in \cite{2025Kawakami}. 
Therefore, our proof does not sacrifice the key rate, and consequently, the performance of the passive BBM92 protocol is shown to be comparable to the active protocol, despite the fact that the passive protocol is composed of simple setups.

\section{Security of passive QCKA protocol}
\label{Sec:QCKA}
In this section, we establish the security of the passive QCKA for an arbitrary number of parties. First, we describe the protocol and define the observed quantities used in the security analysis. We then extend the framework developed for the passive BBM92 protocol to the multipartite setting and derive a secure key rate for the passive QCKA.

\subsection{Passive QCKA Protocol}
\label{Sec: QCKAprotocol}
Here, we introduce a polarization-type QCKA protocol, in which $N+1$ legitimate parties select their basis passively, while an untrusted party, Fred, is presumed to generate and distribute GHZ states.
The legitimate parties consist of Alice and $N$ Bobs, where Alice is labeled by $j=0$ and the $j'$-th Bob is labeled by $j=j'$ for $j'=1,\ldots,N$.
Note that when $N=1$, the protocol is identical to the BBM92 protocol in Sec.~\ref{Sec: BBMprotocol}.
Similarly to the BBM92 protocol, each party uses two complementary measurement bases ($Z$ and $X$), where the secret key is extracted from the $Z$ basis and the signal disturbance is monitored using the $X$ basis.
The protocol is described as follows.
\\

\noindent
(1) {\it State Preparation}: 
Fred is presumed to prepare a $(N+1)$-partite (not necessarily perfect) GHZ state and distribute subsystems among legitimate parties. 
\\
(2) {\it Measurement}: 
Alice and Bobs perform the same procedure with the same setup as in Step (2) in Sec.~\ref{Sec: BBMprotocol}, and they obtain a basis label $W_j\in \{Z,X,\bot,\emptyset\}$ and a bit $b_j \in \{0,1\}$ for $j=0,\ldots,N$.
\\
(3) {\it Repetition}: 
Alice, Bobs, and Fred repeat Steps (1) and (2) $m_{\rm rep}$ times. 
\\
(4) {\it Public communication}: 
For each of the $m_{\rm rep}$ rounds, Alice and Bobs publicly announce $\{W_j\}$.
They also disclose $\{b_j\}$ for rounds where $W_j=X$ for all $j=0,\ldots,N$.
For rounds where $W_j=Z$ for all $j=0,\ldots,N$, Alice and Bobs randomly select $s=1$ (sampling) with probability $1-\gamma$ and $s=0$ (non-sampling) with probability $\gamma$.
For rounds with $s=1$, they disclose the corresponding bits $\{b_j\}$.
\\
(5) {\it Parameter estimation}:
Alice and Bobs obtain bit strings of lengths
\begin{equation}
m_Z \coloneqq m\left( \{W_i = Z\}_{i=0}^{N}, s=0\right), 
\end{equation}
which corresponds to the sifted key. 
For the $Z$-basis sampling rounds, they obtain the number of sampled rounds, 
\be
m_Z^{\rm sam} \coloneqq m\left(\{W_i=Z\}_{i=0}^N, s=1\right),
\ee
and the number of bit errors,
\begin{equation}
m_{Z_j}^{\rm sam,err}
\coloneqq
m\left(
\{W_i = Z\}_{i=0}^{N},
\, s=1,
\, b_0 \neq b_j
\right),
\end{equation}
between Alice and the $j$-th Bob. 
For the $X$-basis rounds, they obtain the number of total rounds, 
\be
m_X \coloneqq m\left(\{W_i=X\}_{i=0}^N\right),
\ee
and the number of error rounds,
\begin{equation}
m_{X}^{\rm err}
\coloneqq
m\left(
\{W_i = X\}_{i=0}^{N},
\bigoplus_{i=0}^N b_i = 1
\right), 
\end{equation}
where `$\bigoplus$' denotes addition modulo 2. 
Alice and Bobs also obtain the numbers of rounds
\begin{equation}
m_{\bot_j,Z}
\coloneqq
m\left(
W_j = \bot,
\{W_i = Z\}_{i\neq j}
\right)
\end{equation}
for all $j=0,\ldots,N$, corresponding to the events where a cross click occurs on the $j$-th party and the $Z$ basis is chosen on all remaining parties.
\\
(6) {\it Error correction}: 
Alice generates a syndrome of length $m_Z f_{\rm EC}$ from the sifted key. 
She encrypts this syndrome using a pre-shared secret key with the Bobs and sends it to them~\cite{2006Koashi}. 
The Bobs then correct their keys according to the decrypted syndrome.
\\
(7) {\it Privacy amplification}: 
Alice and Bobs perform privacy amplification by shortening their sifted keys by $m_Z f_{\rm PA}$ bits to obtain the final secret key.
\\

Let us define the following ratios based on the numbers obtained in the above protocol: 
\begin{align}
Q_Z \coloneqq \frac{m_Z}{m_{\rm rep}},~
E_X \coloneqq\frac{m_X^{\rm err}}{m_{\rm rep}}, 
~Q_{\bot_j, Z}\coloneqq \frac{m_{\bot_j, Z}}{m_{\rm rep}}, 
\label{observed}
\end{align}
and 
\be
e_{Z_j} \coloneqq \frac{m_{Z_j}^{\rm sam,err}}{m_Z^{\rm sam}},
\ee
where $j=0,\ldots,N$ for quantities indexed by $j$. By its definition, $e_{Z_0}=0$. 
In this paper, we consider the asymptotic secure key rate in the limit $m_{\rm rep}\to \infty$. 
In this limit, the asymptotic value of $f_{\rm PA}$ is expressed as a function of $Q_Z, E_X$ and $\{Q_{\bot_j, Z}\}$. 
For one-way classical communication, the cost for error correction $f_{\rm EC}$ is given by~\cite{2018Grasselli}
\be
f_{\rm EC}(\{e_{Z_j}\}) = c_{\rm EC}\cdot \max_j h(e_{Z_j}),
\ee
where $c_{\rm EC}~(\geq 1)$ denotes the error correction efficiency factor. 
The key rate $R$ per round is given by
\begin{equation}
R= Q_{Z}\left(1-f_{\rm PA}( Q_Z, E_X , \{Q_{\bot_j, Z}\}) -f_{\rm EC}(\{e_{Z_j}\}) \right).
\end{equation}

\subsection{Security Proof}
\label{Sec:QCKAsecurity}
Here, we generalize the security proof of the passive BBM92 protocol to the passive QCKA protocol with an arbitrary number of parties. We show that the virtual-qubit description and the phase-error analysis developed for BBM92 can be extended to the multipartite GHZ-state setting. This leads to a security proof of the passive QCKA, whose main result is summarized in Theorem~\ref{Theo:QCKA}.

Similarly to the security proof of the BBM92 protocol, our starting point is to introduce an alternative protocol which is equivalent to the actual protocol from the viewpoint of Eve. 
This alternative protocol is obtained by replacing Step (2) of the actual protocol with Step (2') in Sec.~\ref{Sec:alternativeBBM}, with the only modification that the single Bob is replaced by $N$ Bobs. 
In the alternative protocol, photon numbers 
\be
\vec{n}=(n_0, \ldots ,n_N)
\ee
for each party are treated as observed quantities, which allows us to define the following ratios: 
\begin{equation}
\begin{split}
Q_{Z}^{(\vec{n'})} &\coloneqq m\left(\{W_i=Z\}_{i=0}^N, s=0, \vec{n}=\vec{n'}\right) / m_{\rm rep}, \\
E_{X}^{(\vec{n'})} &\coloneqq m\left(\{W_i=X\}_{i=0}^N, \vec{n}=\vec{n'}, \bigoplus_{i=0}^N b_i = 1
\right) / m_{\rm rep}, \\
Q_{\bot_j, Z}^{(\vec{n'})} &\coloneqq m\left(W_j=\bot, \{W_i = Z\}_{i\neq j}, \vec{n}=\vec{n'} \right) / m_{\rm rep} .
\label{alternativeQEQCKA}
\end{split}
\end{equation}

The virtual qubit on system $A$ and the phase error are also defined in a similar manner to the BBM92 protocol as follows:
as a consequence of the relabeling of Alice's system, let
\be
(\ket{H}_0,\ket{V}_0,\ket{D}_0,\ket{\bar{D}}_0) = (\ket{H}_A,\ket{V}_A,\ket{D}_A,\ket{\bar{D}}_A)
\ee
and 
\be
\hat{K}_{0,l} = \hat{K}_{A,l}~~(l=0,1,2,3).
\ee
For Alice's photon number $n_0 \neq 1$, if she obtains an outcome for the case of $W_0=Z$ and $b_0=0~(1)$, she prepares a single photon in the $Z$-basis state $\ket{H}_0~(\ket{V}_0)$ as the virtual qubit on system $A$. 
For $n_0=1$, Alice first applies the filter with success probability $(1-d)^2$ and the CPTP map specified by $\{\hat{K}_{0,l}\}$ to her received single photon, and keeps the output as the virtual qubit. 
For the events with ($\{W_i=Z\}_{i=0}^N, s=0$), where all parties select $Z$ basis and the choice of non-sampling is made, the $Z$-basis measurement $\{\ket{H}\bra{H}_0,  \ket{V}\bra{V}_0\}$ on Alice's virtual qubit is equivalent to the actual process of obtaining a sifted key. The phase error is then defined in the following virtual scenario: suppose that the $X$-basis measurement $\{\ket{D}\bra{D}_0,  \ket{\bar{D}} \bra{\bar{D}}_0 \}$ is performed on 
the virtual qubit for the events where $\{W_i=Z\}_{i=0}^{N}$ to obtain the outcome $\tilde{b}_0$ instead of the $Z$-basis measurement, and the Bobs cooperate and try to guess the outcome by using all of their quantum states. 
By defining $\tilde{b}_B$ as a guess bit by the Bobs, the phase error is then defined as a bit error between $\tilde{b}_0$ and $\tilde{b}_B$.

For the events with $n_0 \neq 1$, the phase error rate is 1/2 because $X$-basis measurement is performed on the prepared $Z$-basis state $\ket{H}_0$ or $\ket{V}_0$ and the Bobs cannot guess the outcome at all.  
In addition, we assume the worst case where the Bobs give up their guess for the outcome when there exists $j\in\{1,\ldots,N\}$ which satisfies $n_j \neq 1$, corresponding to a phase error rate of 1/2 in those cases. 
These imply that in our security proof, the secret key is extracted solely from the events with $\vec{n}=(1,1,\ldots,1)$, while all other events are discarded through privacy amplification. 
For such events with $\vec{n}=(1,1,\ldots,1)$, we define 
\begin{align}
E_{\rm ph}^{(1,..,1)}& \coloneqq m\left(\{W_i=Z\}_{i=0}^{N}, s=0, \vec{n}=(1,1,\ldots,1), \tilde{b}_0 \neq \tilde{b}_B \right) / m_{\rm rep}.
\label{phaseerrorratio}
\end{align}
Then the phase error rate is represented as  $E_{\rm ph}^{((1,..,1)} / Q_Z^{(1,..,1)}$. 
Since all events except those with $\vec{n}= (1,1,\ldots,1)$ have the phase error rate 1/2, the fraction of bits to be sacrificed in privacy amplification is given by 
\begin{align}
\begin{split}
Q_Z f_{\rm PA} &= 
\sum_{\substack{ \vec{n} \\ \vec{n}\neq (1,..,1) }} Q_Z^{(\vec{n})}
+Q_Z^{(1,..,1)} h\left( \frac{E_{\rm ph}^{(1,..,1)}}{Q_Z^{(1,..,1)}}\right) \\
& = \sum_{\substack{ \vec{n} \\ \bigvee_{i=0}^{N}(n_i\geq2) }} Q_Z^{(\vec{n})} 
+ \sum_{\substack{\vec{n} \\ \vec{n} \in \{0,1\}^{N+1},~\vec{n}\neq (1,..,1) }} Q_Z^{(\vec{n})}\\
&~~~~~~~~~~~~~~~~~~~~~~~~~~~~~~~~~~+Q_Z^{(1,..,1)} h\left( \frac{E_{\rm ph}^{(1,..,1)}}{Q_Z^{(1,..,1)}}\right). \label{PAamountQCKA}
\end{split}
\end{align}

Similarly to Sec.~\ref{lemmaproof}, we derive an upper bound on $f_{\rm PA}$ and obtain a lower bound on the secure key rate.
The result is summarized in Theorem~\ref{Theo:QCKA}.
The proof follows the same structure as that of Theorem~\ref{thm:1}. Specifically, it is based on three lemmas, which generalize Lemmas~\ref{lemmasingle}, \ref{lemmamulti}, and \ref{lemmaconcavity}, respectively. The theorem is then established by combining these lemmas.

We begin with the following lemma, which provides an upper bound on the phase error ratio for single-photon events and generalizes Lemma~\ref{lemmasingle}.

\begin{lemma}
\label{lemmasingleQCKA}
The relation
\be
E^{(1,..,1)}_{\rm ph} \sim \frac{{\bar{p}_Z}^{N+1} \gamma}{{\bar{p}_X}^{N+1}} E^{(1,..,1)}_X
\ee
holds.
\end{lemma}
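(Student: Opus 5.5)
The plan follows the proof of Lemma~\ref{lemmasingle}, with the two-party tensor products replaced by $(N+1)$-fold ones. For each party $j$ with $n_j=1$, the single-photon POVM analysis of Appendix~\ref{singlePOVM} applies unchanged. Party $j$'s actual measurement is therefore equivalent to three steps applied to the received photon: the filter with success probability $(1-d)^2$, then the depolarizing-type CPTP map with Kraus operators $\{\hat{K}_{j,l}\}_{l=0}^3$ of Eq.~(\ref{Krauses}), then an active basis choice ($Z$ with probability $\bar{p}_Z$, $X$ with probability $\bar{p}_X$) followed by the ideal measurement in that basis. Accordingly, I define for every $j$
\be
\hat{G}_{X_j b}^{(1)} \coloneqq \bar{p}_X (1-d)^2 \sum_{l=0}^3 \hat{K}_{j,l}^\dagger \ket{\phi_b}\bra{\phi_b}_j \hat{K}_{j,l},\qquad
\hat{G}_{{\rm ph}_j b}^{(1)} \coloneqq \bar{p}_Z (1-d)^2 \sum_{l=0}^3 \hat{K}_{j,l}^\dagger \ket{\phi_b}\bra{\phi_b}_j \hat{K}_{j,l},
\ee
with $\ket{\phi_0}=\ket{D}$ and $\ket{\phi_1}=\ket{\bar{D}}$. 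This gives $\hat{G}_{{\rm ph}_j b}^{(1)} = (\bar{p}_Z/\bar{p}_X)\,\hat{G}_{X_j b}^{(1)}$ for every $j$ and $b$.

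The central step is choosing the Bobs' guess $\tilde{b}_B$ so that the phase-error POVM has the same parity structure as the $X$-error POVM. When all $n_j=1$, I let each Bob measure his filtered, $\hat{K}_{j,l}$-processed photon in the $X$ basis after the (virtual) event $W_j=Z$. The Bobs then cooperatively set $\tilde{b}_B \coloneqq \bigoplus_{j=1}^N \tilde{b}_j$, which for the GHZ state is the natural prediction of Alice's $X$ outcome. The phase-error event is $\tilde{b}_0\neq \tilde{b}_B$, which is the same as $\bigoplus_{i=0}^N \tilde{b}_i=1$. Hence
\be
\hat{E}_{\rm ph}^{(1,..,1)} = \gamma \sum_{\vec{b}:\,\bigoplus_i b_i=1}\ \bigotimes_{i=0}^N \hat{G}_{{\rm ph}_i b_i}^{(1)},\qquad
\hat{E}_{X}^{(1,..,1)} = \sum_{\vec{b}:\,\bigoplus_i b_i=1}\ \bigotimes_{i=0}^N \hat{G}_{X_i b_i}^{(1)}.
\ee
Here $\gamma$ is the non-sampling probability, and the factor $\bar{p}_Z^{N+1}$ comes from the virtual $Z$ choices of all $N+1$ parties. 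Substituting the per-party proportionality gives $\hat{E}_{\rm ph}^{(1,..,1)} = (\bar{p}_Z^{N+1}\gamma/\bar{p}_X^{N+1})\,\hat{E}_X^{(1,..,1)}$ as an operator identity.

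I expect the main obstacle to be justifying these definitions, not the algebra. I have to check two things. First, Alice's virtual qubit for $n_0=1$ (filter plus $\hat{K}_{0,l}$) must reproduce her actual sifted key under a $Z$ measurement. Second, each Bob's filter and $Z$-basis selection must reproduce the sifting event $W_j=Z$ with the correct weight. Then the $Z$-events that define $E^{(1,..,1)}_{\rm ph}$ are exactly those counted in $Q_Z^{(1,..,1)}$, and the Bobs' replacement of their $Z$ measurement by an $X$ measurement does not change the statistics of the sifting event. Both points follow because each Bob's outcome is a coarse-graining after the basis choice, and the filter and Kraus map act before it.

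Finally, the operator identity yields the claimed asymptotic relation. Apply Azuma's inequality (Appendix~\ref{Azuma}) to the two POVM elements exactly as in the last step of Lemma~\ref{lemmasingle}. This gives $E_{\rm ph}^{(1,..,1)} \sim (\bar{p}_Z^{N+1}\gamma/\bar{p}_X^{N+1})\,E_X^{(1,..,1)}$ in the limit $m_{\rm rep}\to\infty$.
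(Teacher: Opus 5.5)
Your proposal is correct and follows essentially the same route as the paper. It reduces each party's single-photon measurement to the filter, the Kraus map $\{\hat{K}_{j,l}\}$ and an active basis choice, and defines the Bobs' guess as the parity $\bigoplus_{j=1}^N \tilde{b}_j$, so that $\hat{E}_{\rm ph}^{(1,..,1)}$ and $\hat{E}_X^{(1,..,1)}$ are sums over odd-parity strings of $(N+1)$-fold tensor products differing exactly by the factor $\gamma(\bar{p}_Z/\bar{p}_X)^{N+1}$, and then applies Azuma's inequality.
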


\begin{proof}
As already mentioned, the phase error is defined as the discrepancy between Alice's virtual $X$-basis measurement outcome $\tilde{b}_0$ and the Bobs' guess $\tilde{b}_B$. 
Here, we describe how the Bobs determine $\tilde{b}_B$ in order to evaluate the phase error rate.
On the system of $j$-th Bob, which is the input system to the beam splitter in Fig.~\ref{setup2}, we define  $\ket{H}_j$, $\ket{V}_j$, $\ket{D}_j$ and $\ket{\bar{D}}_j$ as single-photon states with horizontal, vertical, diagonal and anti-diagonal polarizations, respectively. 
Then, the same discussion in the proof of Lemma~\ref{lemmasingle} holds for each Bob. Namely, for events with $n_j=1$, the actual measurement by $j$-th Bob is equivalent to the process where the filtering operation with success probability $(1-d)^2$ and CPTP map specified by $\{\hat{K}_{j,l}\}_{l=0,1,2,3}$ in Eq.~(\ref{Krauses}) are applied to the received single photon, followed by the measurement $\{\ket{H}\bra{H}_j, \ket{V}\bra{V}_j\}$ or $\{\ket{D}\bra{D}_j, \ket{\bar{D}}\bra{\bar{D}}_j\}$ according to his active basis choice $Z$ or $X$, respectively. 

Here, we define the Bobs' guess $\tilde{b}_B$ for events with $\{n_i=1\}_{i=1}^N$ as
\begin{equation}
\tilde{b}_B
=
\bigoplus_{i=1}^N \tilde{b}_i,
\end{equation}
where $\tilde{b}_i$ is the outcome of the virtual $X$-basis measurement performed by the $i$-th Bob on the single photon.
Accordingly, the phase error for events with $\vec{n}=(1,1,\ldots,1)$ is described as follows.
Alice and Bobs first independently apply a filter with success probability $(1-d)^2$, followed by the CPTP maps $\{\hat{K}_{i,l}\}_{i=0,\ldots,N,\ l=0,1,2,3}$.
They then independently choose the $Z$ basis with probability $\bar{p}_Z$ and select the events as non-sampling with probability $\gamma$.
For the resulting events, the phase error is defined as the discrepancy between Alice's $X$-basis measurement outcome $\tilde{b}_0$ and the modulo-2 sum $\tilde{b}_B$ of the Bobs' $X$-basis measurement outcomes $\{\tilde{b}_i\}_{i=1}^{N}$ on the single photons.
Equivalently, the phase error event is given by
\begin{equation}
\tilde{b}_0 \oplus \tilde{b}_B = \bigoplus_{i=0}^N \tilde{b}_i = 1.
\end{equation}

Then, the POVM element $\hat{E}_{\rm ph}^{(1,..,1)} $ corresponding to obtain the phase error event with ($\{W_i = Z\}_{i=0}^{N}, s= 0, \vec{n}=(1,1,\ldots,1), \tilde{b}_0 \neq \tilde{b}_B$) is represented as 
\be
\hat{E}_{\rm ph}^{(1,..,1)}  =   \sum_{\substack{
\tilde{b}_0,\ldots,\tilde{b}_N\in\{0,1\}\\
\bigoplus_{i=0}^N \tilde b_i = 1
}} \gamma \left( \bigotimes_{i=0}^N   \hat{G}_{{\rm ph}_i \tilde{b}_i}^{(1)} \right),
\label{POVMEphQCKA}
\ee
where 
\be
\begin{split}
&\hat{G}_{{\rm ph}_i0}^{(1)} \coloneqq  \bar{p}_Z (1-d)^2 \sum\limits_{l=0}^3  \hat{K}_{i,l}^{\dagger} \ket{D}\bra{D}_i \hat{K}_{i,l}, \\
&\hat{G}_{{\rm ph}_i1}^{(1)} \coloneqq  \bar{p}_Z (1-d)^2 \sum\limits_{l=0}^3  \hat{K}_{i,l}^{\dagger} \ket{\bar{D}}\bra{\bar{D}}_i \hat{K}_{i,l}. \\
\end{split}
\ee
On the other hand, the procedure for obtaining the $X$-basis error event with ($\{W_i = X\}_{i=0}^{N}, \vec{n}=(1,..,1), \bigoplus_{i=0}^{N} b_i =1$) is equivalent to that for the phase error, except that the each party chooses the $X$ basis instead of the $Z$ basis and no sampling procedure is required. 
Specifically, the POVM element $\hat{E}_X^{(1,..,1)} $ corresponding to the $X$-basis error event is represented as  
\be
\hat{E}_{X}^{(1,..,1)}  =   \sum_{\substack{
b_0,\ldots,b_N\in\{0,1\}\\
\bigoplus_{i=0}^N b_i = 1
}}  \left( \bigotimes_{i=0}^N   \hat{G}_{X_i b_i}^{(1)} \right),
\label{POVMEXQCKA}
\ee
where we defined 
\be
\begin{split}
&\hat{G}_{X_i0}^{(1)} \coloneqq  \bar{p}_X (1-d)^2 \sum\limits_{l=0}^3  \hat{K}_{i,l}^{\dagger} \ket{D}\bra{D}_i \hat{K}_{i,l}, \\
&\hat{G}_{X_i1}^{(1)} \coloneqq  \bar{p}_X (1-d)^2 \sum\limits_{l=0}^3  \hat{K}_{i,l}^{\dagger} \ket{\bar{D}}\bra{\bar{D}}_i \hat{K}_{i,l}.
\end{split}
\ee
Eqs.~(\ref{POVMEphQCKA}) and (\ref{POVMEXQCKA}) lead to the ralation
\be
\hat{E}^{(1,..,1)}_{\rm ph} = \frac{{\bar{p}_Z}^{N+1} \gamma}{{\bar{p}_X}^{N+1}} \hat{E}^{(1,..,1)}_X.
\ee
By applying Azuma's inequality (see Appendix \ref{Azuma}), we obtain
\be
E^{(1,..,1)}_{\rm ph} \sim \frac{{\bar{p}_Z}^{N+1} \gamma}{{\bar{p}_X}^{N+1}} E^{(1,..,1)}_X
\ee
in the asymptotic limit. 
\end{proof}

Next, the following lemma gives an upper bound on the contribution from multi-photon events and generalizes Lemma \ref{lemmamulti}. 

\begin{lemma}
\label{lemmamultiQCKA}
The relation
\be
\sum_{\substack{ \vec{n}\\ \bigvee_{i=0}^{N}(n_i\geq2) }} Q_Z^{(\vec{n})} \lesssim \alpha \sum\limits_{i=0}^{N} Q_{\bot_i, Z}
\label{templemmamultiQCKA}
\ee
holds. 
\end{lemma}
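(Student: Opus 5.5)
The plan is to mimic the proof of Lemma~\ref{lemmamulti} party by party. First I will use a union bound over which party has at least two photons:
\be
\sum_{\substack{ \vec{n}\\ \bigvee_{i=0}^{N}(n_i\geq2) }} Q_Z^{(\vec{n})} \leq \sum_{j=0}^{N} Q_Z^{(n_j\geq 2)},\qquad Q_Z^{(n_j\geq 2)} \coloneqq \sum_{\substack{\vec{n} \\ n_j\geq 2}} Q_Z^{(\vec{n})} .
\ee
This holds because every $\vec{n}$ on the left appears at least once on the right. It then suffices to show $Q_Z^{(n_j\geq 2)} \lesssim \alpha Q_{\bot_j,Z}$ for each $j$.

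For fixed $j$, I will define for each $\vec{n}$ the operators
\be
\hat{Q}_Z^{(\vec{n})} \coloneqq \gamma \bigotimes_{i=0}^N \hat{F'}_{Z_i}^{(n_i)},\qquad \hat{Q}_{\bot_j,Z}^{(\vec{n})} \coloneqq \hat{F'}_{\bot_j}^{(n_j)}\otimes\bigotimes_{i\neq j}\hat{F'}_{Z_i}^{(n_i)} .
\ee
Here the single-party elements are those of Eqs.~(\ref{FdashA}) and (\ref{FdashA0}), applied to party $i$. This is legitimate because each party uses the same setup of Fig.~\ref{setup2}. The two operators differ only in the $j$-th tensor factor. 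For $n_j\geq1$ that factor is proportional to $\hat{\mathbbm{1}}_j^{(n_j)}$ in both cases, with coefficients $(1-d)^2\bar{p}_Z^{n_j}$ and $1-(1-d)^2(\bar{p}_Z^{n_j}+\bar{p}_X^{n_j})$.

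Exactly as in Eq.~(\ref{operatorrelation}), this gives
\be
\hat{Q}_Z^{(\vec{n})} \leq \frac{\bar{p}_Z^{n_j}\gamma}{1-\bar{p}_Z^{n_j}-\bar{p}_X^{n_j}}\,\hat{Q}_{\bot_j,Z}^{(\vec{n})} .
\ee
The inequality uses the fact that the ratio is nondecreasing in $(1-d)^2\le1$. This is valid operator-wise because the remaining factors are positive and common to both sides. Next I need the scalar bound
\be
\frac{\bar{p}_Z^{n}}{1-\bar{p}_Z^{n}-\bar{p}_X^{n}}\leq\frac{\bar{p}_Z^{2}}{1-\bar{p}_Z^{2}-\bar{p}_X^{2}} \quad \text{for } n\geq2,
\ee
which is already used implicitly in Lemma~\ref{lemmamulti}. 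This gives $\hat{Q}_Z^{(\vec{n})}\leq\alpha\hat{Q}_{\bot_j,Z}^{(\vec{n})}$ for all $\vec{n}$ with $n_j\geq2$, and $\alpha$ does not depend on $N$. To justify the scalar bound, I would write $1-\bar{p}_Z^n-\bar{p}_X^n\geq 1-\bar{p}_Z^2-\bar{p}_X^2$. With $\bar{p}_Z^n\le\bar{p}_Z^2$ this gives the claim directly.

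Summing over $n_j\geq2$ and then enlarging the right-hand side to all $\vec{n}$ (all terms are positive) gives $\hat{Q}_Z^{(n_j\geq2)}\leq\alpha\sum_{\vec{n}}\hat{Q}_{\bot_j,Z}^{(\vec{n})}$. The right-hand sum is the POVM element of the observed event $(W_j=\bot,\{W_i=Z\}_{i\neq j})$. Azuma's inequality (Appendix~\ref{Azuma}) then yields $Q_Z^{(n_j\geq2)}\lesssim\alpha Q_{\bot_j,Z}$, and summing over $j$ proves Eq.~(\ref{templemmamultiQCKA}).

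I expect the main point needing care to be the factorization of the POVMs across parties in the alternative protocol. Each party's measurement acts only on its own system, and after the QND photon-number measurement it is block diagonal. Hence the joint elements are tensor products, and the multi-party case reduces to the single-factor comparison already made in Lemma~\ref{lemmamulti}.
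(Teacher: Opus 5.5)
Your proposal is correct and follows essentially the same route as the paper's proof. Both use a union bound over the party $j$ with $n_j\geq 2$, then the operator comparison $\hat{Q}_Z^{(\vec{n})}\leq\alpha\hat{Q}_{\bot_j,Z}^{(\vec{n})}$ from the proportional $j$-th tensor factors, enlarging the sum to all $\vec{n}$, and finally Azuma's inequality; your explicit arguments for monotonicity in $(1-d)^2$ and for the scalar bound at $n\geq 2$ fill in steps the paper leaves implicit.
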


\begin{proof}

The left-hand side in Eq.~(\ref{templemmamultiQCKA}) can be upper bounded as 
\be
\sum_{\substack{ \vec{n}\\ \bigvee_{i=0}^{N}(n_i\geq2) }} Q_Z^{(\vec{n})}
 \leq \sum\limits_{i=0}^{N}  Q_Z^{(n_i\geq 2)} ,
\label{nAnBupperQCKA}
\ee
where we defined 
\be
Q_Z^{(n_i\geq 2)} \coloneqq \sum_{\substack{\vec{n}\\ n_i\geq 2}} Q_Z^{(\vec{n})}.
\ee
Similarly to Eq.~(\ref{FdashA}), for each $j \in\{0,\ldots,N\}$ and $n_j \geq 1$, the POVM elements corresponding to the outcomes $W_j = Z$, $W_j = X$ and $W_j = \bot$ are given by 
\begin{align}
\bs
&\hat{F'}_{Z_j}^{(n_j)} = (1-d)^2 {\bar{p}_Z}^{n_j} \hat{\mathbbm{1}}_j^{(n_j)},~~~\hat{F'}_{X_j}^{(n_j)} = (1-d)^2 {\bar{p}_X}^{n_j} \hat{\mathbbm{1}}_j^{(n_j)}, \\
&\hat{F'}_{\bot_j}^{(n_j)} = \left( 1 - (1-d)^2 (\bar{p}_Z^{n_j} + \bar{p}_X^{n_j}) \right) \hat{\mathbbm{1}}_j^{(n_j)},
\label{FdashAQCKA}
\es
\end{align}
respectively, where $\hat{\mathbbm{1}}_j^{(n_j)}$ denotes the projection operator on the $n_j$-photon subspace associated with the $j$-th party. 
For $n_j = 0$, the POVM elements are given by
\be
\begin{split}
\hat{F'}_{Z_j}^{(0)} &= \hat{F'}_{X_j}^{(0)} = (1-d)^2 (1-(1-d)^2) \ket{\rm vac}\bra{\rm vac}_j,\\
\hat{F'}_{\bot_j}^{(0)} &=  (1 - (1-d)^2)^2  \ket{\rm vac}\bra{\rm vac}_j, 
\label{FdashA0QCKA}
\end{split}
\ee
where $\ket{\rm vac}_j$ is the vacuum state on system $j$. 

Here, for arbitrary $\vec n$ and each $j\in\{0,\ldots,N\}$, we define the following operators corresponding to the sifted-key and cross-click events:
\begin{equation}
\begin{split}
\hat{Q}_Z^{(\vec n)}
&\coloneqq
\gamma \bigotimes_{i=0}^{N} \hat{F'}_{Z_i}^{(n_i)},
\\
\hat{Q}_{\bot_j,Z}^{(\vec n)}
&\coloneqq
\hat{F'}_{\bot_j}^{(n_j)} \bigotimes_{i\neq j} \hat{F'}_{Z_i}^{(n_i)},
\end{split}
\label{hatQnQCKA}
\end{equation}
where $\gamma$ denotes the non-sampling probability.
By using the expressions in Eq.~(\ref{FdashAQCKA}), we obtain 
\be
\begin{split}
\hat{Q}_Z^{(\vec{n})} & =  \frac{ \bar{p}_Z^{n_j} (1-d)^2 \gamma  }
{1 - (\bar{p}_Z^{n_j} + \bar{p}_X^{n_j})(1-d)^2}  \hat{Q}_{\bot_j, Z}^{(\vec{n})} \\
& \leq \frac{ \bar{p}_Z^{n_j} \gamma  }
{1 - (\bar{p}_Z^{n_j} + \bar{p}_X^{n_j})}  \hat{Q}_{\bot_j, Z}^{(\vec{n})} 
\label{operatorrelationQCKA}
\end{split}
\ee
for $n_j \geq 1$. 
For $n_j \geq 2$, 
the relation
\be
\hat{Q}_Z^{(\vec{n})} \leq \alpha \hat{Q}_{\bot_j, Z}^{(\vec{n})} 
\ee
is obtained, where $\alpha$ is defined in Eq.~(\ref{alpha}). 
Therefore, 
\be
\sum_{\substack{\vec{n}\\ n_j\geq 2}} \hat{Q}_Z^{(\vec{n})} 
\leq \sum_{\substack{\vec{n}\\ n_j\geq 2}} \alpha \hat{Q}_{\bot_j, Z}^{(\vec{n})}
\leq \sum_{\substack{\vec{n}}} \alpha \hat{Q}_{\bot_j, Z}^{(\vec{n})}
\label{QZnAupperQCKA}
\ee
holds. 
By defining 
\be
\hat{Q}_Z^{(n_j \geq 2)} \coloneqq \sum_{\substack{\vec{n} \\ n_j \ge 2}} \hat{Q}_Z^{(\vec{n})} ,
~~ \hat{Q}_{\bot_j, Z} \coloneqq \sum_{\substack{\vec{n}}} \hat{Q}_{\bot_j, Z}^{(\vec{n})}
\ee
for all $j\in \{0,\ldots,N\}$, 
Eq.~(\ref{QZnAupperQCKA}) is 
\be
\hat{Q}_Z^{(n_j \geq 2)} \leq \alpha \hat{Q}_{\bot_j, Z}.
\label{operelaQCKA}
\ee
By applying Azuma's inequality (see Appendix \ref{Azuma}), we obtain 
\be
\label{QZnAupperlastQCKA}
Q_Z^{(n_j \geq 2)} \lesssim \alpha Q_{\bot_j, Z}
\ee
in the asymptotic limit. 
From Eqs.~(\ref{nAnBupperQCKA}) and (\ref{QZnAupperlastQCKA}), we obtain 
\be
\sum_{\substack{ \vec{n}\\ \bigvee_{i=0}^{N}(n_i\geq2) }} Q_Z^{(\vec{n})} 
\lesssim 
\alpha \sum\limits_{i=0}^{N} Q_{\bot_i, Z}.
\ee
\end{proof}

Here, we define 
\be
q'_Z \coloneqq 
\sum_{\substack{ \vec{n}\\ \vec{n} \in \{0,1\}^{N+1}}}
 Q_Z^{(\vec{n})} 
 = Q_Z - \sum_{\substack{ \vec{n}\\ \bigvee_{i=0}^{N}(n_i\geq2) }} Q_Z^{(\vec{n})}
\label{qzQCKA}
\ee
for convenience. 
The following lemma gives a simplified upper bound on $Q_Z f_{\rm PA}$ by exploiting the concavity of the binary entropy function and generalizes Lemma~\ref{lemmaconcavity}.

\begin{lemma}
\label{lemmaconcavityQCKA}
Suppose $\bar{p}_Z \geq \bar{p}_X$ 
and 
\be
\frac{\bar{p}_Z^{N+1} \gamma}{\bar{p}_X^{N+1}} \frac{E_X}{q'_Z} \leq 1/2.
\ee
Then the following inequality holds:
\be
\begin{split}
&\sum_{\substack{ \vec{n}\\ \vec{n} \in \{0,1\}^{N+1},~\vec{n}\neq (1,..,1)}}
 Q_Z^{(\vec{n})} + Q_Z^{(1,..,1)} h\left(\frac{\bar{p}_Z^{N+1} \gamma}{\bar{p}_X^{N+1}} \frac{E_X^{(1,..,1)}}{Q_Z^{(1,..,1)}}\right) 
 \\
 &~~~~~~~~~~~~~~~~~~~~~~~~~~~~~~~~~~~~~~~~~~~~~~\lesssim q'_Z h \left( \frac{\bar{p}_Z^{N+1} \gamma}{\bar{p}_X^{N+1}} \frac{E_X}{q'_Z} \right). 
\label{eqlemma2QCKA}
\end{split}
\ee
\end{lemma}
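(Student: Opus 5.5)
The plan is to follow the proof of Lemma~\ref{lemmaconcavity} step by step. The only new ingredient is a uniform treatment of all photon-number patterns $\vec{n}\in\{0,1\}^{N+1}$ with $\vec{n}\neq(1,..,1)$, which replaces the three special cases $(0,0),(0,1),(1,0)$. Write $c\coloneqq \bar{p}_Z^{N+1}\gamma/\bar{p}_X^{N+1}$.

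\textbf{Step 1 (bounding $E_X^{(1,..,1)}$).} Since every $E_X^{(\vec{n})}\geq 0$, I drop the multi-photon terms and get
\begin{equation*}
E_X^{(1,..,1)} \leq E_X - \sum_{\vec{n}\in\{0,1\}^{N+1},\,\vec{n}\neq(1,..,1)} E_X^{(\vec{n})}.
\end{equation*}

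\textbf{Step 2 (POVM for a mixed vacuum/single-photon pattern).} Fix such an $\vec{n}$ and let $k\geq 1$ be its number of zeros.
\begin{itemize}
\item For a party with $n_j=0$, Eq.~(\ref{FdashA0QCKA}) gives equal $Z$ and $X$ click weights $(1-d)^2(1-(1-d)^2)$. Such a click is a pure dark count, and the two detectors of a line are symmetric. Hence that party's bit is uniform and independent of everything else.
\item For a party with $n_j=1$, the $W_j=X$ weight is $\bar{p}_X(1-d)^2$, against $\bar{p}_Z(1-d)^2$ for $W_j=Z$, by Eq.~(\ref{FdashAQCKA}).
\end{itemize}
Because at least one bit in the parity is uniformly random, the event $\bigoplus_i b_i=1$ has conditional probability exactly $1/2$. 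I therefore expect the operator identity
\begin{equation*}
\hat{E}_X^{(\vec{n})} = \frac{1}{2\gamma}\left(\frac{\bar{p}_X}{\bar{p}_Z}\right)^{N+1-k}\hat{Q}_Z^{(\vec{n})},
\end{equation*}
with $\hat{Q}_Z^{(\vec{n})}$ as in Eq.~(\ref{hatQnQCKA}). Azuma's inequality (Appendix~\ref{Azuma}) then converts this into the same relation between $E_X^{(\vec{n})}$ and $Q_Z^{(\vec{n})}$ asymptotically. Checking this tensor-product factorization carefully is the step I regard as the main obstacle. In particular, the parity must be exactly unbiased even when the single-photon parties' bits are arbitrarily correlated with each other, and the uniform dark-count bit guarantees this.

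\textbf{Step 3 (substitution).} Multiplying the Step 2 relation by $c$ gives
\begin{equation*}
c\,E_X^{(\vec{n})} \sim \tfrac{1}{2}\left(\bar{p}_Z/\bar{p}_X\right)^{k} Q_Z^{(\vec{n})} \geq \tfrac{1}{2} Q_Z^{(\vec{n})},
\end{equation*}
where the inequality uses $\bar{p}_Z\geq\bar{p}_X$ and $k\geq 1$. Inserting this into Step 1 yields
\begin{equation*}
c\,E_X^{(1,..,1)} \lesssim c\,E_X - \tfrac{1}{2}\sum_{\vec{n}\neq(1,..,1)} Q_Z^{(\vec{n})}.
\end{equation*}

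\textbf{Step 4 (concavity).} Divide the left-hand side of Eq.~(\ref{eqlemma2QCKA}) by $q'_Z$. It becomes a convex combination of $h(1/2)$, with total weight $\sum_{\vec{n}\neq(1,..,1)}Q_Z^{(\vec{n})}/q'_Z$, and of $h(c E_X^{(1,..,1)}/Q_Z^{(1,..,1)})$, with weight $Q_Z^{(1,..,1)}/q'_Z$. Concavity of $h$ bounds this by
\begin{equation*}
h\!\left(\frac{1}{2q'_Z}\sum_{\vec{n}\neq(1,..,1)}Q_Z^{(\vec{n})} + \frac{c\,E_X^{(1,..,1)}}{q'_Z}\right).
\end{equation*}
By Step 3, the argument is asymptotically at most $cE_X/q'_Z$, which is $\leq 1/2$ by hypothesis. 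Since $h$ is continuous and nondecreasing on $[0,1/2]$, the argument is also $\leq 1/2$ and we get the upper bound $h(cE_X/q'_Z)$. Multiplying back by $q'_Z$ gives Eq.~(\ref{eqlemma2QCKA}).
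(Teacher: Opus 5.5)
Your proposal is correct and matches the paper's proof step for step. You drop the multi-photon $E_X^{(\vec n)}$ terms, and you use the operator identity $\hat{E}_X^{(\vec n)} = \frac{1}{2\gamma}\prod_i(\bar p_X/\bar p_Z)^{n_i}\hat Q_Z^{(\vec n)}$ for $\vec n\in\{0,1\}^{N+1}\setminus\{(1,..,1)\}$, justified by the uniform dark-count bit of a vacuum party. You then convert it via Azuma's inequality and close with concavity and monotonicity of $h$ on $[0,1/2]$. The only cosmetic difference is that you keep the exact factor $(\bar p_Z/\bar p_X)^k\geq 1$, whereas the paper first lower-bounds the prefactor by $(\bar p_X/\bar p_Z)^N$ and later discards a term proportional to $(1-\bar p_Z/\bar p_X)\leq 0$.
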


\begin{proof}
The quantity $E_X^{(1,..,1)}$ can be upper bounded as follows:
\begin{align}
\bs
E_X^{(1,..,1)} &= E_X - \sum_{\substack{ \vec{n}\\ \vec{n} \in \{0,1\}^{N+1},~\vec{n}\neq (1,..,1)}} E_X^{(\vec{n})} 
- \sum_{\substack{ \vec{n}\\ \bigvee_{i=0}^{N}(n_i\geq2) }} E_X^{(\vec{n})} \\
& \leq E_X - \sum_{\substack{ \vec{n}\\ \vec{n} \in \{0,1\}^{N+1},~\vec{n}\neq (1,..,1)}} E_X^{(\vec{n})}.
\es
\label{XerrorineqQCKA}
\end{align}
For $\vec{n} \in\{0,1\}^{N+1}$ satisfying $\vec{n}\neq (1,..,1)$, let $\hat{E}_X^{(\vec{n})}$ be POVM elements corresponding to the event that $\{W_i = X\}_{i=0}^N, \bigoplus_{i=0}^N b_i =1$ and the observed photon numbers are given by $\vec n$. This is given by 
\be
\begin{split}
\hat{E}_X^{(\vec{n})} &= \frac{1}{2}(1-d)^{2(N+1)} 
\\ &~~~~~~~~\bigotimes_{i=0}^{N} \left([1-(1-d)^2]^2 \ket{\rm vac} \bra{\rm vac}_{i} \right)^{1-n_i}
\left( \bar{p}_X \hat{\mathbbm{1}}_i^{(1)}\right)^{n_i}, 
\end{split}
\ee
where the prefactor 1/2 implies the randomness of outcomes from dark counts. 
From Eq.~(\ref{hatQnQCKA}), we obtain 
\be
\hat{E}_X^{(\vec{n})} = \frac{1}{2 \gamma} 
\left( \prod_{i=0}^{N} \frac{\bar{p}_X^{n_i}}{\bar{p}_Z^{n_i}}\right) \hat{Q}_Z^{(\vec{n})}
\geq \frac{1}{2 \gamma}  \frac{\bar{p}_X^N}{\bar{p}_Z^N} \hat{Q}_Z^{(\vec{n})}
\ee
for $\vec{n} \in\{0,1\}^{N+1}$ satisfying $\vec{n}\neq (1,..,1)$ where we used $\bar{p}_Z \geq \bar{p}_X$ from the assumption of the lemma. 
By applying Azuma's inequality (see Appendix \ref{Azuma}), 
\be
E_X^{(\vec{n})} \gtrsim \frac{1}{2 \gamma}  \frac{\bar{p}_X^N}{\bar{p}_Z^N} Q_Z^{(\vec{n})}
\label{EXandQXQCKA}
\ee
holds for $\vec{n} \in\{0,1\}^{N+1}$ satisfying $\vec{n}\neq (1,..,1)$ in the asymptotic limit . 
By combining Eqs.~(\ref{XerrorineqQCKA}) and (\ref{EXandQXQCKA}), we obtain 
\begin{align}
\bs
\frac{{\bar{p}_Z}^{N+1}}{{\bar{p}_X}^{N+1}} E_X^{(1,..,1)} 
&\lesssim \frac{{\bar{p}_Z}^{N+1}}{{\bar{p}_X}^{N+1}} \left(E_X - \frac{1}{2 \gamma} \frac{\bar{p}_X^N}{\bar{p}_Z^N}
\sum_{\substack{ \vec{n}\\ \vec{n} \in \{0,1\}^{N+1},~\vec{n}\neq (1,..,1)}} Q_Z^{(\vec{n})}
\right) \\
& = \frac{{\bar{p}_Z}^{N+1}}{{\bar{p}_X}^{N+1}} E_X 
- \frac{1}{2 \gamma} \frac{\bar{p}_Z}{\bar{p}_X}  
\sum_{\substack{ \vec{n}\\ \vec{n} \in \{0,1\}^{N+1},~\vec{n}\neq (1,..,1)}} Q_Z^{(\vec{n})}.
\es
\label{preconcavQCKA}
\end{align}
From these considerations, we obtain  
\begin{align}
\bs
&\left( \sum_{\substack{ \vec{n}\\ \vec{n} \in \{0,1\}^{N+1},~\vec{n}\neq (1,..,1)}} Q_Z^{(\vec{n})} +Q_Z^{(1,..,1)} h\left( \frac{\bar{p}_Z^{N+1} \gamma}{\bar{p}_X^{N+1}} \frac{E_X^{(1,..,1)}}{Q_Z^{(1,..,1)}}\right) \right) /q'_Z \\
&=  \sum_{\substack{ \vec{n}\\ \vec{n} \in \{0,1\}^{N+1},~\vec{n}\neq (1,..,1)}} \frac{Q_Z^{(\vec{n})}}{q'_Z} h\left( \frac{1}{2} \right) + \frac{Q_Z^{(1,..,1)}}{q'_Z} h\left( \frac{\bar{p}_Z^{N+1} \gamma}{\bar{p}_X^{N+1}} \frac{E_X^{(1,..,1)}}{Q_Z^{(1,..,1)}}\right) \\
& \leq h\left(\sum_{\substack{ \vec{n}\\ \vec{n} \in \{0,1\}^{N+1},~\vec{n}\neq (1,..,1)}} \frac{Q_Z^{(\vec{n})}}{2q'_Z} + \frac{\bar{p}_Z^{N+1} \gamma}{\bar{p}_X^{N+1}} \frac{E_X^{(1,..,1)}}{q'_Z} \right) \\
& \lesssim h\left( \frac{{\bar{p}_Z}^{N+1} \gamma}{{\bar{p}_X}^{N+1}} \frac{E_X}{q'_Z} + \frac{1}{2q'_Z} \left( 1 - \frac{\bar{p}_Z} {\bar{p}_X} \right) 
\sum_{\substack{ \vec{n}\\ \vec{n} \in \{0,1\}^{N+1},~\vec{n}\neq (1,..,1)}} Q_Z^{(\vec{n})}
\right)
\\
& \leq h\Bigg( \frac{{\bar{p}_Z}^{N+1} \gamma}{{\bar{p}_X}^{N+1}} \frac{E_X}{q'_Z}\Bigg),
\es
\end{align}
where the first inequality follows from the concavity of the function $h(x)$, while the second inequality follows from Eq.~(\ref{preconcavQCKA}). 
The second and third inequalities use the fact that $h(x)$ is a non-decreasing continuous function in the range of $0\leq x \leq 1/2$. 
From the assumption of the lemma, the terms in the brackets are no greater than 1/2.  
The third inequality uses $\bar{p}_Z / \bar{p}_X \geq 1$ from the assumption of the lemma. 
This result leads to Eq.~(\ref{eqlemma2QCKA}).
\end{proof}

By combining the three lemmas, we obtain the following main theorem for the passive QCKA. 
\begin{thm} 
\label{Theo:QCKA}
If the relations
\be
\bar{p}_Z \geq \bar{p}_X, ~~
\frac{{\bar{p}_Z}^2 \gamma}{{\bar{p}_X}^2} \frac{E_X}{Q_Z -  \alpha \sum\limits_{i=0}^{N} Q_{\bot_i, Z} } \leq \frac{1}{2} 
\label{pbarineqQCKA} 
\ee 
hold,  
the lower bound on the secure key rate $R$ per round of the passive QCKA in the asymptotic limit is given by 
\begin{align}
\begin{split}
&R \gtrsim \left(Q_Z - \alpha \sum\limits_{i=0}^{N} Q_{\bot_i, Z}\right)\left(1-h\left(\frac{({\bar{p}_Z}^{N+1} \gamma/{\bar{p}_X}^{N+1})E_X}{Q_Z - \alpha \sum\limits_{i=0}^{N} Q_{\bot_i, Z}}\right)  \right) - Q_Z f_{\rm EC}. 
\end{split}
\label{finalkeyrateQCKA}
\end{align}
\end{thm}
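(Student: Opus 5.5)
The plan mirrors the proof of Theorem~\ref{thm:1}, with Lemmas~\ref{lemmasingleQCKA}, \ref{lemmamultiQCKA} and \ref{lemmaconcavityQCKA} taking the roles of Lemmas~\ref{lemmasingle}, \ref{lemmamulti} and \ref{lemmaconcavity}. Everything starts from the privacy-amplification cost in Eq.~(\ref{PAamountQCKA}), which splits into three pieces: the multi-photon term with some $n_i\ge 2$, the term with $\vec n\in\{0,1\}^{N+1}\setminus\{(1,..,1)\}$, and the single-photon term $Q_Z^{(1,..,1)}h(\cdot)$.

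\textbf{Step 1 (single-photon term).} I substitute Lemma~\ref{lemmasingleQCKA} into the argument of $h$, writing
\begin{equation*}
\frac{E_{\rm ph}^{(1,..,1)}}{Q_Z^{(1,..,1)}} \sim \frac{\bar p_Z^{N+1}\gamma}{\bar p_X^{N+1}}\,\frac{E_X^{(1,..,1)}}{Q_Z^{(1,..,1)}}.
\end{equation*}
Continuity of $h$ carries the asymptotic equality through. Only the single-photon term still contains unobserved photon-number-resolved quantities.

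\textbf{Step 2 (terms with $\vec n\in\{0,1\}^{N+1}$).} I invoke Lemma~\ref{lemmaconcavityQCKA}. It bounds the second and third pieces together by $q'_Z\, h\!\left(\frac{\bar p_Z^{N+1}\gamma}{\bar p_X^{N+1}}\frac{E_X}{q'_Z}\right)$. This requires $\bar p_Z\ge\bar p_X$ and $\frac{\bar p_Z^{N+1}\gamma}{\bar p_X^{N+1}}\frac{E_X}{q'_Z}\le 1/2$. Adding back the multi-photon sum, which equals $Q_Z-q'_Z$ by Eq.~(\ref{qzQCKA}), gives
\begin{equation*}
Q_Zf_{\rm PA}\lesssim Q_Z-q'_Z+q'_Z h(\cdot),
\end{equation*}
and hence $R\gtrsim q'_Z\bigl(1-h(\cdot)\bigr)-Q_Zf_{\rm EC}$.

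\textbf{Step 3 (multi-photon term).} I replace $q'_Z$ by the observable lower bound from Lemma~\ref{lemmamultiQCKA}:
\begin{equation*}
q'_Z\gtrsim Q_Z-\alpha\sum_{i=0}^N Q_{\bot_i,Z}=:\tilde q.
\end{equation*}
Two monotonicity facts then finish the bound.
\begin{itemize}
\item The hypothesis of Lemma~\ref{lemmaconcavityQCKA} follows from the same inequality with $\tilde q$ in place of $q'_Z$, since $\tilde q\lesssim q'_Z$.
\item The map $q\mapsto q\bigl(1-h(c/q)\bigr)$ is non-decreasing for $c/q\le 1/2$. Its derivative is $1-h(x)+x h'(x)=1+\log_2(1-x)+\ldots$; more cleanly, $q\,(1-h(c/q))$ is the perspective of the convex function $1-h$, and it is zero-increasing in $q$ on this range. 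So replacing $q'_Z$ by the smaller $\tilde q$ only lowers the bound.
\end{itemize}
This yields Eq.~(\ref{finalkeyrateQCKA}).

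\textbf{Main obstacle.} The delicate point is the side condition. Lemma~\ref{lemmaconcavityQCKA} needs the exponent $N+1$. Because $\bar p_Z/\bar p_X\ge 1$, the condition with exponent $N+1$ implies the one with exponent $2$, but not conversely. I would therefore read the second relation in Eq.~(\ref{pbarineqQCKA}) with $\bar p_Z^{N+1}/\bar p_X^{N+1}$, consistent with the argument of $h$ in Eq.~(\ref{finalkeyrateQCKA}); for $N=1$ this reduces to Theorem~\ref{thm:1}.

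I also need to justify the monotonicity of $q(1-h(c/q))$ in Step 3. The derivative in $q$ is $1-h(x)-x\log_2\frac{1-x}{x}=1+\log_2(1-x)$ with $x=c/q$, which is $\ge0$ for $x\le1/2$. The proof of Theorem~\ref{thm:1} used this step implicitly.
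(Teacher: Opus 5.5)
Your proposal is correct and follows the paper's proof step for step. You insert Lemma~\ref{lemmasingleQCKA} into Eq.~(\ref{PAamountQCKA}), use Lemma~\ref{lemmaconcavityQCKA} to reach $R\gtrsim q'_Z\bigl(1-h(\cdot)\bigr)-Q_Zf_{\rm EC}$, and then use Lemma~\ref{lemmamultiQCKA} to replace $q'_Z$ by $Q_Z-\alpha\sum_{i=0}^{N}Q_{\bot_i,Z}$.

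You are right that the side condition must carry $\bar{p}_Z^{N+1}/\bar{p}_X^{N+1}$: the printed exponent $2$ does not imply the hypothesis of Lemma~\ref{lemmaconcavityQCKA} for $N\ge 2$, and the paper's own proof has the same mismatch. Your check $\frac{d}{dq}\,q\bigl(1-h(c/q)\bigr)=1+\log_2(1-c/q)\ge 0$ supplies a monotonicity step the paper leaves implicit. The intermediate expression should read $1-h(x)+x\log_2\frac{1-x}{x}$, with a plus sign; your final result is unaffected.
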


\begin{proof} 
Applying Lemma \ref{lemmasingleQCKA} to Eq.~(\ref{PAamountQCKA}), we first obtain 
\begin{align}
\bs
Q_Z f_{\rm PA} 
& \sim \sum_{\substack{ \vec{n} \\ \bigvee_{i=0}^{N}(n_i\geq2) }} Q_Z^{(\vec{n})} 
+ \sum_{\substack{\vec{n} \\ \vec{n} \in \{0,1\}^{N+1},~\vec{n}\neq (1,..,1) }} Q_Z^{(\vec{n})}\\
&~~~~~~~~~~~~~~~~~~~~~~~~~~~~~~~~~~+Q_Z^{(1,..,1)} h\left(\frac{{\bar{p}_Z}^{N+1} \gamma}{{\bar{p}_X}^{N+1}} \frac{E_X^{(1,..,1)}}{Q_Z^{(1,..,1)}}\right) .
\label{PAboundQCKA}
\es
\end{align}
Applying Lemma \ref{lemmaconcavityQCKA} to Eq.~(\ref{PAboundQCKA}), we further obtain 
\be
\begin{split}
Q_Z f_{\rm PA} &\lesssim \sum_{\substack{ \vec{n} \\ \bigvee_{i=0}^{N}(n_i\geq2) }} Q_Z^{(\vec{n})} + q'_Z \left(\frac{\bar{p}_Z^{N+1} \gamma}{\bar{p}_X^{N+1}} \frac{E_X}{q'_Z} \right) \\ 
&= Q_Z - q'_Z + q'_Z h \left(\frac{\bar{p}_Z^{N+1} \gamma}{\bar{p}_X^{N+1}} \frac{E_X}{q'_Z} \right),
\end{split}
\ee
if $\bar{p}_Z \geq \bar{p}_X$ and 
\be
\frac{\bar{p}_Z^{N+1} \gamma}{\bar{p}_X^{N+1}} \frac{E_X}{q'_Z} \leq 1/2 \label{assulemma2QCKA}
\ee
is satisfied. 
The secure key rate per round is then given by 
\be
R = Q_Z(1 - f_{\rm PA} - f_{\rm EC}) \gtrsim q'_Z \left(1- h \left( \frac{\bar{p}_Z^{N+1} \gamma}{\bar{p}_X^{N+1}} \frac{E_X}{q'_Z} \right) \right) - Q_Zf_{\rm EC}. 
\label{RqzQCKA}
\ee
From Eq.~(\ref{qzQCKA}) and Lemma \ref{lemmamultiQCKA}, we have 
\be
q'_Z = Q_Z - \sum_{\substack{ \vec{n} \\ \bigvee_{i=0}^{N}(n_i\geq2) }} Q_Z^{(\vec{n})}  
\gtrsim Q_Z - \alpha \sum\limits_{i=0}^{N} Q_{\bot_i, Z}. \label{qZlowerQCKA}
\ee
From this bound, Eq.~(\ref{assulemma2QCKA}) is satisfied if 
\be
\frac{{\bar{p}_Z}^2 \gamma}{{\bar{p}_X}^2} \frac{E_X}{Q_Z - \alpha \sum\limits_{i=0}^{N} Q_{\bot_i, Z}} \leq \frac{1}{2} \label{keyconditionQCKA}
\ee 
holds. 
Applying the bound Eq.~(\ref{qZlowerQCKA}) to Eq.~(\ref{RqzQCKA}), the secure key rate is lower bounded by  
\begin{align}
\begin{split}
&R \gtrsim \left(Q_Z - \alpha \sum\limits_{i=0}^{N} Q_{\bot_i, Z}\right)\left(1-h\left(\frac{({\bar{p}_Z}^{N+1} \gamma/{\bar{p}_X}^{N+1})E_X}{Q_Z - \alpha \sum\limits_{i=0}^{N} Q_{\bot_i, Z}}\right)  \right) - Q_Z f_{\rm EC}, 
\end{split}
\end{align}
if Eq.~(\ref{keyconditionQCKA}) is satisfied. 
\end{proof}

\section{Conclusion} \label{Sec:Conclusion}
In this paper, we established the security of passive entanglement-based key distribution protocols in the asymptotic regime. We first proved the security of the passive BBM92 protocol with biased basis choice, for which conventional proof techniques are not directly applicable. Our approach is based on the introduction of a virtual-qubit description for entanglement-based protocols, which enables the security analysis to be reduced to that of the BB84 protocol. 
We also derived analytical expressions for the observed quantities in the passive BBM92 protocol and used them to evaluate its performance. Our numerical simulations showed that the key generation rate of the passive protocol is almost identical to that of the active protocol, indicating that passive implementations can achieve the same level of performance while substantially simplifying the measurement setup. 
Moreover, we generalized the analytical framework developed for the passive BBM92 protocol to the multipartite setting and proved the security of passive QCKA.

Although our analysis was presented for polarization-encoded protocols, the proposed framework is also applicable to time-bin encoding. Since the basis-dependent asymmetry inherent in time-bin and phase measurements can be naturally incorporated into our model, our results provide a security proof for an important class of practical implementations for which passive basis choice is particularly attractive.

Finally, our analysis assumes the asymptotic limit and does not account for finite-size effects. Establishing finite-key security using concentration inequalities, such as Kato's inequality~\cite{2020Kato}, is an important next step and will further strengthen the practical relevance of passive entanglement-based cryptographic protocols.

{\it Note added}: After completion of this work, we became aware of a recent preprint analyzing cross-click operators and their possible application to the passive BBM92 protocol~\cite{2026Ye}. While that work derives a lower bound on low-photon-number block weight for the passive BBM92 protocol using cross click events, to the best of our knowledge, it does not itself provide a complete security proof.

\section*{Acknowledgment}
We thank Akihiro Mizutani and Takuya Ikuta for helpful discussions. 
This work was supported by `MIC R\&D of ICT Priority Technology (JPMI00316)' of Ministry of Internal Affairs and Communications of Japan.

\section*{Appendix}
\appendix

\section{Azuma's inequality} \label{Azuma}
Here, we show the application of Azuma's inequality in the asymptotic limit of $m_{\rm rep} \to \infty$. 
Let $\hat{F}(\Omega = \omega)$ denote a POVM element corresponding to the outcome of $\Omega = \omega$. 
We specifically show that if 
\be
\hat{F}(\Omega = \omega) = a \hat{F}(\Omega = \omega')
\ee
holds for a constant $a>0$, then 
\be
\frac{m(\Omega = \omega)}{m_{\rm rep}} = \frac{a m(\Omega = \omega')}{m_{\rm rep}}
\ee
holds in the asymptotic limit. 
Similarly, if 
\be
\hat{F}(\Omega = \omega) \leq a \hat{F}(\Omega = \omega')
\ee
holds, then 
\be
\frac{m(\Omega = \omega)}{m_{\rm rep}} \leq \frac{a m(\Omega = \omega')}{m_{\rm rep}}
\ee
holds in the asymptotic limit. 
Let $b_i(\Omega = \omega) \in \{0,1\}$ be a random variable for $i$-th round, satisfying $b_i(\Omega = \omega) = 1$ if $\Omega = \omega$ and otherwise 
$b_i(\Omega = \omega) = 0$. Let $p_i(\Omega = \omega)$ be a probability for $i$-th round of obtaining the outcome $\Omega = \omega$. It is given by 
\be
\label{piOmega}
p_i(\Omega = \omega) = {\rm tr} \left(\hat{F}(\Omega = \omega) \hat{\rho}_{AB|\mathcal{F}_{i-1}}(i)\right),
\ee
where $\hat{\rho}_{AB|\mathcal{F}_{i-1}}(i)$ is 
a quantum state of system $AB$ in the $i$-th round conditioned on $\mathcal{F}_{i-1}$, which contains all classical information up to the $(i-1)$-th round. 
We sum up these elements up to the $k$-th round and define 
\be
M_k(\Omega = \omega) \coloneqq \sum\limits_{i=1}^k b_i(\Omega = \omega), ~~ P_k(\Omega = \omega) \coloneqq \sum\limits_{i=1}^k p_i(\Omega = \omega). 
\ee
Note that $M_{m_{\rm rep}}(\Omega = \omega)$ equals $m(\Omega = \omega)$. 
We also define their difference 
\be
\Delta_k^{(\Omega = \omega)}  \coloneqq M_k(\Omega = \omega) - P_k(\Omega = \omega)
\ee
for $1\leq k \leq m_{\rm rep}$ and $\Delta_0^{(\Omega = \omega)} = 0$.
Since 
\be
\Delta_k^{(\Omega = \omega)} - \Delta_{k-1}^{(\Omega = \omega)} = b_k(\Omega = \omega) - p_k(\Omega = \omega)
\ee
holds for $1\leq k \leq m_{\rm rep}$, we obtain 
\be
E[\Delta_k^{(\Omega = \omega)} - \Delta_{k-1}^{(\Omega = \omega)} | \mathcal{F}_{k-1}] = E[b_k(\Omega = \omega) | \mathcal{F}_{k-1}] - p_k(\Omega = \omega) = 0,
\label{martin1}
\ee
where $E[X|Y]$ denotes the expectation of $X$ condition on $Y$. 
On the other hand, we have 
\be
\begin{split}
E[\Delta_k^{(\Omega = \omega)} - \Delta_{k-1}^{(\Omega = \omega)} | \mathcal{F}_{k-1}] &= E[\Delta_k^{(\Omega = \omega)} | \mathcal{F}_{k-1}] - E[\Delta_{k-1}^{(\Omega = \omega)} | \mathcal{F}_{k-1}] \\
& =  E[\Delta_k^{(\Omega = \omega)} | \mathcal{F}_{k-1}] - \Delta_{k-1}^{(\Omega = \omega)}. 
\label{martin2}
\end{split}
\ee
Eqs.~(\ref{martin1}) and (\ref{martin2}) lead to 
\be
E[\Delta_k^{(\Omega = \omega)} | \mathcal{F}_{k-1}] = \Delta_{k-1}^{(\Omega = \omega)}, 
\ee
and thus $\{\Delta_k^{(\Omega = \omega)}\}$ satisfies Martingale condition. 
Moreover, $\{\Delta_k^{(\Omega = \omega)}\}$ satisfies the bounded difference condition, that is, $|\Delta_k^{(\Omega = \omega)} - \Delta_{k-1}^{(\Omega = \omega)}| \leq 1$ for any $ 1\leq k \leq m_{\rm rep}$. 
Therefore, Azuma's inequality can be applied as follows: for $\epsilon >0$, 
\be
{\rm Pr}(|\Delta_{m_{\rm rep}}^{(\Omega = \omega)}| \geq m_{\rm rep} \epsilon) \leq 2 e^{-m_{\rm rep} \epsilon^2 /2}. 
\ee
In other words, by using the relation $M_{m_{\rm rep}}(\Omega = \omega) = m(\Omega = \omega)$, 
\be
|m(\Omega = \omega) - P_{m_{\rm rep}}(\Omega = \omega) | < m_{\rm rep} \epsilon
\label{azuma1}
\ee
holds except with probability $2e^{-m_{\rm rep} \epsilon^2 /2}$. 

Here, suppose that
\be
\hat{F}(\Omega = \omega) = a \hat{F}(\Omega = \omega')
\ee
holds for $a >0$. 
From Eq.~(\ref{piOmega}), we obtain $p_i(\Omega = \omega) = a p_i(\Omega = \omega')$ for any $i$, which leads to 
\be
P_{m_{\rm rep}}(\Omega = \omega) = a  P_{m_{\rm rep}}(\Omega = \omega'). 
\ee
From Eq.~(\ref{azuma1}), the relation
\be
|m(\Omega = \omega) - a m(\Omega = \omega')| < (1+ a) m_{\rm rep} \epsilon, \label{MM}
\ee
namely, 
\be
\left| \frac{m(\Omega = \omega)}{m_{\rm rep}} - a \frac{m(\Omega = \omega')}{m_{\rm rep}} \right| < (1+ a)  \epsilon, \label{MM}
\ee
holds except with probability less than $4e^{-m_{\rm rep} \epsilon^2 /2}$, which converges to 0 for arbitrary $\epsilon$ in the asymptotic limit of $m_{\rm rep} \to \infty$. 
Therefore, $m(\Omega = \omega)/m_{\rm rep}$ equals $a m(\Omega = \omega)/m_{\rm rep}$ in the asymptotic limit. 
In a similar vein, if $\hat{F}(\Omega = \omega) \leq a \hat{F}(\Omega = \omega')$ holds, we obtain 
$m(\Omega = \omega)/m_{\rm rep} \leq a m(\Omega = \omega)/m_{\rm rep}$ in the asymptotic limit.

\section{POVM elements of single-photon detection} \label{singlePOVM}
Here, we derive the POVM elements associated with single-photon detection events. 
We define $Z, X$ as systems of Alice's $Z$ line and $X$ line in Fig.~\ref{setup2}, respectively. 
We  define ($\hat{z}^{\dagger}_H$, $\hat{z}^{\dagger}_V$) and ($\hat{x}^{\dagger}_H$, $\hat{x}^{\dagger}_V$) as sets of 
creation operators of horizontal and vertical polarizations in systems $Z$ and $X$, respectively. 
We also define $\hat{x}^{\dagger}_D := (\hat{x}^{\dagger}_H + \hat{x}^{\dagger}_V)/\sqrt{2}$ and $\hat{x}^{\dagger}_{\bar{D}} := (\hat{x}^{\dagger}_H - \hat{x}^{\dagger}_V)/\sqrt{2}$. With a notation 
\begin{align}
\hat{N}_{\xi}^{(1)} &\coloneqq \hat{\xi}^{\dagger} \ket{\rm vac} \bra{\rm vac}_{ZX} \hat{\xi},
\end{align}
for $\xi \in \{z_H, z_V, x_D, x_{\bar{D}}\}$, the POVM elements in system $ZX$ under the condition that the outcome of QND measurement is $n_A =1$ 
are described as follows: 
\begin{equation}
\begin{split}
&\hat{F}_{Z0}^{(1)} =  \hat{N}_{z_H}^{(1)}  (1-d)^3,\\
&\hat{F}_{Z1}^{(1)} =  \hat{N}_{z_V}^{(1)}  (1-d)^3,  \\
&\hat{F}_{Z, \rm double}^{(1)} =  d ( \hat{N}_{z_H}^{(1)} +  \hat{N}_{z_V}^{(1)})  (1-d)^2, \\
&\hat{F}_{X0}^{(1)} =  \hat{N}_{x_D}^{(1)}  (1-d)^3,\\
&\hat{F}_{X1}^{(1)} =  \hat{N}_{x_{\bar{D}}}^{(1)}  (1-d)^3,  \\
&\hat{F}_{X, \rm double}^{(1)} =  d ( \hat{N}_{x_D}^{(1)} +  \hat{N}_{x_{\bar{D}}}^{(1)})  (1-d)^2, \\
&\hat{F}_{\bot}^{(1)} = \left(1-(1-d)^2\right) \hat{\mathbbm{1}}_{ZX}^{(1)}, \\  
\label{POVMoriginals}
\end{split}
\end{equation}
with $\hat{F}_{W'b}^{(1)}$ corresponding to the outcome $b\in\{0, 1\}$ in $W'\in\{Z, X\}$ line, 
$\hat{F}_{W', \rm double}^{(1)}$ corresponding to the double-click events in $W'\in\{Z, X\}$ line and $\hat{F}_{\bot}^{(1)}$ corresponding to cross-click events. 
Note that the sum of the above POVM elements equals the projection onto the single-photon subspace of system $ZX$
\begin{align}
\hat{\mathbbm{1}}_{ZX}^{(1)} \coloneqq \hat{N}_{z_H}^{(1)} +  \hat{N}_{z_V}^{(1)} +  \hat{N}_{x_D}^{(1)} +  \hat{N}_{x_{\bar{D}}}^{(1)}.
\end{align} 

Now we rewrite those POVM elements as operators on system $A$ instead of system $ZX$. 
An ancillary system $E$ is introduced so that the dimension of $AE$ equals that of $ZX$. 
We define $\hat{a}_H^{\dagger}$ and $\hat{a}_V^{\dagger}$ be creation operators in Alice's system $A$ such that 
\begin{align}
\ket{H}_A = \hat{a}_H^{\dagger} \ket{\rm{vac}}_A,~~\ket{V}_A = \hat{a}_V^{\dagger} \ket{\rm{vac}}_A
\end{align}
holds. 
Let $\hat{U}$ be a unitary operator from $AE$ to $ZX$ representing evolution of the beam splitter in Fig.~\ref{setup2}, whose action is  
\begin{align}
\begin{split}
\hat{U} \hat{a}^{\dagger}_H \hat{U}^{\dagger} = \sqrt{\bar{p}_Z} \hat{z}^{\dagger}_H + \sqrt{\bar{p}_X} \hat{x}^{\dagger}_H,~~ 
\hat{U} \hat{a}^{\dagger}_V \hat{U}^{\dagger} = \sqrt{\bar{p}_Z} \hat{z}^{\dagger}_V+ \sqrt{\bar{p}_X} \hat{x}^{\dagger}_V. \label{HVevolution}
\end{split}
\end{align}
Here we define a projection operator $\hat{M}_{A,H}^{(1)}$ from the single-photon subspace on system $A$ onto $\hat{z}^{\dagger}_H \ket{\rm vac}_{ZX}$, which is described as $\hat{M}_{A,H}^{(1)} =  {}_{ZX}\bra{\rm vac} \hat{z}_H \hat{U} \ket{\rm vac}_{E}$. 
By denoting $\hat{M}_{A,H}^{(1)}  \ket{H}_A \eqqcolon \gamma_{HH}$ and $\hat{M}_{A,H}^{(1)} \ket{V}_A \eqqcolon \gamma_{HV}$, 
\be
\hat{M}_{A,H}^{(1)} = \hat{M}_{A,H}^{(1)} \hat{\mathbbm{1}}_A^{(1)}  = {}_A \bra{H} \gamma_{HH} +  {}_A \bra{V} \gamma_{HV}
\label{MBH}
\ee
holds, where we defined $\hat{\mathbbm{1}}_A^{(1)} \coloneqq \ket{H} \bra{H}_A + \ket{V} \bra{V}_A$. Now we calculate $\gamma_{HH}$ and $\gamma_{HV}$ as follows: 
\begin{equation}
\begin{split}
\gamma_{HH} &= {}_{ZX}\bra{\rm vac} \hat{z}_H \hat{U} \hat{a}^{\dagger}_H \ket{\rm vac}_{AE} \\
&= {}_{ZX}\bra{\rm vac} \hat{z}_H \hat{U} \hat{a}^{\dagger}_H \hat{U}^{\dagger}  \ket{\rm vac}_{ZX} \\
&= {}_{ZX}\bra{\rm vac} \hat{z}_H(\sqrt{\bar{p}_Z} \hat{z}^{\dagger}_H + \sqrt{\bar{p}_X} \hat{x}^{\dagger}_H )\ket{\rm vac}_{ZX}  \\
&= \sqrt{\bar{p}_Z}, \\
\gamma_{HV}
&= {}_{ZX}\bra{\rm vac} \hat{z}_H(\sqrt{\bar{p}_Z} \hat{z}^{\dagger}_V + \sqrt{\bar{p}_X} \hat{x}^{\dagger}_V )\ket{\rm vac}_{ZX}  \\
&= 0. 
\end{split}
\end{equation}
Therefore, from Eq. (\ref{MBH}), we obtain $\hat{M}_{A,H}^{(1)} =   {}_A\bra{H} \sqrt{\bar{p}_Z}$, and thus the POVM element on system $A$ is 
${\hat{M}_{A,H}^{(1) \dagger}} \hat{M}_{A,H}^{(1)} = \bar{p}_Z  \ket{H} \bra{H}_A$. 
Similarly, we have the following relations: 
\begin{equation}
\begin{split}
&\hat{N}_{z_H}^{(1)} \to \bar{p}_Z \ket{H}\bra{H}_A, ~~
\hat{N}_{z_V}^{(1)}  \to \bar{p}_Z \ket{V}\bra{V}_A, \\
&\hat{N}_{x_D}^{(1)}  \to \bar{p}_X \ket{D}\bra{D}_A,  ~~
\hat{N}_{x_{\bar{D}}}^{(1)}   \to \bar{p}_X \ket{\bar{D}}\bra{\bar{D}}_A.
\end{split}
\end{equation}
From these results, POVM elements in Eq.~(\ref{POVMoriginals}) are represented with operators on system $A$ as follows:
\begin{equation}
\begin{split}
&\hat{F'}_{Z_A0}^{(1)} =  \bar{p}_Z  \ket{H} \bra{H}_A  (1-d)^3, \\
&\hat{F'}_{Z_A1}^{(1)} =  \bar{p}_Z  \ket{V} \bra{V}_A   (1-d)^3,  \\
&\hat{F'}_{Z_A, \rm double}^{(1)} = \bar{p}_Z d  (1-d)^2 \hat{\mathbbm{1}}_A^{(1)},  \\
&\hat{F'}_{X_A0}^{(1)} =  \bar{p}_X  \ket{D} \bra{D}_A  (1-d)^3, \\
&\hat{F'}_{X_A1}^{(1)} = \bar{p}_X  \ket{\bar{D}} \bra{\bar{D}}_A  (1-d)^3,  \\
&\hat{F'}_{X_A, \rm double}^{(1)} = \bar{p}_X d  (1-d)^2 \hat{\mathbbm{1}}_A^{(1)},  \\
&\hat{F'}_{\bot_A}^{(1)} = \left( 1-(1-d)^2\right) \hat{\mathbbm{1}}_A^{(1)}.
\end{split}
\end{equation}
Note that the sum of those elements equals $ \hat{\mathbbm{1}}_A^{(1)}$.

\section{POVM elements of $n_A$-photon detection} \label{multiPOVM}
Here, we derive the POVM elements associated with $n_A$-photon detection events. 
Let us define the following Fock states on system $A$ and system $ZX$:
 \begin{align}
& \ket{\vec{m}_A}_A = \ket{m_H, m_V}_A \coloneqq \frac{(\hat{a}_H^{\dagger})^{m_H} (\hat{a}_V^{\dagger})^{m_V}}{\sqrt{m_H! m_V!}} \ket{\rm vac}_A, \label{mHmV}\\
& \ket{\vec{m}_{ZX}}_{ZX}  = \ket{m_{ZH}, m_{ZV},  m_{XH},  m_{XV}}_{ZX}  \nonumber \\
 &\coloneqq \frac{(\hat{z}_H^{\dagger})^{m_{ZH}} (\hat{z}_V^{\dagger})^{m_{ZV}} (\hat{x}_H^{\dagger})^{m_{XH}} (\hat{x}_V^{\dagger})^{m_{XV}}}{\sqrt{m_{ZH}! m_{ZV}!  m_{XH}!  m_{XV}!  }}   \ket{\rm vac}_{ZX}.  \label{ZXLbasis}
 \end{align}
 We denote the following three POVM elements for $n_A\geq 1$ on system $Z X $: 
 an element $\hat{F}_W^{(n_A)}$ corresponding to key generation in $W \in \{Z, X\}$ basis and 
 an element $\hat{F}_{\bot}^{(n_A)}$ corresponding to cross-click events.
By using the basis states in Eq. (\ref{ZXLbasis}) with a notation $P(\ket{\cdot}) \coloneqq \ket{\cdot}\bra{\cdot}$, the element $\hat{F}_Z^{(n_A)}$ is described as
\begin{equation}
\begin{split}
&\hat{F}_Z^{(n_A)} = 
 (1-d)^2\bigg(  \sum\limits_{m_{H} = 0}^{n_A}  P(\ket{m_{H}, n_A - m_{H},  0, 0}_{ZX})  \bigg).   \label{FZnorg} 
\end{split}
\end{equation}
Similarly, the other elements are described as follows:
\begin{align}
&\hat{F}_X^{(n_A)} = (1-d)^2\bigg(  \sum\limits_{m_{H} = 0}^{n_A}  P(\ket{0, 0, m_{H}, n_A - m_{H}}_{ZX})   \bigg),   \label{FXnorg} \\
&\hat{F}_{\bot}^{(n_A)} = \hat{\mathbbm{1}}_{ZX}^{(n_A)} - \hat{F}_Z^{(n_A)} -\hat{F}_X^{(n_A)} , \label{Fcrossnorg}
\end{align}
where 
\begin{align}
\hat{\mathbbm{1}}_{ZX}^{(n_A)} \coloneqq \sum_{\substack{\vec{m}_{ZX} \\ m _{ZH}+m_{ZV}+m_{XH}+m_{XV}=n_A}} 
P(\ket{\vec{m}_{ZX}}_{ZX}). 
\end{align}
Again, we introduce an ancillary system $E$ so that the dimension of $AE$ equals that of system $ZX$. Let $\hat{\mathbbm{1}}_{AE}^{(n_A)}$ be an projection operator on $n_A$-photon subspace of system $AE$. 
Using the unitary operator $\hat{U}$ representing the beam splitter, 
\be
\hat{\mathbbm{1}}_{AE}^{(n_A)} = \hat{U}^{\dagger} \hat{\mathbbm{1}}_{ZX}^{(n_A)} \hat{U}
\label{identityunitary}
\ee
holds. Here we define 
\be
\hat{\mathbbm{1}}_{A}^{(n_A)} \coloneqq \sum_{\substack{\vec{m}_A \\  m_H + m_V=n_A}} 
P(\ket{\vec{m}_A}_{A}), \label{identityonnb}
\ee
and we then have 
\be
{}_{E} \bra{\rm vac}  \hat{U}^{\dagger} \hat{\mathbbm{1}}_{ZX}^{(n_A)} \hat{U} \ket{\rm vac}_{E} =\hat{\mathbbm{1}}_{A}^{(n_A)}. \label{identityreduction}
\ee

Now we represent $\hat{F}_Z^{(n_A)}$ in Eq. (\ref{FZnorg}) in terms of system $A$. 
Let us define
\be
\ket{\vec{m}_{Z}}_{ZX} \coloneqq \ket{m_{H}, n_A-m_{H},  0, 0}_{ZX}. \label{nZL}
\ee
A measurement operator  $\hat{M}_{\vec{m}_{Z}}: A \to \mathbbm{C}$ which corresponds to the projection onto $\ket{\vec{m}_{Z}}_{ZX}$ is  represented as 
$\hat{M}_{\vec{m}_{Z}}= {}_{ZX}\bra{\vec{m}_{Z}} \hat{U} \ket{\rm vac}_{E}$. 
For arbitrary 
\be
\ket{\vec{m}'_A}_A \coloneqq \ket{m'_H, m'_V}_A, 
\ee
let $\gamma_{\vec{m}'_A, \vec{m}_{Z}}  \coloneqq \hat{M}_{\vec{m}_{Z}} \ket{\vec{m}'_A}_A.$ Then we can write $\hat{M}_{\vec{m}_{Z}}$ as 
\begin{align}
\hat{M}_{\vec{m}_{Z}} =  \hat{M}_{\vec{m}_{Z}}  \sum\limits_{\vec{m}'_A} \ket{\vec{m}'_A}\bra{\vec{m}'_A}_A
=\sum\limits_{\vec{m}'_A}  {}_A\bra{\vec{m}'_A} \gamma_{\vec{m}'_A, \vec{m}_{Z}}. \label{Kraus_multi} 
\end{align}
By using Eq. (\ref{HVevolution}), we calculate $\gamma_{\vec{m}'_A, \vec{m}_{Z}}$ as follows:
\begin{widetext}
\begin{equation}
\begin{split}
\gamma_{\vec{m}'_A, \vec{m}_{Z}} &= {}_{ZX}\bra{\vec{m}_{Z}} \hat{U} \ket{\vec{m}'_A}_A \ket{\rm vac}_{E} 
= {}_{ZX}\bra{\vec{m}_{Z}} \frac{ \hat{U} (\hat{a}_H^{\dagger})^{m'_H} (\hat{a}_V^{\dagger})^{m'_V} } { \sqrt{m'_H! m'_V!}} \ket{\rm vac}_{AE}  
= {}_{ZX}\bra{\vec{m}_{Z}} \frac{ (\hat{U} \hat{a}_H^{\dagger} \hat{U}^{\dagger})^{m'_H} (\hat{U} \hat{a}_V^{\dagger} \hat{U}^{\dagger})^{m'_V} } { \sqrt{m'_H! m'_V!}} \ket{\rm vac}_{ZX}  \\
&= {}_{ZX}\bra{\vec{m}_{Z}} \frac{1 }{\sqrt{m'_H!m'_V!}}  \Big(\sqrt{\bar{p}_Z} \hat{z}_H^{\dagger} + \sqrt{\bar{p}_X} \hat{x}_H^{\dagger} \Big)^{m'_H}  \Big(\sqrt{\bar{p}_Z} \hat{z}_V^{\dagger} + \sqrt{\bar{p}_X} \hat{x}_V^{\dagger}  \Big)^{m'_V}  \ket{\rm vac}_{ZX} \\
& = {}_{ZX}\bra{\vec{m}_{Z}} \frac{1 }{\sqrt{m'_H!m'_V!}} 
 \sum\limits_{i=0}^{m'_H} \frac{m'_H!}{i! (m'_H - i)!} (\sqrt{\bar{p}_Z} \hat{z}_H^{\dagger})^i  (\sqrt{\bar{p}_X} \hat{x}_H^{\dagger})^{m'_H-i}  \sum\limits_{j=0}^{m'_V} \frac{m'_V!}{j! (m'_V -j)!} 
(\sqrt{\bar{p}_Z} \hat{z}_V^{\dagger})^{j}  (\sqrt{\bar{p}_X} \hat{x}_V^{\dagger})^{m'_V-j} \ket{\rm vac}_{ZX}  \\
&=  {}_{ZX}\bra{\vec{m}_{Z}} \sum\limits_{i=0}^{m'_H} \sum\limits_{j=0}^{m'_V} \sqrt{\frac{m'_H!}{i! (m'_H - i)!}  \frac{m'_V!}{j! (m'_V - j)!}}  
(\sqrt{\bar{p}_Z})^{i+j} (\sqrt{\bar{p}_X})^{m'_H + m'_V -i-j}  \ket{i, j, m'_H-i, m'_V-j}_{ZX}. \label{gammaprocess}
\end{split}
\end{equation}
\end{widetext}
From Eq. (\ref{nZL}), a term  in Eq. (\ref{gammaprocess}) is not zero only if $i = m_{H}, j = n_A-m_{H}, m'_H = m_H$ and $ m'_V = n_A - m_H$ hold. 
If $m'_H \neq m_H$ or $m'_V \neq n_A - m_H$, then $\gamma_{\vec{m}'_A, \vec{m}_{Z}} = 0$ and if $m'_H = m_H$ and $m'_V = n_A - m_H$, then 
\begin{equation}
\gamma_{\vec{m}'_A, \vec{m}_{Z}} = (\sqrt{\bar{p}_Z})^{n_A}.
\end{equation}
From Eq. (\ref{Kraus_multi}), we obtain 
\begin{align}
\begin{split}
\hat{M}_{\vec{m}_{Z}}  &=    (\sqrt{\bar{p}_Z})^{n_A} {}_A\bra{m_H, n_A - m_H}. 
\end{split}
\end{align}
Thus, a POVM element on system $A$ corresponding to projection onto $\ket{\vec{m}_{Z}}_{ZX}$ is 
\begin{align}
\begin{split}
& P({}_{E}\bra{\rm vac}\hat{U}^{\dagger} \ket{\vec{m}_{Z}}_{ZX})  
= \hat{M}_{\vec{m}_{Z}} ^{\dagger} \hat{M}_{\vec{m}_{Z}} \\
&=   \bar{p}_Z^{n_A} P(\ket{m_H, n_A - m_H}_A) \label{mouikutsu}. 
\end{split}
\end{align}
By using Eq. (\ref{mouikutsu}), the POVM element in Eq. (\ref{FZnorg}) is described in terms of system $A$ as follows:   
\begin{equation}
\begin{split}
 \hat{F'}_{Z_A}^{(n_A)} & \coloneqq
{}_{E} \bra{\rm vac} \hat{U}^{\dagger} \hat{F}_Z^{(n_A)} \hat{U} \ket{\rm vac}_{E}  \\
&= (1-d)^2\bigg(  \sum\limits_{m_H = 0}^{n_A} \bar{p}_Z^{n_A}  P(\ket{m_H, n_A - m_H}_A)  \bigg)  \\
&= (1-d)^2 \bar{p}_Z ^{n_A}  \hat{\mathbbm{1}}_{A}^{(n_A)}. \label{FZn}
\end{split}
\end{equation}
In similar ways, Eq. (\ref{FXnorg}) is described in terms of system $A$:
\begin{align}
\hat{F'}_{X_A}^{(n_A)} &\coloneqq 
{}_{E} \bra{\rm vac} \hat{U}^{\dagger} \hat{F}_X^{(n_A)}  \hat{U} \ket{\rm vac}_{E} \nonumber \\ 
&= (1-d)^2 \bar{p}_X^{n_A}  \hat{\mathbbm{1}}_{A}^{(n_A)}. \label{FXn}
\end{align}
By using 
Eqs. (\ref{identityreduction}), (\ref{FZn}) and (\ref{FXn}), we describe Eq. (\ref{Fcrossnorg}) in terms of system $A$:
\begin{equation}
\begin{split}
\hat{F'}_{\bot_A}^{(n_A)} &\coloneqq 
{}_{E} \bra{\rm vac} \hat{U}^{\dagger} \hat{F}_{\bot}^{(n_A)} \hat{U} \ket{\rm vac}_{E}  \\
&= \hat{\mathbbm{1}}_{A}^{(n_A)} - \hat{F'}_{Z_A}^{(n_A)} -\hat{F'}_{X_A}^{(n_A)} \\
&=\Big( 1 - (1-d)^2 \Big( \bar{p}_Z^{n_A} + \bar{p}_X^{n_A} \Big) \Big) \hat{\mathbbm{1}}_{A}^{(n_A)}.
\label{Fcrossn}
\end{split}
\end{equation}

\section{Proof of the basis invariance of the PDC source state}
\label{Appe:basisrotation}
We show that the state $\ket{\Phi_n}_{AB}$ retains the same form when expressed in the diagonal polarization basis, as given by Eq.~(\ref{Phiequiv}).
The Fock states can be written in terms of the creation operators as
\be
\ket{n-m,m}_A =
\frac{(a_H^\dagger)^{n-m}(a_V^\dagger)^m}
{\sqrt{(n-m)!m!}}
\ket{{\rm vac}}_A,
\ee
and similarly for Bob. Therefore,
\be
\ket{\Phi_n}_{AB} = 
\frac{1}{\sqrt{n+1}}
\sum_{m=0}^{n}
\frac{(-1)^m
(a_H^\dagger)^{n-m}(a_V^\dagger)^m
(b_H^\dagger)^m(b_V^\dagger)^{n-m}}
{(n-m)!m!}
\ket{{\rm vac}}_{AB},
\ee
where $b_H^\dagger$ and $b_V^\dagger$ are creation operators in Bob's system $B$. 
Introducing the operator
\be
J^\dagger \coloneqq
a_H^\dagger b_V^\dagger -
a_V^\dagger b_H^\dagger,
\ee
the binomial theorem gives
\be
(J^\dagger)^n = 
\sum_{m=0}^{n}
(-1)^m
\binom{n}{m}
(a_H^\dagger b_V^\dagger)^{n-m}
(a_V^\dagger b_H^\dagger)^m
\ket{{\rm vac}}_{AB}.
\ee
Since all creation operators commute with one another,
\be
(J^\dagger)^n \ket{{\rm vac}}_{AB}= 
n!
\sum_{m=0}^{n}
(-1)^m
\ket{n-m,m}_A
\ket{m,n-m}_B,
\ee
which yields
\be
\ket{\Phi_n}_{AB} = 
\frac{1}{n!\sqrt{n+1}}
(J^\dagger)^n
\ket{{\rm vac}}_{AB}.
\ee

Next, we express the creation operators in the diagonal polarization basis,
\be
a_H^\dagger =
\frac{a_D^\dagger+a_{\bar D}^\dagger}{\sqrt2},
\qquad
a_V^\dagger =
\frac{a_D^\dagger-a_{\bar D}^\dagger}{\sqrt2},
\ee
and similarly for Bob. Substituting these relations into $J^\dagger$, we obtain
\be
\begin{aligned}
J^\dagger
=
a_H^\dagger b_V^\dagger -
a_V^\dagger b_H^\dagger  \
=
a_{\bar D}^\dagger b_D^\dagger-
a_D^\dagger b_{\bar D}^\dagger.
\end{aligned}
\ee
Hence,
\be
(J^\dagger)^n =
\sum_{m=0}^{n}
(-1)^m
\binom{n}{m}
(a_{\bar D}^\dagger b_D^\dagger)^{n-m}
(a_D^\dagger b_{\bar D}^\dagger)^m.
\ee
Acting on the vacuum state gives
\be
(J^\dagger)^n
\ket{{\rm vac}}_{AB} = 
n!
\sum_{m=0}^{n}
(-1)^m
\ket{m,n-m}_{A_D}
\ket{n-m,m}_{B_D}.
\ee
Finally, by changing the summation index according to ($m'=n-m$), we find
\be
(J^\dagger)^n
\ket{{\rm vac}}_{AB} =
(-1)^n n!
\sum_{m'=0}^{n}
(-1)^{m'}
\ket{n-m',m'}_{A_D}
\ket{m',n-m'}_{B_D}.
\ee
Therefore, we have 
\be
\ket{\Phi_n}_{AB} =
\frac{(-1)^n}{\sqrt{n+1}}
\sum_{m=0}^{n}
(-1)^m
\ket{n-m,m}_{A_D}
\ket{m,n-m}_{B_D},
\ee
which has the same form as the expression in the $H/V$ basis.

\section{Closed-form expression for $\bar{Q}_Z$, $\bar{Q}_X$, $\bar{Q}_{Z_A, \bot_B}$ and $\bar{Q}_{\bot_A, Z_B}$} \label{appe: QW}
We first derive a closed-form expression for the expected value $\bar{Q}_W$ ($W\in\{Z,X\}$), and then derive one for $\bar{Q}_{Z_A, \bot_B}$. 
For simplicity, we only consider the case $r=1$, i.e., $\bar{p}_Z=p_Z$ and $\bar{p}_X=p_X$. 
For $W \in \{Z, X\}$,
we define the detection-related functions as
\begin{align}
y(n) &\coloneqq 1-(1-\eta)^n(1-d)^4, \\
y_W(n) &\coloneqq (1-d)^2\left((1-\eta(1-p_W))^n-(1-\eta)^n(1-d)^2 \right).
\end{align}
Here, $y(n)$ is a yield of an $n$-photon number pair for one party, i.e., a probability that $W_A$ (or $W_B$) is $Z$, $X$ or $\bot$ under the condition of $n$-photon-pair emittance. The yield $y_W(n)$ is a probability that $W_A$ (or $W_B$) is $W$ under the condition of $n$-photon emittance. 
In the expression of the yield $y_W(n)$, the factor $(1-d)^2$ indicates the probability that the dark count does not occur in the line of the other basis, the factor $(1- \eta (1 - p_W))^n$ is the probability that no photon goes into the line of the other basis, and the factor $(1- \eta )^n (1-d)^2 $ is the probability that all photons are lost and no dark count occurs in the $W$ line. Precisely, $y_W(n)$ is derived as follows. 
For $n\geq 1$, 
\be
\begin{split}
&y_W(n) \\
&= (1-d)^2 \left(  \sum\limits_{l=1}^{n} \binom{n}{l} (\eta p_W)^l (1-\eta)^{n-l} +(1-\eta)^n (1-(1-d)^2) \right) \\
&= (1-d)^2 \left(  \sum\limits_{l=0}^{n} \binom{n}{l} (\eta p_W)^l (1-\eta)^{n-l} -(1-\eta)^n (1-d)^2 \right) \\
& =  (1-d)^2 \left(  (\eta p_W + 1 - \eta)^n  -(1-\eta)^n (1-d)^2 \right) \\
& = (1-d)^2\left((1-\eta(1-p_W))^n-(1-\eta)^n(1-d)^2 \right).
\label{deriveyWn}
\end{split}
\ee
For $n = 0$,
\be
y_W(0) = (1-d)^2 (1-(1-d)^2),
\ee
which is obtained by substituting $n=0$ in the last line in Eq.~(\ref{deriveyWn}). 

Let us define 
\be
a_W \coloneqq 1-\eta (1-p_W), ~~b \coloneqq 1-\eta, ~~c \coloneqq (1-d)^2, 
\ee
for simplicity of notation. 
With the expressions, we rewrite
\begin{align}
y(n) &= 1-c^2 b^n,\\
y_W(n) &= c\big(a_W^n-c b^n\bigr).
\end{align}
The quantity $\bar{Q}_W$ is given by 
\begin{equation}
\begin{split}
\bar{Q}_W
&= \sum_{n=0}^{\infty}
P(n)
y_W(n)^2 \\
&= c^2
\sum_{n=0}^{\infty}
\frac{(n+1)\lambda^n}{(1+\lambda)^{n+2}}
\left(
a_W^{2n}
- 2c\, a_W^n b^n
+ c^2 b^{2n}
\right).
\label{eq:G_expand}
\end{split}
\end{equation}
Each term in Eq.~\eqref{eq:G_expand} is of the form
\[
\sum_{n=0}^{\infty}
\frac{(n+1)\lambda^n}{(1+\lambda)^{n+2}} t^n
=
\frac{1}{(1+\lambda)^2}
\sum_{n=0}^{\infty}
(n+1)
\left(
\frac{\lambda t}{1+\lambda}
\right)^n ,
\]
where $t \in \{a_W^2, a_Wb, b^2\}$.
By differentiating both sides of the relation
\begin{equation}
\sum_{n=0}^{\infty} x^n = \frac{1}{1-x}, \qquad |x|<1, 
\end{equation}
we obtain 
\begin{equation}
\sum_{n=0}^{\infty} (n+1)x^n
= \frac{1}{(1-x)^2},
\qquad |x|<1.
\label{eq:sum_identity}
\end{equation}
Since $|\lambda t/(1+\lambda)| <1$ holds for $t \in \{a_W^2, a_Wb, b^2\}$, 
we have 
\begin{equation}
\sum_{n=0}^{\infty}
\frac{(n+1)\lambda^n}{(1+\lambda)^{n+2}} t^n
=
\frac{1}{\left(1+\lambda-\lambda t\right)^2}.
\label{eq:key_sum}
\end{equation}
Applying Eq.~\eqref{eq:key_sum} to each term in Eq.~\eqref{eq:G_expand}, we find
\begin{align}
\bar{Q}_W
&= c^2
\Bigg[
\frac{1}{\left(1+\lambda-\lambda a_W^2\right)^2}
- \frac{2c}{\left(1+\lambda-\lambda a_W b\right)^2}
+ \frac{c^2}{\left(1+\lambda-\lambda b^2\right)^2}
\Bigg].
\end{align}


Next, the quantity $\bar{Q}_{Z_A, \bot_B}$ (and also $\bar{Q}_{\bot_A, Z_B}$) is given by 
\be
\begin{split}
\bar{Q}_{Z_A, \bot_B}
&= \sum_{n=0}^{\infty}
P(n) y_Z(n)\,\bigl[y(n)-y_Z(n)-y_X(n)\bigr]  \\
&=\sum_{n=0}^{\infty}
\frac{(n+1)\lambda^n}{(1+\lambda)^{n+2}}
\, y_Z(n)\,\bigl[y(n)-y_Z(n)-y_X(n)\bigr].
\end{split}
\ee
\\
\\
Since
\be
\begin{split}
y(n)-y_Z(n)-y_X(n)
&= \bigl(1-c^2 b^n\bigr)
 - c\bigl(a_Z^n-c b^n\bigr)
 - c\bigl(a_X^n-c b^n\bigr)  \\
&= 1 - c(a_Z^n+a_X^n) + c^2 b^n,
\end{split}
\ee
we have 
\begin{align}
&y_Z(n)\bigl[y(n)-y_Z(n)-y_X(n)\bigr]
= c\bigl(a_Z^n-c b^n\bigr)
   \Bigl[1-c(a_Z^n+a_X^n)+c^2 b^n\Bigr] \\
&= c\Bigl[
a_Z^n
- c a_Z^{2n}
- c a_Z^n a_X^n
+ c^2 a_Z^n b^n
- c b^n
+ c^2 b^n a_Z^n
+ c^2 b^n a_X^n
- c^3 b^{2n}
\Bigr].
\end{align}
Since $|\lambda t/(1+\lambda)| <1$ holds for $t \in \{a_Z, a_Z^2, a_Z a_X, a_Zb, b, a_X b, b^2\}$, 
applying Eq.~(\ref{eq:key_sum}) term by term, we obtain the closed-form expression
\begin{widetext}
\begin{align}
\bar{Q}_{Z_A, \bot_B}
&= c\Biggl[
\frac{1}{(1+\lambda-\lambda a_Z)^2}
- c\frac{1}{(1+\lambda-\lambda a_Z^2)^2}
- c\frac{1}{(1+\lambda-\lambda a_Z a_X)^2}
+ 2c^2\frac{1}{(1+\lambda-\lambda a_Z b)^2} \notag\\
&\qquad\qquad
- c\frac{1}{(1+\lambda-\lambda b)^2}
+ c^2\frac{1}{(1+\lambda-\lambda a_X b)^2}
- c^3\frac{1}{(1+\lambda-\lambda b^2)^2}
\Biggr].
\end{align}
\end{widetext}
Note that $Q_{\bot_A, Z_B}$ is equal to $Q_{Z_A, \bot_B}$ under the symmetric assumption of source-in-the-middle scenario.

\section{Closed-form expression for $\bar{e}_Z$ and $\bar{E}_X$} \label{appe:EW}
Here, we derive a closed-form expression for the expected values $\bar{e}_Z$ and $\bar{E}_X$. 
For simplicity, we only consider the case $r=1$, i.e., $\bar{p}_Z=p_Z$ and $\bar{p}_X=p_X$. 
Let $\bar{E}_Z$ denote the expected fraction of $Z$-basis error events among all rounds. 
Since the $Z$-basis error rate $e_Z$ defined in Eq.~(\ref{def:eZ}) is written as 
\be
e_Z=\frac{m_Z^{\rm sam,err}}{m_Z^{\rm sam}}=\frac{m_Z^{\rm sam,err}/m_{\rm rep}}{m_Z^{\rm sam}/m_{\rm rep}},
\ee
its expected value $\bar{e}_Z$ is 
\be
\bar{e}_Z=\frac{\bar{E}_Z(1-\gamma)}{\bar{Q}_Z(1-\gamma)/\gamma} = \frac{\bar{E}_Z\gamma}{\bar{Q}_Z},
\ee
where $\gamma$ is the non-sampling probability. 
We now derive the closed-form expression for $\bar{E}_Z$. 
First, we consider the error rate $E_{Z|nm}$, defined as the probability of a bit-error under the condition that 
\be
\ket{\phi_{nm}}_{AB} \coloneqq \ket{n-m,m}_A \ket{m,n-m}_B, 
\ee
which appears in Eq.~({\ref{Phi_n}), is emitted.  
When Alice receives at least one of the $n-m$ photons with polarization $H$ but none of the $m$ photons with polarization $V$ in the $Z$ line, while Bob receives at least one of the $n-m$ photons with polarization $V$ but none of the $m$ photons with polarization $H$, or with the roles of $H$ and $V$ interchanged for both Alice and Bob, the probability of these two cases is
\be
\begin{split}
p_{\rm same} \coloneqq & \left\{ (1-d)^2 [(1 -\eta(1-p_Z))^{n-m}-(1-\eta)^{n-m}](1-\eta)^m \right\}^2  \\
&+ \left\{ (1-d)^2 [(1 -\eta(1-p_Z))^{m}-(1-\eta)^{m}](1-\eta)^{n-m} \right\}^2 \\
= & C_1 ^2 + C_2 ^2,
\end{split}
\ee
where we defined 
\be
\begin{split}
C_1 &\coloneqq (1-d)^2 [(1 -\eta(1-p_Z))^{n-m}-(1-\eta)^{n-m}](1-\eta)^m \\
& = c (a_Z^{n-m} - b^{n-m})b^m, \\
C_2 &\coloneqq (1-d)^2 [(1 -\eta(1-p_Z))^{m}-(1-\eta)^{m}](1-\eta)^{n-m} \\
& = c (a_Z^m - b^m)b^{n-m}.
\end{split}
\ee
The probability of an error event arising from these two cases is
\be
p_{\rm same} e_d,
\ee
where $e_d$ denotes the error probability due to the detector imperfections and optical misalignment.
On the other hand, when Alice receives at least one of the $n-m$ photons with polarization $H$ but none of the $m$ photons with polarization $V$ in the $Z$ line, while Bob receives at least one of the $m$ photons with polarization $H$ but none of the $n-m$ photons with polarization $V$ in the $Z$ line, or with the roles of $H$ and $V$ interchanged for both Alice and Bob, the probability of these two cases is
\begin{widetext}
\be
\begin{split}
p_{\rm diff} \coloneqq & (1-d)^4 [(1 -\eta(1-p_Z))^{n-m}-(1-\eta)^{n-m}] [(1 -\eta(1-p_Z))^{m}-(1-\eta)^{m}] (1-\eta)^n  \\
&+(1-d)^4 [(1 -\eta(1-p_Z))^{m}-(1-\eta)^{m}] [(1 -\eta(1-p_Z))^{n-m}-(1-\eta)^{n-m}] (1-\eta)^n \\
= & 2 C_1 C_2.
\end{split}
\ee
\end{widetext}
The probability of an error event arising from these two cases is
\be
p_{\rm diff} (1-e_d).
\ee
For all other cases, the error probability is 1/2 due to double-click events or dark counts. Therefore, 
\be 
\begin{split}
E_{Z| nm} &= e_d p_{\rm same} + (1-e_d)  p_{\rm diff} + \frac{1}{2} (y_Z(n)^2 - p_{\rm same} - p_{\rm diff}) \\
&= \frac{1}{2} y_Z(n)^2 - \left(\frac{1}{2} - e_d\right) (p_{\rm same} - p_{\rm diff}) \\
&=  \frac{1}{2} y_Z(n)^2 - \left(\frac{1}{2} - e_d\right) (C_1^2 + C_2^2 - 2 C_1 C_2) \\
&=  \frac{1}{2} y_Z(n)^2 - \left(\frac{1}{2} - e_d\right) (C_1 - C_2)^2 \\
&= \frac{1}{2} y_Z(n)^2 - \left(\frac{1}{2} - e_d\right) c^2(a_Z^{n-m} b^m - a_Z^m b^{n-m})^2
\end{split}
\ee

Here, we define $E_{Z|n}$ as the probability of a bit in the $Z$ basis error under the condition that $n$-photon state Eq.~(\ref{Phi_n}) is emitted. 
This is written by 
\be
E_{Z|n} = \frac{1}{n+1} \sum_{m=0}^{n} E_{Z| nm}. \label{EZnm}
\ee
To derive the closed-form expression of $E_{Z|n}$, we calculate 
\begin{equation}
s(n) \coloneqq \sum_{m=0}^{n}
\left(
a_Z^{\,n-m} b^{\,m}
- a_Z^{\,m} b^{\,n-m}
\right)^2.
\end{equation}
To simplify each term, we rewrite
\begin{equation}
a_Z^{\,n-m} b^{\,m} = a_Z^n \left(\frac{b}{a_Z}\right)^m,
\qquad
a_Z^{\,m} b^{\,n-m} = b^n \left(\frac{a_Z}{b}\right)^m.
\end{equation}
Let
\begin{equation}
t_0 := \frac{b}{a_Z}~ (<1).
\end{equation}
Then we obtain
\begin{equation}
a_Z^{\,n-m} b^{\,m} - a_Z^{\,m} b^{\,n-m}
= a_Z^n t_0^m - b^n t_0^{-m}.
\end{equation}
Squaring this expression yields
\begin{equation}
\left(a_Z^{\,n-m} b^{\,m} - a_Z^{\,m} b^{\,n-m}\right)^2 = a_Z^{2n} t_0^{2m} - 2 a_Z^n b^n + b^{2n} t_0^{-2m}.
\end{equation}
Summing over $m = 0, \dots, n$, we obtain
\begin{equation}
s(n)
= a_Z^{2n} \sum_{m=0}^n t_0^{2m}
+ b^{2n} \sum_{m=0}^n t_0^{-2m}
- 2 a_Z^n b^n (n+1).
\end{equation}
Using the geometric series formulas (for $x \neq \pm 1$),
\begin{equation}
\sum_{m=0}^n x^{2m} = \frac{1 - x^{2(n+1)}}{1 - x^2},
\qquad
\sum_{m=0}^n x^{-2m} = \frac{1 - x^{-2(n+1)}}{1 - x^{-2}},
\end{equation}
we obtain 
\begin{equation}
\begin{split}
&a_Z^{2n} \sum_{m=0}^n t_0^{2m} + b^{2n} \sum_{m=0}^n t_0^{-2m} \\
&= a_Z^{2n}\frac{1-t_0^{2(n+1)}}{1-t_0^2} + b^{2n}\frac{1-t_0^{-2(n+1)}}{1-t_0^{-2}} \\
& = \frac{a_Z^{2n} -b^{2(n+1)}/a_Z^2}{1-b^2/a_Z^2} + \frac{b^{2n} -a_Z^{2(n+1)}/b^2}{1-a_Z^2/b^2}\\
&= \frac{a_Z^{2(n+1)} -b^{2(n+1)}}{a_Z^2 -b^2} + \frac{b^{2(n+1)} -a_Z^{2(n+1)}}{b^2 -a_Z^2} \\
& =  2\,\frac{a_Z^{2(n+1)} - b^{2(n+1)}}{a_Z^2 - b^2}. 
\end{split}
\end{equation}
Then $s(n)$ is described as 
\begin{equation}
s(n)
= 2\,\frac{a_Z^{2(n+1)} - b^{2(n+1)}}{a_Z^2 - b^2}
- 2(n+1)\,a_Z^n b^n.
\end{equation}
With this expression, Eq.~(\ref{EZnm}) leads to 
\be
E_{Z|n} =   \frac{1}{2} y_Z(n)^2 - \frac{1}{n+1} \left(\frac{1}{2} - e_d\right) c^2 s(n)
\ee

Then the overall QBER in the $Z$ basis is 
\be
\begin{split}
\bar{E}_Z &= \sum_{n=0}^{\infty} P(n)E_{Z|n} \\
&=\frac{1}{2} Q_Z - \left(\frac{1}{2} - e_d\right) c^2 \sum_{n=0}^{\infty} s(n) \frac{\lambda ^n}{(1+\lambda)^{n+2}} \\
&=\frac{1}{2} Q_Z - \left(\frac{1}{2} - e_d\right) \frac{c^2}{(1+\lambda)^2} \sum_{n=0}^{\infty} s(n) \kappa^n,
\label{EZformula}
\end{split}
\ee
where 
\be
\kappa := \frac{\lambda}{1+\lambda}. 
\ee
To evaluate the series
\begin{equation}
\sum_{n=0}^{\infty} s(n)\,\kappa^n,
\end{equation}
we decompose $s(n)$ as
\begin{equation}
s(n) = s_1(n) + s_2(n),
\end{equation}
with
\begin{equation}
s_1(n) = 2\,\frac{a_Z^{2(n+1)} - b^{2(n+1)}}{a_Z^2 - b^2},
\qquad
s_2(n) = -2(n+1)\,a_Z^n b^n.
\end{equation}
First term is calculated as 
\begin{align}
\sum_{n=0}^{\infty} s_1(n)\kappa^n
&= \frac{2}{a_Z^2 - b^2}
\sum_{n=0}^{\infty} \big(a_Z^{2(n+1)} - b^{2(n+1)}\big)\kappa^n \\
&= \frac{2}{a_Z^2 - b^2}
\left[
a_Z^2 \sum_{n=0}^{\infty} (\kappa a_Z^2)^n
- b^2 \sum_{n=0}^{\infty} (\kappa b^2)^n
\right].
\end{align}
Using the geometric series
\begin{equation}
\sum_{n=0}^{\infty} x^n = \frac{1}{1-x},~~|x|<1,
\end{equation}
we obtain
\begin{equation}
\sum_{n=0}^{\infty} s_1(n)\kappa^n
=
\frac{2}{a_Z^2 - b^2}
\left(
\frac{a_Z^2}{1-\kappa a_Z^2}
-
\frac{b^2}{1-\kappa b^2}
\right),
\end{equation}
because $|\kappa a_Z^2|<1$ and $|\kappa b^2|<1$ hold. 
Second term is calculated as 
\begin{align}
\sum_{n=0}^{\infty} s_2(n)\kappa^n
&= -2 \sum_{n=0}^{\infty} (n+1)(a_Z b)^n \kappa^n \\
&= -2 \sum_{n=0}^{\infty} (n+1)(\kappa a_Z b)^n.
\end{align}
Using
\begin{equation}
\sum_{n=0}^{\infty} (n+1) x^n = \frac{1}{(1-x)^2},~~|x|<1, 
\end{equation}
which is derived in Eq.~(\ref{eq:sum_identity}), we obtain
\begin{equation}
\sum_{n=0}^{\infty} s_2(n)\kappa^n
= -\frac{2}{(1-\kappa a_Z b)^2}
\end{equation}
because $|\kappa a_Z b|<1$ holds. 
Combining the two contributions, we obtain
\begin{equation}
\sum_{n=0}^{\infty} s(n)\kappa^n
=
\frac{2}{a_Z^2 - b^2}
\left(
\frac{a_Z^2}{1-\kappa a_Z^2}
-
\frac{b^2}{1-\kappa b^2}
\right)
-
\frac{2}{(1-\kappa a_Z b)^2}.
\end{equation}

Therefore, Eq.~(\ref{EZformula}) leads to 
\begin{widetext}
\begin{equation}
\bar{E}_Z
=
\frac{1}{2} Q_Z - \left(\frac{1}{2} - e_d\right) \frac{c^2}{(1+\lambda)^2}
\left[
\frac{2}{a_Z^2 - b^2}
\left(
\frac{a_Z^2}{1-\kappa a_Z^2}
-
\frac{b^2}{1-\kappa b^2}
\right)
-
\frac{2}{(1-\kappa a_Z b)^2}
\right].
\label{EZforumlafinal}
\end{equation}
\end{widetext}

From Eqs.~(\ref{Phi_n}) and (\ref{Phiequiv}), the state $\ket{\Phi_n}_{AB}$ has the same form in the $X$ basis as in the $Z$ basis. 
Accordingly, for $X$-basis, error fraction $E_X$ is obtained by taking $Q_Z \to Q_X$ and $a_Z \to a_X$ in Eq~(\ref{EZforumlafinal}).

\bibliography{kawakamibibD}
\bibliographystyle{apsrev4-1}

\end{document}